\setlist{nosep}
\newcommand{\calC}{\mathcal{C}}
\newcommand{\OO}{\mathcal{O}}
\newcommand{\Oh}{\mathcal{O}}
\newcommand{\calR}{\mathcal{R}}
\newcommand{\ETH}{\textsf{ETH}}
\newcommand{\yes}{\textsc{Yes}}
\newtheorem{theorem}{Theorem}[section]
\newtheorem{lemma}[theorem]{Lemma}
\newtheorem{claim}[theorem]{Claim}
	\crefname{claim}{claim}{Claims}
\newtheorem{proposition}[theorem]{Proposition}
\newtheorem{reduction rule}{Reduction Rule}[section]
\newtheorem{marking-scheme}{Marking Scheme}[section]
\newtheorem{definition}[theorem]{Definition}
\newcommand{\threedmc}{\textsc{$3$-DMC}}
\newcommand{\dmcthree}{\textsc{$3$-DMC}}
\newcommand{\term}{\texttt{T}}
\newcommand{\rr}{\texttt{r}}
\newcommand{\ff}{\texttt{f}}
\newcommand{\rrr}{\textsf{r}}
\newcommand{\fff}{\textsf{f}}
\newcommand{\N}{\mathds{N}}
\newcommand{\permcsp}{\textsc{Permutation CSP}}
\newcommand\downclosed{downwards-closed\xspace}
\newcommand\card[1]{|#1|}
\newcommand\calD{\mathcal{D}}
\newcommand\calE{\mathcal{E}}
\newcommand\gridrank{\mathsf{gr}}
\newcommand\tww{\mathsf{tww}}
\newcommand{\bN}{\N}
\newcommand\Adj{\mathsf{Adj}}
\newcommand\twwCSP[1]{\textsc{Twin-width-#1 Permutation CSP}}
\newcommand\CSPinst{\mathcal{I}}
\newcommand\myparagraph[1]{\paragraph*{#1}}
\newcommand\col{\mathsf{col}}
\newcommand\inc{\mathsf{inc}}
\newcommand{\corecut}[1]{\mathrm{core}(#1)}
\newcommand{\corecutG}[2]{\mathrm{core}_{#2}(#1)}
\newcommand{\flow}{\mathcal{P}}
\newcommand{\witnessflow}{\hat{\mathcal{P}}}
\definecolor{meikeColour}{rgb}{0.41, 0.16, 0.38}
\definecolor{larspaloma}{HTML}{2E86C1}
\definecolor{TM}{HTML}{FFA500}
\definecolor{MS}{HTML}{228b22}
\definecolor{MP}{HTML}{00C000}
\newcommand\bfA{\mathbf{A}}
\tikzset{
  circ/.style = {circle,draw,fill,inner sep=1.3pt},
  mcirc/.style = {circle,draw,fill,inner sep=1pt},
  circR/.style = {circle,draw=red,fill=red,text=red,inner sep=1.3pt},
  circG/.style = {circle,draw=green,fill=green,text=green,inner sep=1.3pt},
  circB/.style = {circle,draw=blue,fill=blue,text=blue,inner sep=1.3pt},
  circb/.style = {circle,draw=blue,fill=blue,text=blue,inner sep=1.1pt},
  circr/.style = {circle,draw=red,fill=red,inner sep=1pt},
  scirc/.style = {circle,draw,fill,inner sep=.8pt},
  invisible/.style = {draw=none,inner sep=0pt,font=\tiny},
  nonedge/.style={decorate,decoration={snake,amplitude=.3mm,segment length=1mm},draw}
}
\newif\ifcomment
\newcommand{\frontpageformat}{arxiv}
\begin{document}
\ifthenelse{\equal{\frontpageformat}{submission}}{%
  \author{anonymous}
  \title{Fixed-parameter tractability of \textsc{Directed Multicut} with three terminal pairs parameterized by the size of the cutset:\\twin-width meets flow-augmentation}
\date{}

\begin{titlepage}
\def\thepage{}
\thispagestyle{empty}
\maketitle
}{%
\author{Meike Hatzel\thanks{National Institute of Informatics, Tokyo, Japan. \texttt{meikehatzel@nii.ac.jp}} 
  \and 
    Lars Jaffke\thanks{University of Bergen, Norway.
    \texttt{lars.jaffke@uib.no}} 
  \and 
    Paloma T. Lima\thanks{IT University of Copenhagen, Denmark.
    \texttt{palt@itu.dk}}
  \and 
    Tom\'a\v{s} Masa\v{r}\'ik\thanks{University of Warsaw, Poland. \texttt{masarik@mimuw.edu.pl}}
  \and 
    Marcin Pilipczuk\thanks{University of Warsaw, Poland. \texttt{m.pilipczuk@mimuw.edu.pl}}
  \and
    Roohani Sharma\thanks{Max Planck Institute for Informatics, Saarland Informatics Campus, Saarbr\"{u}cken, Germany. \texttt{rsharma@mpi-inf.mpg.de}}
  \and
    Manuel Sorge\thanks{TU Wien, Austria. \texttt{manuel.sorge@ac.tuwien.ac.at}}
}
\title{Fixed-parameter tractability of \textsc{Directed Multicut} with three terminal pairs parameterized by the size of the cutset:\\twin-width meets flow-augmentation\thanks{The research leading to the results presented in this paper was
		partially carried out during the Parameterized Algorithms Retreat of the University of Warsaw, PARUW 2022,
		held in B\k{e}dlewo in April 2022.
This research is a part of projects that have received funding from the European Research Council (ERC)
under the European Union's Horizon 2020 research and innovation programme
Grant Agreement 714704 (TM, MP) and 648527 (MH), from the Alexander von Humboldt Foundation (MS), from the Research Council of Norway (LJ), and by a fellowship within the IFI programme of the German Academic Exchange Service (DAAD) (MH).
}}

\date{}

\begin{titlepage}
\def\thepage{}
\thispagestyle{empty}
\maketitle
\begin{textblock}{20}(2, 13.9)
\includegraphics[width=40px]{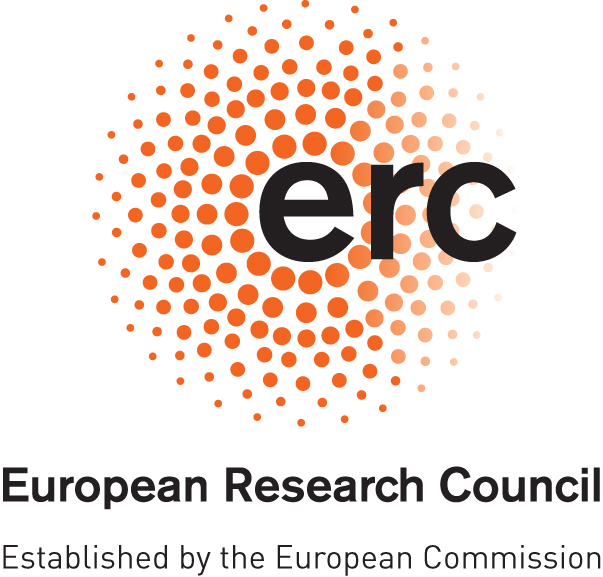}%
\end{textblock}
\begin{textblock}{20}(2, 14.7)
\includegraphics[width=40px]{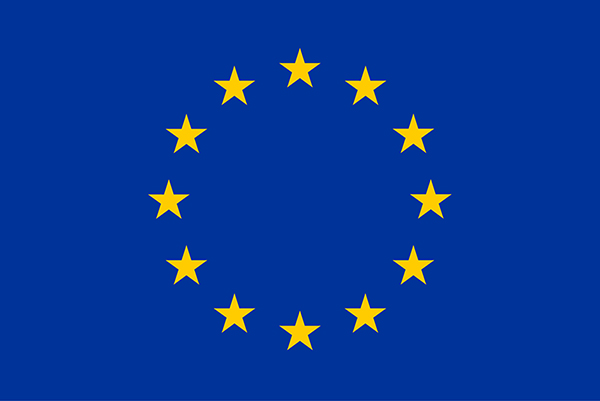}%
\end{textblock}
}

\begin{abstract}

We show fixed-parameter tractability of the \textsc{Directed Multicut} problem with three terminal pairs (with a randomized algorithm).
This problem, given a directed graph $G$, pairs of vertices (called \emph{terminals})
$(s_1,t_1)$, $(s_2,t_2)$, and $(s_3,t_3)$, and an integer $k$, asks to find a set of 
at most $k$ non-terminal vertices in $G$ that intersect all $s_1t_1$-paths, all
$s_2t_2$-paths, and all $s_3t_3$-paths. 
  The parameterized complexity of this case has been open since 
Chitnis, Cygan, Hajiaghayi, and Marx proved fixed-parameter tractability 
of the 2-terminal-pairs case at SODA 2012, and
Pilipczuk and Wahlstr\"{o}m proved the W[1]-hardness of the 4-terminal-pairs case
at SODA 2016. 

On the technical side, we use two recent developments in parameterized algorithms.
Using the technique of \emph{directed flow-augmentation} [Kim, Kratsch, Pilipczuk, Wahlstr\"{o}m, STOC 2022] we cast the problem as a CSP problem with few variables and constraints over a large ordered domain.
We observe that this problem can be in turn encoded as an FO model-checking task over a structure
consisting of a few 0-1 matrices.
We look at this problem through the lenses of twin-width, a recently introduced structural
parameter [Bonnet, Kim, Thomass\'{e}, Watrigant, FOCS 2020]:
By a recent characterization [Bonnet, Giocanti, Ossona de Mendes, Simon, Thomass\'{e}, Toru\'{n}czyk, STOC 2022] the said FO model-checking task can be done in FPT time
if the said matrices have bounded grid rank.
To complete the proof, we show an irrelevant vertex rule: If any of the matrices in
the said encoding has a large grid minor, 
 a vertex corresponding to the ``middle'' box in 
the grid minor can be proclaimed irrelevant --- not contained in the sought solution --- and
thus reduced.
\end{abstract}

\newpage
\tableofcontents
\end{titlepage}

\section{Introduction}\label{sec:introduction}

Parameterized complexity studies the existence of \emph{fixed-parameter algorithms}: 
algorithms with running time bound by $f(k) \cdot n^c$, where $n$ is the size of the input, $c$ is an arbitrary constant, 
$f$ is an arbitrary computable function, and $k$ is the parameter, which is a selected secondary measure of the input that is intended to reflect the hardness of the instance. 

Graph separation problems yield a class of combinatorial problems where the goal is to find a small vertex or edge set in the given graph that satisfies some separation requirements.
For example, \textsc{Multiway Cut} equips the input graph $G$ with a set $T \subseteq V(G)$ of terminals and asks to cut all paths between any two distinct
terminals, whereas \textsc{Subset Feedback Vertex Set} equips the input graph $G$ with a set $R \subseteq V(G)$ of \emph{red vertices} and asks to cut all cycles that contain at least one red vertex. 
The study of graph separation problems,
with the cardinality of the sought cut as the natural choice of the parameter, has been one of the more vivid areas of parameterized complexity in the recent 15 years. 
A number of interesting algorithmic techniques emerged:
important separators~\cite{marx:1,dfvs}, applications of matroid techniques~\cite{ms1,ms2},
shadow removal~\cite{dir-mwc,marx:multicut}, randomized contractions~\cite{rand-contr,lean-decomp}, 
LP-guided branching~\cite{mwc-a-lp,sylvain,IwataWY16}, and treewidth reduction~\cite{tw-red}, among others.

The progress somewhat stalled around 5 years ago in the following state: we understood 
the complexity of the main bulk of graph separation problems in undirected graphs, 
mostly thanks to the wide variety of algorithmic techniques therein.
However, in directed graphs, a number of questions remained widely open. 

The \textsc{Multicut} problem is, given a graph $G$ and a family $\mathcal{T}$ of \emph{pairs}
of vertices (called \emph{terminals}), to delete a minimum number of non-terminal vertices so that for every terminal pair $(s,t) \in \mathcal{T}$, there is no path from $s$ to $t$ in the remaining graph. 
The parameterized complexity of this problem in undirected graphs, after being a long standing open problem
for a while, has been resolved around 2010 independently by two groups of researchers~\cite{marx:multicut,thomasse:multicut}. 
In directed graphs, the problem in full generality was quickly observed to be W[1]-hard~\cite{marx:multicut}. 
However, some restrictions turned out to be tractable: the case of directed acyclic graphs~\cite{multicut-in-DAGs}, 
\textsc{Directed Multiway Cut}, where we are given just a set of terminals and we ask to cut all paths between every pair of distinct terminals, or \textsc{Directed Multicut} with two terminal pairs~\cite{dir-mwc}. Observe that the one-terminal-pair case is just the classic \textsc{Minimum Cut} problem.
In 2015, Pilipczuk and Wahlstr\"{o}m~\cite{PilipczukW18a}
provided a hardness reduction for the four-terminal-pairs
case, leaving the three-terminal-pairs case open until now.

\textsc{Directed Multicut} with three terminal pairs was by far not the only open problem
left in the parameterized complexity of directed graph separation problems. 
Other open problems included the notoriously difficult \textsc{Chain SAT} problem~\cite{ChitnisEM17} and most of the problems in the \emph{weighted setting}.
Here, the deletable objects (edges or vertices) have integer weights, and the question is to find a solution of cardinality at most $k$ and minimum total weight (where $k$ is the given parameter).
One of the reasons for such a state of affairs was a lack of algorithmic techniques in directed
graphs: Among all the aforementioned tools in undirected graphs, only important separators
and shadow removal generalize to directed graphs~\cite{dir-mwc,dsfvs}. 

Very recently, at STOC 2022, a new algorithmic technique for cut problems
in directed graphs has been presented by 
Kim, Kratsch, Pilipczuk, and Wahlstr\"{o}m~\cite{dfl-arxiv}, namely \emph{flow-augmentation}.
This new technique led to fixed-parameter algorithms for \textsc{Chain SAT} and numerous weighted
versions of graph separation problems. 
In this work, we use it to answer the question of the parameterized
complexity of \textsc{Directed Multicut} with three terminal pairs positively.

\begin{restatable}{theorem}{DMCthreeFPTTheorem}\label{thm:dmcthree-fpt}
  \textsc{Directed Multicut} with three terminal pairs is fixed-parameter tractable
  when parameterized by the size of the cutset (with a randomized algorithm).  
\end{restatable}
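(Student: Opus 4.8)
The plan is to follow the route sketched in the abstract. After the standard reduction splitting every non-terminal vertex into an in-copy and an out-copy, so that vertex cuts become arc cuts over the set of ``split arcs'', I would first guess the combinatorial type of a hypothetical minimum solution $Z$: for each $i\in\{1,2,3\}$ the value $\lambda_i=|Z_i|\le k$, where $Z_i\subseteq Z$ is an inclusion-minimal $s_it_i$-cut contained in $Z$, together with the pattern recording which split arcs are shared among $Z_1,Z_2,Z_3$. There are only $f(k)$ such types, so I branch over all of them; for a fixed type any witness will have size exactly the size of the guessed pattern, which is at most $k$. For each $i$ I then call the directed flow-augmentation algorithm of Kim, Kratsch, Pilipczuk and Wahlstr\"om on $(G,s_i,t_i)$ with the split arcs as the only deletable ones: with probability $2^{-\OO(k\log k)}$ it returns an arc set $A_i$ such that in $G+A_i$ the set $Z_i$ is a \emph{minimum} $s_it_i$-cut, certified by a maximum $s_it_i$-flow $\flow_i$ of value $\lambda_i$; running the whole construction $2^{\OO(k\log k)}$ times amplifies the success probability to a constant. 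On a successful run, every minimum $s_it_i$-cut of $G+A_i$ meets each of the $\lambda_i$ paths of $\flow_i$ in exactly one arc, and the admissible tuples of such choices are precisely those respecting reachability in the residual graph of $\flow_i$.

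\textbf{The CSP and its matrix encoding.} I would encode this structure as a constraint satisfaction instance $\CSPinst$ with at most $3k$ variables, one per flow-path across the three pairs, the domain of a variable being the linearly ordered sequence of split arcs along its path. All constraints are binary: \emph{intra-pair} constraints express residual reachability inside a single $\flow_i$, while \emph{inter-pair} constraints express, for each pair of flow-paths the guessed pattern declares merged, that the two chosen arcs coincide (and, for pairs declared non-merged, that they differ). By construction, satisfying assignments of $\CSPinst$ correspond exactly to solutions of the original instance of the guessed size $\le k$, so \dmc{} with three terminal pairs reduces to deciding satisfiability of $\CSPinst$. Now $\CSPinst$ is faithfully represented by a relational structure over one common linearly ordered universe equipped with a bounded-in-$k$ number of binary relations; viewed with respect to the order, each binary relation is a $0$-$1$ matrix, and ``$\CSPinst$ is satisfiable'' is a single first-order sentence of length bounded in terms of $k$. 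By the characterization of Bonnet, Giocanti, Ossona de Mendez, Simon, Thomass\'e and Toru\'nczyk, first-order model checking on such a structure is fixed-parameter tractable provided every one of these matrices has grid rank bounded by a function of $k$.

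\textbf{The irrelevant-vertex rule.} It remains to secure the bound on grid rank, which is where the main effort goes. Suppose some constraint matrix $M$ has a grid minor of order $t$ with $t$ exceeding a suitable $h(k)$: this exhibits $t$ consecutive intervals of arcs along one flow-path and $t$ along another so that each of the $t^2$ boxes is mixed. I would show that the arc (equivalently, the vertex of $G$) sitting in the centre of the middle box is irrelevant --- contained in no minimum solution --- since the density of the grid minor yields, by a Menger-type rerouting argument inside $G+A_1+A_2+A_3$, enough slack to slide the cut on this flow-path past that arc without increasing the total size and without disturbing the cuts of the other two pairs. Hence the arc can be removed from the domains of all (at most three) variables on whose flow-paths it occurs, deleting the corresponding rows and columns of the matrices; this neither requires a fresh flow-augmentation call nor breaks the equivalence with the original instance. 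Iterating strictly shrinks the instance, so after at most $n$ rounds, each taking FPT-polynomial time, every matrix has grid rank at most $h(k)$; the model-checking algorithm then decides $\CSPinst$, hence the original instance for the current guess, in FPT time. Ranging over all $f(k)$ guesses and the $2^{\OO(k\log k)}$ repetitions yields the claimed randomized fixed-parameter algorithm.

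\textbf{Where the difficulty lies.} The crux is the last step: translating an abstract large grid minor back into a concrete flow-and-connectivity configuration in $G+A_1+A_2+A_3$, and then proving that its central arc can be dodged \emph{simultaneously} for all three terminal pairs without inflating the cut. One must also be careful that the domain deletions stay inside the already-fixed successful run of flow-augmentation --- so that the $2^{-\OO(k\log k)}$ success probability is not paid again per round --- and that they do not silently destroy the correspondence between satisfying assignments of $\CSPinst$ and solutions of the input instance.
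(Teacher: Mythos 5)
Your overall architecture (flow-augmentation per terminal pair, guessing which flow-path variables coincide, a CSP with $\Oh(k)$ variables over large ordered domains, bounding the grid rank of the permutation constraints by an irrelevant-vertex rule, then FO model checking via twin-width) matches the paper, but the one step you explicitly defer --- the irrelevant-vertex rule --- is the heart of the proof, and the argument you sketch for it would not work. You claim that a large grid minor in a coincidence matrix gives ``slack'' to slide the cut past the central arc by a Menger-type rerouting inside $G+A_1+A_2+A_3$, so that this arc lies in no minimum solution. There is no reason for such an exchange to exist: a large grid minor only says that the two flow paths interleave their shared deletable vertices in a crossing pattern, which is exactly the structure exploited in the W[1]-hardness construction for four terminal pairs and in the edge-choice gadget for permutation CSPs, so density alone does not make the middle vertex avoidable. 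Moreover, rerouting performed in the augmented graphs does not transfer back to $G$: the arcs $A_i$ are only guaranteed compatible with one fixed witness cut, so connectivity in $G+A_i$ says nothing about alternative solutions of the original instance.

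What the paper actually proves is a different and more delicate statement. After a shadow-removal step (\cref{thm:shadow-removal}), which your proposal omits entirely, one may restrict attention to \emph{shadowless} solutions, and the claim (\cref{lem:tw-red}) is only that the middle representative cannot be chosen by a shadowless solution that splits the two flow paths in the prescribed way. The contradiction crucially uses the \emph{third} terminal pair: representatives in one quadrant of the grid are unreachable from $s_1$ and $s_2$ in $G-S$, hence by shadowlessness reachable from $s_3$; representatives in the opposite quadrant reach neither $t_1$ nor $t_2$, hence reach $t_3$; and to chain these reachabilities along the two flow paths \emph{inside the original graph} $G$ (avoiding augmentation arcs and the at most $k$ solution vertices) one needs the strengthened flow-augmentation of \cref{thm:dfl-upgrade-vertices} --- the ``soybean'' connectivity within signature classes $\mathcal{B}$, obtained by reopening the flow-augmentation proof, together with a Ramsey step to make the grid representatives monochromatic. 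Your proposal contains neither the shadow-removal step, nor the extra flow-augmentation property, nor any use of the third pair, so the central step is a genuine gap rather than a verification detail. (Minor points: the success probability per augmentation is $2^{-\Oh(k^4\log k)}$, not $2^{-\Oh(k\log k)}$; and the disequality constraints you add for non-merged variable pairs are both unnecessary --- the size bound is already enforced by the guessed partition having at most $k$ parts --- and outside the two constraint types the twin-width CSP framework admits.)
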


Flow-augmentation alone is by far not enough to show \cref{thm:dmcthree-fpt}. 
In the basic usage, the main tool of directed flow-augmentation~\cite{dfl-arxiv} can be stated as follows:
\begin{theorem}\label{thm:dfl-basic}
There exists a polynomial-time randomized algorithm that, given a directed graph~$G$,
two distinguished vertices $s,t \in V(G)$, and an integer $k$, outputs
a set $A \subseteq V(G) \times V(G)$ (called \emph{augmentation edges}) such that
for every minimal edge $st$-cut $Z$ of cardinality at most $k$, with probability
$2^{-\Oh(k^4 \log k)}$ the cut $Z$ becomes a minimum edge $st$-cut in $G+A$. 
\end{theorem}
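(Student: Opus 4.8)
We only outline the argument; the full proof is the main technical content of~\cite{dfl-arxiv}. The plan is to build a randomized procedure that, after committing to a bounded sequence of random guesses, deterministically outputs a candidate augmentation set $A$; we will argue that whenever the guesses are consistent with the (fixed, unknown) target cut $Z$, the resulting $A$ makes $Z$ a minimum $st$-cut, and that this occurs with probability $2^{-\Oh(k^4\log k)}$. Fix such a $Z$, a minimal edge $st$-cut with $|Z|\le k$, and let $R$ be the set of vertices reachable from $s$ in $G-Z$; minimality forces $Z$ to be exactly the set of edges leaving $R$, and $t\notin R$. Two observations drive everything. \emph{(Safety)} If every augmentation edge has its head in $R$, then from $s$ one can never leave $R$ in $(G-Z)+A$, so $Z$ is still an $st$-cut in $G+A$. \emph{(Progress)} Once the maximum $st$-flow value in $G+A$ reaches $|Z|$, no $st$-cut can be smaller than $Z$, so $Z$ is minimum. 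We start by computing a maximum $st$-flow $\flow$ in $G$, of value $\lambda$; if the guessed value of $|Z|$ equals $\lambda$ we output $A=\emptyset$, so assume $\lambda<|Z|$.

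We run the procedure in $\Oh(k)$ iterations, each raising the current maximum flow value by at least one unit via a batch of safe augmentation edges, which is enough since the target value $|Z|$ is at most $k$. In an iteration we fix a maximum flow $\flow$ of the current graph and examine its residual and ``frontier'' minimum cuts. Since $\flow$ consists of at most $k$ edge-disjoint paths, each of which crosses $Z$, and the at most $k$ edges of $Z$ are split among these paths, the combinatorial pattern of how $Z$ interleaves with $\flow$ --- for each flow path, the order of its segments inside $R$, of its segments inside $V\setminus R$, and of the edges of $Z$ (and of $E(V\setminus R,R)$) that it uses --- has size $\Oh(k)$, and can be guessed among $k^{\Oh(k)}$ possibilities. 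The delicate information is the $\Oh(k)$ \emph{turn-around} vertices, i.e.\ the places where a flow path, or a path certifying that some edge of $Z$ is necessary, crosses from $V\setminus R$ back into $R$; these cannot be read off from $G$ alone, so we pin them down by color-coding, randomly coloring $V(G)$ with $\Oh(k)$ colors and guessing the color of each relevant vertex. Given correct guesses we add only edges whose head lies in $R$ --- edges $(s,a)$ for the tails $a\in R$ of suitable edges of $Z$, together with ``short-circuit'' edges leading out of the turn-around vertices back into $R$ --- so safety is maintained; and one checks that these edges let one more unit of $s$--$t$ flow be routed, using that for every $(a,b)\in Z$ the head $b$ reaches $t$ inside $G[V\setminus R]$ (a witnessing path cannot leave $V\setminus R$, as that would require a second edge of $Z$). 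Multiplying the per-iteration guessing costs over the $\Oh(k)$ iterations gives the claimed success probability $2^{-\Oh(k^4\log k)}$, and when every guess is correct $A$ consists of safe edges only while $G+A$ admits $|Z|$ edge-disjoint $st$-paths, so $Z$ is a minimum edge $st$-cut in $G+A$.

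The main obstacle is that $R$ --- hence the very notions of a \emph{safe} edge and of a \emph{turn-around} vertex --- is defined through the unknown cut $Z$: the technical heart of the proof is to show that a $\mathrm{poly}(k)$-size, color-codable sketch of the interaction between $Z$ and a fixed maximum flow really does determine a correct augmentation set $A$, and to keep this sketch from blowing up over the $\Oh(k)$ iterations so that the total number of guesses stays $2^{\Oh(k^4\log k)}$.
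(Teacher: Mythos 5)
There is a genuine gap, and it starts with a mismatch of expectations: the paper does not prove this statement at all. \cref{thm:dfl-basic} is quoted as the basic form of directed flow-augmentation from Kim, Kratsch, Pilipczuk and Wahlstr\"{o}m~\cite{dfl-arxiv}; the stronger star-cut/witnessing-flow form appears later as \cref{thm:dir-flow-augmentation}, again only cited, and \cref{sec:soybean} merely re-walks that external proof to extract the soybean property, explicitly assuming familiarity with it. So the only ways to ``prove'' the statement here are to cite it or to genuinely reproduce the argument of~\cite{dfl-arxiv}. Your writeup does neither: you defer the crucial step yourself (``the technical heart of the proof is to show that a $\mathrm{poly}(k)$-size, color-codable sketch \ldots really does determine a correct augmentation set $A$''), and that deferred step is exactly where all of the difficulty of flow-augmentation lives. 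The observations you do prove --- safety of edges with heads in $R$, the progress criterion, and the fact that heads of cut edges reach $t$ inside $V\setminus R$ by minimality --- are correct but are the easy part; the claim that correctly guessed, color-coded ``turn-around'' data yields safe edges that strictly increase the flow value is asserted, not argued, and it is not obvious (indeed, making such a statement true is the main technical contribution of~\cite{dfl-arxiv}).

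Beyond the missing core, the plan you sketch does not match the actual argument and its accounting does not cohere. The real proof is not a loop of $\Oh(k)$ rounds each raising the maxflow by one unit; it is a recursion of depth $\Oh(k^3)$ over a chain of minimum cuts, with a bottleneck-edge base case and separate ``small $\ell$'' and ``large $\ell$'' cases --- a structure the present paper itself exposes in \cref{sec:soybean} because it must track it to harvest soybeans. Your probability bookkeeping is reverse-engineered rather than derived: guessing a $k^{\Oh(k)}$-size interleaving pattern plus color-coding $\Oh(k)$ vertices per round, over $\Oh(k)$ rounds, would give $2^{-\Oh(k^2\log k)}$, not $2^{-\Oh(k^4\log k)}$; the $k^4$ in the exponent reflects the $\Oh(k^3)$ recursion depth, for which your scheme has no counterpart. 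Finally, an iterative scheme must also guarantee that after each batch of augmentation edges the fixed cut $Z$ is still a minimal (or at least star) $st$-cut of the augmented graph with the same $s$-reachable set --- this is the compatibility notion the paper needs and states in \cref{ss:dfl} --- and your sketch never addresses how this invariant survives between rounds. As written, the proposal is a plausible-sounding outline whose decisive steps are either missing or inconsistent with the known proof it ultimately leans on.
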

Here, $G+A$ is the graph obtained from $G$ by adding the arcs in $A$.
That is, with good probability, the added arcs $A$ not only do not break the edge cut $Z$ (that
is, connect the $s$-side of the cut to the $t$-side), but also increase the connectivity
of the graph so that $Z$ becomes a minimum-cardinality cut.
We remark that all randomization in \cref{thm:dmcthree-fpt} comes from
\cref{thm:dfl-basic}. That is, if \cref{thm:dfl-basic} were deterministic,
so would be the algorithm of \cref{thm:dmcthree-fpt}.

\looseness=-1
The following point of view on \cref{thm:dfl-basic} turns out to be particularly useful.
Let $G$ be a directed graph, $s,t \in V(G)$, and let $k$ be the size of a minimum edge
$st$-cut.
How does the space of all minimum edge $st$-cuts look like?
Let $\mathcal{P}$ be any maximum $st$-flow, seen as a collection of $k$ edge-disjoint paths
from $s$ to $t$.
Any minimum edge $st$-cut contains exactly one edge from each path in $P$.
Furthermore, for every $P_1,P_2 \in \mathcal{P}$ and every $u_1 \in V(P_1)$, $u_2 \in V(P_2)$,
if $G$ contains a path $Q$ from $u_1$ to $u_2$ that does not contain any edge of $\mathcal{P}$,
then any minimum edge $st$-cut cannot contain an edge of $P_1$ \emph{after $u_1$}
and an edge of $P_2$ \emph{before $u_2$} at the same time.
This motivates the following CSP formulation.
Every path $P \in \mathcal{P}$ becomes 
a variable $x(P)$ with domain $E(P)$, ordered naturally along $P$.
For every tuple $(P_1,P_2,u_1,u_2)$ as above (i.e., $P_1,P_2 \in \mathcal{P}$, $u_1 \in V(P_1)$, $u_2 \in V(P_2)$, $G$ contains a path from $u_1$ to $u_2$ that does not contain any edge of $\mathcal{P}$), we introduce a constraint $(x(P_1) \leq u_1) \vee (x(P_2) \geq u_2)$, where the inequalities have the natural meaning of being before/after the corresponding vertex along the corresponding path.
It is relatively easy to see that the space of all feasible solutions to such a CSP instance
is \emph{exactly} the space of all minimum edge $st$-cuts in $G$.
In this light, \cref{thm:dfl-basic} can be understood as follows: we can subsample
the space of all minimal edge $st$-cuts in $G$ of cardinality at most $k$, so that
every cut is sampled with good probability (i.e., $2^{-\Oh(k^4 \log k)}$) 
and the subsampled set can be described by the aforementioned CSP instance. 

A meticulous reader may observe that flow-augmentation speaks about edge cuts
while \textsc{Directed Multicut} asks for a vertex cut. 
However, in directed graphs there are standard reductions between these two variants of the problem.
Thus, the above framework of a CSP formulation can be easily adapted to minimal vertex $st$-cuts
of cardinality at most $k$ (where $s$ and $t$ are undeletable).

In the context of \textsc{Directed Multicut} with three terminal pairs, we can use 
flow-augmen\-ta\-tion as follows. Let $S$ be an inclusion-wise minimal solution
to an input instance $(G,k,(s_i,t_i)_{i=1,2,3})$. 
Clearly, $S = S_1 \cup S_2 \cup S_3$ where $S_i$ is a minimal vertex $s_it_i$-cut. 
Hence, we can apply flow-augmentation separately to $(G,k,(s_i,t_i))$ for $i=1,2,3$,
obtaining a set of augmentation edges~$A_i$. With good probability, for every $i=1,2,3$
the set $S_i$ becomes a minimum vertex $st$-cut in $G+A_i$. 
Let $\mathcal{P}_i$ be a maximum (vertex-capacitated) $s_it_i$-flow in $G+A_i$
and consider the aforementioned CSP formulation with variables $\{x_i(P)~|~P \in \mathcal{P}_i\}$
and sets of constraints $\mathcal{C}_i$. 

\looseness=-1
The crux of the difficulty of \textsc{Directed Multicut} lies in the fact that
the sets $S_i$ may not be pairwise disjoint; in a sense, we can save on reusing some vertices
in $S$ to separate multiple terminal pairs. 
In the CSP regime, it means that for some $i,j \in \{1,2,3\}$ and $P \in \mathcal{P}_i$
and $Q \in \mathcal{P}_j$, the variables $x_i(P)$ and $x_j(Q)$ describe \emph{the same vertex}
of $S$. 
Note that there is only $2^{\Oh(k \log k)}$ options of which pairs of variables describe the same
vertex; we can exhaustively guess the set $K$ of all pairs $(i,j,P,Q)$ as above. 
Every $(i,j,P,Q) \in K$ induces a constraint that $x_i(P)$ and $x_j(Q)$ is the same vertex;
in the CSP language, this is a permutation constraint, denoted henceforth $C(i,j,P,Q)$, between a subset of the domain of $x_i(P)$ and a subset of the domain of~$x_j(Q)$.

\looseness=-1
It is important to observe that --- assuming the flow-augmentation steps were successful
and in the branching step we made the correct choice of which variables describe the same vertex
--- the final CSP instance is an \emph{equivalent reformulation} of the original 
\textsc{Directed Multicut} instance. That is, every solution to the obtained CSP instance
gives a set of non-terminal vertices that cuts all paths from $s_i$ to $t_i$ for $i=1,2,3$. 
The coincidences guessed in the branching step 
determine the cardinality of the obtained cut, and we can terminate all branches
that lead to cuts larger than $k$. 
Furthermore, every inclusion-wise minimal solution to the original \textsc{Directed Multicut}
instance that is compliant with the flow-augmentation and branching steps yields
a feasible solution to the obtained CSP instance.
Thus, it ``only'' remains to solve the obtained CSP instance.

To this end, we need to understand how complex the permutation constraints $C(i,j,P,Q)$ can be.
Note that the number of variables is small --- bounded by $3k$ --- but the domains can be as large as $|V(G)|$.
A reader experienced in W[1]-hardness reductions may notice at this point that the complexity
of the permutation constraints is crucial:
If one allows arbitrary permutation constraints
(and constraints of the form $(x \leq a) \vee (y \geq b)$ for constants $a,b$ and variables $x,y$, as in the encoding of the space of all minimum cuts), one can easily provide a W[1]-hardness reduction
for the parameterization by the number of variables via the edge-choice gadgets (cf.~\cite{the-book} and \cref{app:edge-choice}). 

Here a recent major milestone in parameterized complexity comes into play: twin-width.
Introduced by Bonnet, Kim, Thomass\'{e}, and Watrigant in 2020~\cite{BonnetKTW22},
this structural complexity measure of graphs and,
more generally, binary structures has turned out to explain and provide a number of fixed-parameter tractability results.
Most importantly, a recent work from STOC 2022~\cite{BonnetGMSTT22} provides a fixed-parameter
algorithm for FO model checking on ordered structures of bounded twin-width and provides
a neat characterization under which conditions a 0-1 matrix gives bounded twin-width in the encoding.

In our case, the crucial notion is the one of a \emph{grid minor} of a matrix.\footnote{Note that this is an entirely different concept to a grid minor of a graph.}
Let $A$ be a $0$-$1$ matrix of dimension $n \times m$.
An \emph{$\ell \times \ell$ grid minor} consists of two 
sequences of indices $0 = i_0 < i_1 < i_2 < \ldots < i_\ell = n$ and 
$0 = j_0 < j_1 < j_2 < \ldots j_\ell = m$ such that for every $1 \leq \alpha,\beta \leq \ell$,
there is at least one cell $A[i,j]$ with value $1$ for some indices $i, j$ with $i_{\alpha-1} < i \leq i_\alpha$
  and $j_{\beta-1} < j \leq j_\beta$. 
  The value $\ell$ is often referred to as the \emph{size} of the grid minor.
One can deduce from the results of Bonnet~et~al.~\cite{BonnetGMSTT22} the following statement:
Our CSP formulation is FPT when parameterized by the number of variables, the number
of permutation constraints, and the maximum size of a grid minor of the matrices of the permutation constraints.
Hence, our task boils down to providing a bound on the size of a grid minor in our permutation constraints $C(i,j,P,Q)$.\footnote{The reader experienced in W[1]-hardness reductions may recall at this point the reduction showing W[1]-hardness parameterized by the number of variables for CSPs with permutation constraints:
  The permutations used therein for the edge-choice gadgets are of the form $\pi : \{0,1,\ldots,n^2-1\} \to \{0,1,\ldots,n^2-1\}$, with $\pi(xn+y) = yn+x$ for $0 \leq x,y < n$.
  These are exactly the permutations with largest possible grid minors of their associated permutation matrices.}

\looseness=-1
To this end and to complete our proof, we prove the following irrelevant-vertex rule:
There exists an integer $\ell$ depending only on $k$ such that if some constraint $C(i,j,P,Q)$ has grid minor of size at least $\ell$, then a vertex corresponding to a $1$ in the ``middle cell'' of the grid minor
(i.e., $A[i,j]$ for $i_{\lceil \ell/2 \rceil-1} < i \leq i_{\lceil \ell/2 \rceil}$,
 $j_{\lceil \ell/2 \rceil-1} < j \leq j_{\lceil \ell/2 \rceil}$)
is \emph{irrelevant}, that is, any so-called \emph{shadowless} solution to the input \textsc{Directed Multicut} instance does not contain the said vertex.
Hence, such a vertex can be reduced in a standard manner (and the analysis of the CSP formulation restarted).

The notion of a \emph{shadowless} solution comes from the technique of \emph{shadow removal},
pivotal for the fixed-parameter algorithms for \textsc{Directed Multiway Cut}~\cite{dir-mwc}
and \textsc{Directed Subset Feedback Vertex Set}~\cite{dsfvs}. 
In the context of \textsc{Directed Multicut}, a solution $Z$ is \emph{shadowless}
if for every non-terminal vertex $v \notin Z$, the graph $G-Z$ features
a path from $v$ to one of the terminals $t_i$ and a path from one of the terminals $s_j$
to $v$ (note that necessarily the $t$-terminal and the $s$-terminal have distinct indices).
In short, the shadow-removal technique~\cite{dir-mwc,dsfvs} allows us 
to focus on the following task:
Given a \textsc{Directed Multicut} instance
$(G,k,(s_i,t_i)_{i=1,2,3})$, find any solution if there exists a shadowless solution
(i.e., the algorithm is allowed to fail if there is a solution, but not a shadowless one).

This last part of the proof --- the irrelevant-vertex rule --- is the only part of the
proof that crucially relies on the fact that we are dealing with only three terminal pairs.
In fact, it is inspired by the reduction for four terminal pairs~\cite{PilipczukW18a}
and our study why this reduction fails for three terminal pairs.

The irrelevant-vertex rule also requires us to look into the details of flow-augmentation
(\cref{thm:dfl-basic}) and extract some extra properties of this tool. 
In short, we need to capture the following intuition: 
A sequence of deletable edges along the same flow path $P \in \mathcal{P}$ in the maximum
flow $\mathcal{P}$ in $G+A$ is in some sense sequentially positioned in the graph,
so one can usually reach any later edge from an earlier one. 
This is not strictly true as stated above, but we prove a variant of this statement
in \cref{sec:soybean}. 

One can ask if the proof of the irrelevant-vertex rule 
crucially needs the assumption of the solution being shadowless. 
In particular, the usage of shadow removal makes our algorithm inherently unweighted
(the shadow-removal step involves a greedy argument that completely breaks down
 in the presence of weights). 
We complement our main result by proving (\cref{sec:hard})
  that \textsc{Weighted Directed Multicut}
is W[1]-hard even with two terminal pairs, so the shadow-removal step seems
necessary.
\begin{theorem}\label{thm:hard}
\textsc{Weighted Directed Multicut}, parameterized by the cardinality of the cutset,
  is W[1]-hard even with two terminal pairs.
\end{theorem}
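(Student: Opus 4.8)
The plan is to reduce from \textsc{Multicolored Clique}: given a graph $H$ with vertex set partitioned into colour classes $V_1,\dots,V_k$, decide whether $H$ has a clique containing exactly one vertex of each $V_i$. This problem is W[1]-hard parameterized by $k$, and the standard reduction witnessing it goes through the ``edge-choice'' gadget (cf.\ \cite{the-book} and \cref{app:edge-choice}), which already shows that the CSP with few variables, large ordered domains, constraints of the form $(x\le a)\vee(y\ge b)$, and \emph{arbitrary} permutation constraints is W[1]-hard parameterized by the number of variables plus constraints. In view of the flow-CSP correspondence recalled in the introduction, \cref{thm:hard} amounts to showing that \textsc{Weighted Directed Multicut} with two terminal pairs can realise such a CSP; the only delicate point is realising the permutation constraints, since in the \emph{unweighted} flow-CSP the permutation constraints arising from vertices shared between two flows automatically have bounded grid rank --- which is precisely what makes unweighted \textsc{Directed Multicut} with two terminal pairs fixed-parameter tractable.

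Concretely, I would build from an instance of \textsc{Multicolored Clique} a directed graph $G$ with two terminal pairs $(s_1,t_1),(s_2,t_2)$, a vertex weight function, a cardinality bound $k'$ that is a function of $k$ alone (say $O(k^2)$), and a weight bound $W$. All but a small family of \emph{light} (weight-$0$) vertices receive weight $W+1$, hence are effectively undeletable and serve only to route $s_it_i$-paths; the light vertices form a skeleton whose admissible separators will correspond to multicoloured cliques. The $s_1t_1$-flow carries ``vertex-choice'' variables (one per colour class) and the $s_2t_2$-flow carries ``edge-choice'' variables (one per pair of classes). The inequality-type constraints of the target CSP come for free, exactly as in the flow-CSP correspondence, by inserting routing paths between appropriate positions along two flow paths; these add no weight. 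The consistency between an edge choice $z_{ij}$ and the two vertex choices $y_i,y_j$ --- where a permutation constraint of large grid rank is needed --- is realised by a dedicated \emph{consistency gadget}: a directed sub-structure in which a light vertex encodes a (class $i$, class $j$, edge of $H$ between them) triple, and in which every potential shortcut (a would-be $s_it_i$-path that would let a solution avoid the intended light vertex) is blocked by weight-$(W+1)$ vertices. It is exactly the ability to install such walls that is unavailable in the unweighted regime.

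For correctness, the forward direction is routine: a multicoloured clique $\{v_1,\dots,v_k\}$ with $v_i\in V_i$ selects the $\binom k2$ light edge-vertices $e_{v_iv_j}$; one checks that deleting them disconnects both terminal pairs, has cardinality $\binom k2\le k'$ and weight $0\le W$. For the backward direction, let $Z$ be any multicut with $|Z|\le k'$ and weight at most $W$. The weight bound forces $Z$ to consist of light vertices only; a Menger-type argument on the light skeleton --- the relevant $s_2t_2$-paths form enough ``parallel'' bundles that separating all of them needs at least one deleted vertex per bundle, and these can be taken distinct --- forces $|Z|$ to meet the cardinality bound with equality and to distribute as one responsible vertex per bundle; the consistency gadgets then force each such vertex to be an edge-vertex $e_{uv}$, and force the family $\{e_{uv}\}$ to project to a single well-defined vertex $v_i\in V_i$ per class with $v_iv_j\in E(H)$ for all $i<j$. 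Reading off $v_1,\dots,v_k$ gives a multicoloured clique.

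The main obstacle is the design and analysis of the consistency gadget. A single deleted vertex lies on at most one path of each of the two flows, so it can ``participate'' in at most two of the CSP constraints at once; hence the clique constraints cannot be enforced by naive shared vertices, and the gadget must route the $s_it_i$-paths so that the intended light vertex is the \emph{unique} cheap way to satisfy the relevant constraints. Verifying that no other small-and-cheap separator exists, and that the forward direction is not spoiled by an unintended $s_it_i$-path, is where the weights --- and the bulk of the work --- go. The remaining points are standard bookkeeping: keeping the terminals undeletable, passing between vertex-cuts and edge-cuts via the usual splitting reduction, and confirming that the flows used in the Menger argument are maximum. Since the whole construction relies on weights to wall off shortcuts, it is consistent with, and in a sense explains, why the shadow-removal step used for \cref{thm:dmcthree-fpt} --- which breaks down for weighted instances --- cannot be circumvented here.
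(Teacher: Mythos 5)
There is a genuine gap, and it lies exactly where you deferred the work. Your weight scheme uses only two weight classes: light vertices of weight $0$ and ``walls'' of weight $W+1$. But a weight-$(W+1)$ vertex is nothing more than an undeletable vertex, and undeletability is already available in the \emph{unweighted} regime: the problem formulation in this paper includes a set $V^{\infty}$ of undeletable vertices, and even without that, an undeletable vertex can be simulated in the unweighted vertex-deletion setting by $k+1$ twin copies with identical in- and out-neighbourhoods. Consequently, any instance produced by your construction is equivalent to an unweighted two-terminal-pair \textsc{Directed Multicut} instance, which is fixed-parameter tractable by Chitnis, Cygan, Hajiaghayi, and Marx. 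So no construction of the shape you describe can be a correct W[1]-hardness reduction (otherwise \textsc{Multicolored Clique} would be FPT); your claim that ``the ability to install such walls is unavailable in the unweighted regime'' is the false step. The weights must carry quantitative information beyond deletable/undeletable, and this is precisely how the paper's proof in \cref{sec:hard} works: in the reduction from \textsc{Partitioned Subgraph Isomorphism}, the deletable vertices $\hat{x}^{i,j}_a$ and $\hat{y}^{i,j}_a$ on the $X$- and $Y$-paths receive weights $Ma$ and $M(n+1-a)$ respectively, so that any feasible choice costs at least $M(n+1)$ per path pair with equality only when the two chosen indices coincide; together with the global budget $\ww = M(2k(n+1)+h)+k$ this forces exactly one deleted vertex per path and synchronizes the indices, replacing the two extra terminal pairs used in the Pilipczuk--Wahlstr\"{o}m four-pair reduction. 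The $n\times n$ grids $P^{i,j}$ of weight-$1$/weight-$\ww$ vertices then verify that the synchronized choices form edges.

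Separately, even setting the weight issue aside, the proposal does not actually contain a proof: the ``consistency gadget'' that is supposed to couple the edge-choice variables on the $s_2t_2$-flow with the vertex-choice variables on the $s_1t_1$-flow is exactly the crux (a deleted vertex meets at most one path of each flow, as you note), and you leave its design and the backward-direction analysis as an acknowledged obstacle. In the paper this coupling is achieved by the weight arithmetic just described plus the grid gadgets, and the backward direction is a budget-counting argument, not a Menger-type packing argument on a light skeleton. Reducing from \textsc{Multicolored Clique} rather than \textsc{PSI} would be fine in principle for W[1]-hardness (the paper uses \textsc{PSI} to additionally get the $f(k)\cdot n^{o(k/\log k)}$ ETH lower bound), so the choice of source problem is not the issue; the missing ingredient is a mechanism, necessarily using more than two weight values, that enforces the synchronization.
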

We remark that the one-terminal-pair case
of \textsc{Weighted Directed Multicut}, or \textsc{Bi-objective $st$-cut},
is proved to be FPT in~\cite{dfl-arxiv} as one of the basic exemplary usages
of flow-augmentation.

Since we rely on the whole meta-algorithmic toolbox of twin-width of~\cite{BonnetKTW22}, 
we cannot state an explicit dependency on the parameter in the running time bound of our algorithm.
Relatedly, we would like to remark that the concept of encoding an instance at hand into a CSP instance with a number of variables that is bounded by a function of the parameter, but over large ordered domains,
appeared also recently in the FPT algorithm for \textsc{Optimal Discretization}~\cite{KratschMMPS21}. 
The encoding there also uses an unbounded number of constraints of the form $(x \leq a) \vee (y \geq b)$ and a bounded-in-parameter number of permutation constraints.
However, the main effort in the proof in~\cite{KratschMMPS21} lies in showing that the used permutation constraints have very simple structure (they are called in~\cite{KratschMMPS21} \emph{segment reversions});
in particular, one can observe that their permutation matrices do not contain a grid minor of size $3$. 
As a result, in~\cite{KratschMMPS21} the authors are able to design an explicit FPT algorithm for the obtained CSP instance (with an explicit single-exponential running time bound).
Although using twin-width meta-algorithms prevents us from stating an explicit running time bound, it allows to claim fixed-parameter tractability of a much wider range of CSP instances:
Permutations of bounded grid minor in their permutation matrices is a much wider class than the aforementioned segment reversions. 
We believe the presented framework of casting a problem into a small number of variables with unbounded ordered domains, bound by permutation constraints, and using twin-width toolbox to solve it,
has a wider future potential in parameterized complexity.

\paragraph{Organization.}
After brief preliminaries, where in particular we state the shadow-removal tool and the extended version of flow-augmentation, we proceed to the main proof (of \cref{thm:dmcthree-fpt}).
\Cref{sec:low-tw-csp-algo} introduces the twin-width toolbox and shows fixed-parameter tractability of CSP instances with permutations only containing bounded grid minors.
\Cref{sec:csp-formulation} contains the main proof, deferring the proof
of the irrelevant-vertex rule to~\cref{sec:tw-red}. 
\Cref{sec:shadow-removal} contains a (standard) proof of the used shadow-removal
statement, while \cref{sec:soybean} contains a proof of the used extension
of flow-augmentation. 
Finally, the proof of \cref{thm:hard} can be found in \cref{sec:hard}.

\ifthenelse{\equal{\frontpageformat}{submission}}{}{%
  \paragraph{Acknowledgements.}
  The research leading to the results presented in this paper was partially carried out during the Parameterized Algorithms Retreat of the University of Warsaw, PARUW 2022, held in B\k{e}dlewo in April 2022.
  We acknowledge insightful discussions with the twin-width experts at B\k{e}dlewo: \'{E}douard Bonnet, Jakub Gajarsk\'{y}, and Micha\l{} Pilipczuk, as well as later discussions with Szymon Toru\'{n}czyk.}

\section{Preliminaries}\label{sec:prelims}

\looseness=-1
Let $G$ be a directed graph.
We use edge and arc interchangably for the directed edges of $G$.
For two vertices $u,v \in V(G)$, we say that \emph{$u$ reaches $v$} or \emph{$v$ is reachable from $u$} if there exists a directed path from $u$ to $v$ in $G$.
A path starting in a vertex $u$ and ending in a vertex $v$ is also called a $uv$-path.
Let $e$ be a (directed) edge.
Then $s(e)$ ($t(e)$) is start (target) of~$e$.
Analogously, we use $s(P)$ (resp.\ $t(P)$) for first (resp.\ last) vertex of a path~$P$.
For $s,t \in V(G)$, a set $S \subseteq E(G)$ is an \emph{$st$-cut} if $t$ is not reachable from $s$ in $G-S$ and, similarly,
    a set $S \subseteq V(G) \setminus \{s,t\}$ is an \emph{$st$-separator} if $t$ is not reachable from $s$ in $G-S$.
In the latter, by $G-S$ we denote the subgraph of $G$ induced by $V(G) \setminus S$. 
Throughout the paper we use $[k]$ as shorthand for the integer set $\{1, 2, \dots,k\}$.

\looseness=-1
In this work, an instance of \textsc{Directed Multicut} (\dmcthree{} for short) is a tuple $(G,k,\allowbreak (s_i,t_i)_{i \in [3]}, \allowbreak V^{\infty})$ consisting of a directed graph $G$, six distinguished vertices $s_1,s_2,s_3,t_1,t_2,t_3 \in V(G)$, called \emph{terminals}, an integer $k$, and a vertex subset $V^{\infty} \subseteq V(G)$, called \emph{undeletable vertices}.
We sometimes denote the set of terminals as $\term = \{s_1,s_2,s_3,t_1,t_2,t_3\}$. We require $\term \subseteq V^{\infty}$. 
A \emph{solution} is a set $S$ of non-terminal vertices of $G$ such that $S \cap V^{\infty} = \emptyset$ and for every $i=1,2,3$ the vertex $t_i$ is not reachable from the vertex $s_i$ in $G-S$. 
\textsc{Directed Multicut} asks for a solution of cardinality at most $k$. 

\looseness=-1
For an instance $(G,k,(s_i,t_i)_{i=1,2,3}, V^{\infty})$ and a non-terminal vertex $v$, by \emph{bypassing $v$} we mean the following operation.
First, for every edge $(u,v) \in E(G)$ and every edge $(v,w) \in E(G)$, we add an edge $(u,w)$ if it is not already present in the graph.
Finally, we delete the vertex $v$.
For a set~$X$ of nonterminal vertices, by bypassing $X$ we mean bypassing vertices of $X$ in arbitrary order; note that the result does not depend on the order.
The following lemma is immediate and shows that bypassing is a good way to reduce vertices that are provably not in the sought solution.

\begin{lemma}[Chitnis et al.~\cite{dsfvs}, Lemma~$3.11$]\label{lem:torso}
Let $G$ be a directed graph and let $C \subseteq V(G)$. Let $G'$ be obtained from $G$ after bypassing $C$ and let $S \subseteq V(G) \setminus C$. For any $a,b \in V(G) \setminus (C \cup S)$, $G - S$ has an $ab$-path if and only if $G' -S$ has an $ab$-path.
\end{lemma}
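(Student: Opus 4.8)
The plan is to reduce the statement to a single combinatorial fact about which edges the bypassing operation creates, and then obtain both implications by routine path surgery. Concretely, I would fix an arbitrary order $c_1,\dots,c_r$ of the vertices of $C$ and let $G = G_0, G_1, \dots, G_r = G'$ be the sequence of graphs where $G_j$ arises from $G_{j-1}$ by bypassing $c_j$; note that $V(G_j) = V(G)\setminus\{c_1,\dots,c_j\}$ and that each bypass step only ever adds edges and deletes the bypassed vertex. Calling a walk in $G$ a \emph{$C$-walk} if all of its internal vertices lie in $C$, I would first prove, by induction on $j$, that for all $x,y \in V(G_j)$ one has $(x,y) \in E(G_j)$ if and only if $G$ contains a walk from $x$ to $y$ whose internal vertices all lie in $\{c_1,\dots,c_j\}$. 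The base case $j = 0$ is trivial (such a walk with empty interior is just a single edge). For the inductive step: in the ``only if'' direction an edge of $G_j$ is either inherited from $G_{j-1}$ (apply the hypothesis directly) or was created when bypassing $c_j$ out of a pair $(x,c_j),(c_j,y) \in E(G_{j-1})$, and then the two $\{c_1,\dots,c_{j-1}\}$-internal walks given by the hypothesis splice together through $c_j$; in the ``if'' direction one may first shorten the given walk to a path in $G$, which meets $c_j$ at most once --- if it avoids $c_j$ the hypothesis already places the edge in $G_{j-1}$, hence in $G_j$, and otherwise the two halves are $\{c_1,\dots,c_{j-1}\}$-internal walks between vertices of $V(G_{j-1})$, so the hypothesis yields $(x,c_j),(c_j,y) \in E(G_{j-1})$ and bypassing $c_j$ inserts $(x,y)$. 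Taking $j = r$ gives: for $x,y \in V(G)\setminus C$, $(x,y) \in E(G')$ if and only if $G$ has a $C$-walk from $x$ to $y$.

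With this fact in hand, the forward implication goes as follows. Given an $ab$-path $P$ in $G-S$, I would list the vertices of $P$ lying outside $C$ in their order along $P$ as $a = v_0, v_1, \dots, v_m = b$ (this includes $a$ and $b$, since neither lies in $C$). Each stretch of $P$ between consecutive $v_i$ and $v_{i+1}$ is a $C$-walk from $v_i$ to $v_{i+1}$ in $G$, so by the fact above $(v_i,v_{i+1}) \in E(G')$; and since the $v_i$ are distinct, avoid $C$ (hence belong to $V(G')$), and avoid $S$ (being vertices of $P$), the sequence $v_0,\dots,v_m$ is an $ab$-path in $G'-S$. It is convenient here that $S \cap C = \emptyset$ (as $S \subseteq V(G)\setminus C$), so that deleting $S$ never interferes with the vertices of $C$ that get contracted away.

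For the backward implication, given an $ab$-path $a = w_0, \dots, w_p = b$ in $G'-S$, each $(w_i,w_{i+1})$ is an edge of $G'$, so by the fact above there is a $C$-walk $Q_i$ from $w_i$ to $w_{i+1}$ in $G$; its internal vertices lie in $C$ and hence avoid $S$, and its endpoints $w_i$ avoid $S$ as vertices of a path in $G'-S$. Concatenating $Q_0, \dots, Q_{p-1}$ produces an $ab$-walk in $G$ all of whose vertices avoid $S$, i.e.\ an $ab$-walk in $G-S$, and any such walk contains an $ab$-path.

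I do not expect a genuine obstacle here --- the statement is essentially bookkeeping, which is why it can be cited as immediate. The only point that needs care is that bypassing is defined one vertex at a time, so the edge set of $G'$ has to be accessed through the inductive characterisation of the first step rather than reasoned about directly; once that characterisation is established, and once one observes that the disjointness of $S$ and $C$ makes all the $S$-avoidance checks automatic, both directions reduce to mechanical contraction and expansion of walks.
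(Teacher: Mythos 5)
Your proof is correct: the inductive characterisation of the edges of the bypassed graph as endpoints of $C$-walks in $G$, followed by the routine contraction/expansion of walks (using $S \cap C = \emptyset$), establishes exactly the stated equivalence, and the paper itself offers no proof to compare against --- it cites the lemma from Chitnis et al.\ and treats it as immediate, and your argument is precisely the standard one behind that citation. The only (inconsequential) imprecision is that your edge characterisation is stated for all $x,y$ including $x=y$, where the ``shorten the walk to a path'' step would not apply; since both directions of the lemma only ever invoke it for distinct endpoints, this does not affect correctness.
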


\paragraph{Shadows and shadow removal.}
Given an instance $(G,k,(s_i,t_i)_{i \in [3]}, V^{\infty})$ of \dmcthree\ and a set $X \subseteq V(G) \setminus V^{\infty}$,
a vertex $v \in V(G) \setminus (X \cup V^{\infty})$ is in the \emph{forward shadow} of $X$
 if $v$ is reachable from neither $s_1$, $s_2$, nor $s_3$ in $G-X$. 
Symmetrically, $v$ is in the \emph{reverse shadow} of $X$ if
neither $t_1$, $t_2$, nor $t_3$ is reachable from $v$ in $G-X$.
The set of vertices in the forward shadow of $X$ in~$G$, is denoted by $\ff_G(X)$ and the set of vertices in the reverse shadow of $X$ in $G$, is denoted by $\rr_G(X)$.
The vertex $v$ is in the \emph{shadow} of $X$ if it is in the forward shadow or the reverse shadow of $X$. 
A set $X$ is \emph{shadowless} if no vertex is in its shadow, that is, if $\rr_G(X) \cup \ff_G(X) =\emptyset$.

The following statement encapsulates the shadow removal technique in the context of \textsc{Directed Multicut}.
Though it follows directly from~\cite{dir-mwc,dsfvs}, we provide a formal proof in \cref{sec:shadow-removal} for the sake of completeness.

\begin{restatable}{theorem}{ShadowRemovalTheorem}
	\label{thm:shadow-removal}
	Given an instance
	$(G,k,(s_i,t_i)_{i \in [3]}, V^{\infty})$ of \dmcthree,
	there is an algorithm that runs in time
	$2^{\Oh(k^2 \log k)} n^{\Oh(1)}$, and outputs a family $\mathcal{Z}$ of subsets of $V(G) \setminus V^{\infty}$ such that $|\mathcal{Z}| =2^{\Oh(k^2 \log k)} \log^2 n$ and,
	if the input instance is a \yes-instance, then there exists $Z \in \mathcal{Z}$
	such that $(G',k,(s_i,t_i)_{i\in [3]}, V^{\infty} \setminus Z)$ is a \yes-instance that admits a shadowless solution of cardinality at most $k$, where $G'$ is the result of bypassing $Z$ in $G$.
\end{restatable}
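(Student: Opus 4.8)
The plan is to derive \cref{thm:shadow-removal} from the abstract shadow-removal machinery already established for \textsc{Directed Multiway Cut} and \textsc{Directed Subset Feedback Vertex Set} in~\cite{dir-mwc,dsfvs}. Recall that the heart of that machinery is a covering statement: there is a randomized (or, after the standard derandomization via splitters, deterministic) algorithm that, given $G$, a set $\calT$ of ``interesting'' vertices, and the budget $k$, produces in time $2^{\Oh(k^2\log k)} n^{\Oh(1)}$ a family $\mathcal{Z}$ of $2^{\Oh(k^2\log k)}\log^2 n$ subsets of $V(G)$ of size at most $k$ with the property that for every vertex set $X$ with $|X|\le k$ there is a member $Z\in\mathcal{Z}$ that is ``good'' for $X$ in the sense that $Z$ is disjoint from $X$ but covers the entire shadow of $X$ (forward and reverse) with respect to the relevant reachability requirements. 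So the first step is to instantiate that black box with the right notion of shadow --- here, a vertex $v\notin X$ is in the shadow if it is unreachable from all of $s_1,s_2,s_3$ in $G-X$, or reaches none of $t_1,t_2,t_3$ in $G-X$ --- which is exactly the multiway-cut-style shadow with the source side collapsed to $\{s_1,s_2,s_3\}$ and the sink side to $\{t_1,t_2,t_3\}$. This fits the framework of~\cite{dir-mwc,dsfvs} essentially verbatim.

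Next I would argue that a ``good'' set $Z$ for a minimal solution $S$ does the job after bypassing. Fix a \yes-instance and let $S$ be an inclusion-minimal solution with $|S|\le k$; let $Z\in\mathcal{Z}$ be good for $S$, i.e.\ $Z\cap S=\emptyset$ and $Z\supseteq \rr_G(S)\cup\ff_G(S)$. Form $G'$ by bypassing $Z$ in $G$ and consider the instance $(G',k,(s_i,t_i)_{i\in[3]},V^\infty\setminus Z)$ --- note that we may assume $Z\cap V^\infty=\emptyset$ since the covering algorithm can be run with the undeletable vertices excluded from consideration, or alternatively we restrict attention to $Z\setminus V^\infty$ and observe goodness is preserved. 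Because bypassing preserves all reachability relations among vertices outside $Z\cup S$ by \cref{lem:torso}, the set $S$ is still a solution in $G'$: for each $i$, any $s_it_i$-path in $G'-S$ would lift to an $s_it_i$-walk in $G-S$, contradiction. So the bypassed instance is a \yes-instance. It remains to show it has a \emph{shadowless} solution of size at most $k$, and the natural candidate is again $S$ itself (it avoids the now-deletable former-shadow vertices $Z$, and it avoids $V^\infty\setminus Z$). The key claim is: in $G'-S$, no vertex is in the shadow of $S$. Here is where the choice of $Z$ pays off: the only vertices that were in the shadow of $S$ in $G$ lie inside $Z$, hence have been bypassed and no longer exist in $G'$; and bypassing a shadow vertex cannot create a \emph{new} shadow vertex, because for a vertex $v\notin Z\cup S$ the existence of a path from some $s_j$ to $v$, and from $v$ to some $t_i$, in $G-S$ is preserved into $G'-S$ by \cref{lem:torso}. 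One has to double-check the edge case that bypassing could shortcut through a shadow vertex in a way that changes reachability for a non-shadow vertex --- but \cref{lem:torso} is exactly designed to rule this out, since it is an ``if and only if'' on $ab$-paths for all $a,b$ outside the bypassed set and $S$.

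Putting it together: the family $\mathcal{Z}$ with the stated size and running-time bounds is produced by the covering algorithm of~\cite{dir-mwc,dsfvs}; the witness $Z$ for a minimal solution $S$ yields, after bypassing, a \yes-instance for which $S$ is a shadowless solution; hence \cref{thm:shadow-removal} holds. The running time $2^{\Oh(k^2\log k)}n^{\Oh(1)}$ and family size $2^{\Oh(k^2\log k)}\log^2 n$ are inherited directly from the covering lemma (the $\log^2 n$ coming from the derandomization of the two-sided random-sampling step via an $(n,\Oh(k))$-splitter family or the analogous construction).

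The main obstacle, I expect, is bookkeeping rather than a genuine mathematical difficulty: one must be careful about (i) the interplay between the bypassing operation and the set $V^\infty$ of undeletable vertices --- the framework in~\cite{dir-mwc,dsfvs} is typically stated without undeletable vertices, so one has to check that restricting the covering set $Z$ to $V(G)\setminus V^\infty$ still covers the shadow of any solution (it does, because a solution is by definition disjoint from $V^\infty$, so none of its shadow vertices relevant to the argument need lie in $V^\infty$ --- though shadow vertices in $V^\infty$ simply stay, which is harmless as they are not in $S$ and cannot become part of the shadow of $S$ in $G'$ either, wait: they could remain in the shadow, so one should instead add to $V^\infty$ a guarantee, or observe that the definition of shadowless solution in this paper only constrains \emph{non-terminal} vertices $v\notin Z$ and we should in fact demand $Z$ covers the shadow including its intersection with $V^\infty$; this is achievable since the covering lemma places no constraint forbidding $Z$ from containing would-be-undeletable vertices, and $Z$ is then removed from $V^\infty$), and (ii) making sure the definition of ``shadow'' used by the black box (single source, single sink, or a designated terminal set) is massaged into the three-pairs form with the collapsed source set $\{s_1,s_2,s_3\}$ and sink set $\{t_1,t_2,t_3\}$, which is why the algorithm must also place these six terminals appropriately and why $Z$ is taken disjoint from $\term\subseteq V^\infty$. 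None of these points is deep, but each must be verified, which is precisely why the paper defers the ``formal proof'' of this otherwise-folklore statement to \cref{sec:shadow-removal}.
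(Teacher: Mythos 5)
There is a genuine gap, and it sits exactly at the point you wave through as a black box. The covering tool that \cite{dir-mwc,dsfvs} actually provide (restated in the paper as \cref{prop:sampling-shadow}) does \emph{not} say that for every set $X$ with $|X|\le k$ some $Z\in\mathcal{Z}$ avoids $X$ and contains the whole shadow of $X$. Its guarantee is conditional: $W$ must be \emph{thin}, and it only covers those vertices $Y$ for which an \emph{important} $(v,T)$-separator is contained in $W$. For an arbitrary inclusion-minimal solution $S$ and an arbitrary shadow vertex $v$ of $S$, the minimal $(v,\{t_1,t_2,t_3\})$-separator inside $S$ need not be important, so the sampling argument gives no coverage guarantee for that $v$ at all; the unconditional statement you invoke is strictly stronger than what the framework supplies. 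This is precisely why the paper does not apply the proposition to an arbitrary minimal solution: it introduces shadow-maximal and \emph{best} shadow-maximal solutions and proves an exchange lemma (\cref{lem:shadow-max}, together with \cref{lem:thin}) showing that for a best shadow-maximal solution $S^*$ every reverse-shadow vertex does admit an important separator inside $S^*$. That exchange argument — replacing a non-important separator $S_v\subseteq S$ by an important one and showing the resulting set is still a solution with strictly larger reverse shadow — is the mathematical core of the section, and your proposal omits it entirely. Consequently, your later claim that ``the witness $Z$ for a minimal solution $S$'' exists is unsupported; the shadowless solution one ultimately exhibits is the specially chosen $S^*$, not a generic minimal solution.

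A second, smaller omission: the forward and reverse shadows are not covered by a single invocation. The paper runs \cref{prop:sampling-shadow} once for the reverse shadow, and then, \emph{for each} set $Z^{\rrr}$ of the first family, again on the reversed graph with $Z^{\rrr}$ added to the undeletable vertices, taking unions $Z^{\rrr}\cup Z^{\fff}$ (this composition is where the $\log^2 n$ comes from). The second application needs its own verification that $S^*$ still satisfies the important-separator hypothesis in the reversed instance (\cref{claim:forward}), and that argument again leans on $S^*$ being best shadow-maximal and on $\ff_{\vec{G}}(S^*)=\rr_G(S^*)$ having been made undeletable. Your final step — bypassing $Z$, using \cref{lem:torso} to see that $S$ remains a solution and becomes shadowless — does match the paper and is fine, but the proof cannot be completed without the shadow-maximality machinery you skipped.
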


\subsection{Twin-width}

Next, we state the definition of twin-width as introduced by Bonnet~et~al.~\cite{BonnetKTW22}.
They make use of the concept of a \emph{trigraph} $G$, which consists of a vertex set $V(G)$, and two disjoint edge sets, one containing \emph{black} edges $E(G)$ and the other containing \emph{red} edges $R(G)$.
In particular, every graph $G$ is a trigraph with only black edges and $R(G)$ being empty.
Let $G$ be a trigraph.
We say that we \emph{contract} two vertices $u,v\in V(G)$ if we merge them into a single vertex $w$, and then possibly color the edges incident to the new vertex $w$.
Every existing edge $wz$ remains black if and only if $uz$ and $vz$ were previously black edges.
All other edges incident to $w$ are colored in red.
A \emph{contraction sequence} of an $n$-vertex (tri)graph $G$ is a sequence of trigraphs $G = G_n, \dots , G_1 = K_1$ such that $G_i$ is obtained from $G_{i+1}$ by contracting two vertices.
A contraction sequence is called a \emph{$d$-sequence} if all trigraphs in it have red degree at most $d$.
The \emph{twin-width} of $G$, denoted by $\tww(G)$, is the minimum integer $d$ such that $G$ admits a $d$-sequence.

Now we turn to matrices.
Here, our main proxy to the twin-width are so-called rank-$k$ divisions and the grid rank of the matrix which are closely related to twin-width graph parameter, see Bonnet et al.~\cite{BonnetGMSTT22}.
Let $M$ be a 0-1 matrix.
A \emph{division} $\calD$ of $M$ is a pair $(\calD^R, \calD^C)$,
where $\calD^R$ and $\calD^C$ are partitions of the rows and columns 
into intervals of consecutive rows and intervals of consecutive columns, respectively.
A \emph{$k$-division} is a division with $\card{\calD^R} = \card{\calD^C} = k$.
If $(\calD^R, \calD^C)$ is a division, $\calD^R = (R_1, R_2, \ldots)$ and $\calD^C = (C_1, C_2, \ldots)$,
then for each pair $R_i$, $C_j$, the (contiguous) submatrix of $R_i \cap C_j$ is called the \emph{$(i, j)$-cell} of $\calD$.
A \emph{rank-$k$ division} of $M$ is a $k$-division $\calD$ of $M$ such that each cell of $\calD$ 
contains at least $k$ distinct rows and at least $k$ distinct columns,
or, equivalently, has combinatorial rank at least $k$.
The maximum integer $k$ such that $M$ admits a rank-$k$ division is called the \emph{grid rank} of $M$, 
denoted by $\gridrank(M)$.
A \emph{$k$-grid minor} of $M$ is a $k$-division $\calD$ of $M$ such that each cell of $\calD$
contains at least one $1$; note that any rank-$k$ division of $M$ for $k \geq 2$ is 
necessarily a $k$-grid minor, too.

Let $G$ be a graph and $\prec$ be a linear order on $V(G)$.
We denote by $\Adj_\prec(G)$ the adjacency matrix of $G$ where rows and columns are ordered according to $\prec$.
We use the following.

\begin{theorem}[Bonnet et al.~\cite{BonnetGMSTT22}]\label{thm:gr:tww}
	There is a computable function $f \colon \bN \to \bN$ such that the following hold.
	Let $G$ be a graph.
	\begin{itemize}
		\item For any total order $\prec$ of $V(G)$, if $\gridrank(\Adj_\prec(G)) \le k$, then $\tww(G) \le f(k)$.
		\item If $\tww(G) \le k$, then there is a total order $\prec$ of $V(G)$ such that $\gridrank(\Adj_\prec(G)) \le f(k)$.
	\end{itemize}
\end{theorem}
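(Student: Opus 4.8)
The plan is to transport both statements to $0$-$1$ matrices and invoke the grid-rank/twin-width dichotomy for matrices of Bonnet et al.~\cite{BonnetGMSTT22}, whose combinatorial engine is a Marcus--Tardos-type grid theorem. Fix a graph $G$ and a linear order $\prec$ on $V(G)$, and write $M:=\Adj_\prec(G)$. As an ordered binary structure $M$ \emph{is} the ordered graph $(G,\prec)$, and their twin-widths agree up to an additive constant; moreover in the matrix contraction model one merges only \emph{consecutive} rows (together with the corresponding columns), so any such sequence is also a contraction sequence of the plain graph $G$ of no larger red degree, giving $\tww(G)\le\tww(M)$. It therefore suffices to prove, for matrices, the two implications (i) $\gridrank(M)\le k\Rightarrow\tww(M)\le f_1(k)$ and (ii) $\tww(M)\le d\Rightarrow\gridrank(M)\le f_2(d)$, together with (iii): every graph $G$ with $\tww(G)\le k$ admits an order $\prec$ for which $\tww(\Adj_\prec(G))=O(k)$. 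A computable $f$ for the theorem is then obtained by composing these three bounds with the $O(1)$ losses of the graph$\leftrightarrow$matrix translation.

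Item (i) is the hard direction, and I would follow the ``grid theorem'' argument of~\cite{BonnetGMSTT22} (which routes through the notion of a largest \emph{mixed minor} --- a division all of whose cells are \emph{mixed}, i.e.\ neither all-$0$ nor all-$1$ --- a quantity tied to $\gridrank$ by computable functions). One builds, top-down, a balanced hierarchical partition of the rows into intervals, and --- using that $M$ is symmetric --- the identical one of the columns; then one contracts bottom-up, at each level merging consecutive sibling intervals. The crucial invariant is that at every level the number of mixed cells stays bounded by some $g(k)$: otherwise a Marcus--Tardos-type argument would extract from the mixed cells a $(k{+}1)$-division all of whose cells have combinatorial rank at least $k{+}1$, contradicting $\gridrank(M)\le k$. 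This invariant bounds the red degree of every trigraph in the sequence by some $f_1(k)$, and since the row- and column-hierarchies coincide the sequence in fact contracts pairs of vertices of $G$. Establishing and propagating the mixed-cell bound across the hierarchy via Marcus--Tardos is the technical heart of the theorem and the step I expect to be the main obstacle. Item (ii) is easier: it follows from the standard linear bound --- a consequence of Marcus--Tardos used throughout the twin-width literature~\cite{BonnetKTW22,BonnetGMSTT22} --- that a matrix of twin-width $d$ has only $O(d\cdot t)$ mixed cells in any $t$-division; since a rank-$t$ division has all $t^2$ of its cells mixed, this forces $t^2=O(dt)$, hence $t=O(d)$.

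For item (iii), fix a $k$-sequence $G=G_n,\dots,G_1=K_1$ of $G$ and let $\mathcal T$ be its \emph{partition tree}: the binary tree with root $V(G)$, leaves the singletons $\{v\}$, and internal nodes the merged parts, each node having as its two children the parts that were merged to form it. Let $\prec$ be the order in which a depth-first traversal of $\mathcal T$ visits the leaves. Every part that occurs in the sequence is a node of $\mathcal T$, hence a contiguous $\prec$-interval, and every merge combines two sibling nodes, whose leaf-blocks are \emph{consecutive} $\prec$-intervals; so the sequence is a legal contraction sequence of the ordered structure $(G,\prec)$. Since distinct parts are disjoint $\prec$-intervals they are $\prec$-comparable as blocks, so the linear order $\prec$ never creates a red edge between two parts, whence the red degree of each trigraph equals that of the corresponding trigraph for $G$ alone; therefore $\tww(\Adj_\prec(G))=O(k)$. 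Combining (i), (ii), and (iii), and absorbing the $O(1)$ translation losses, yields the theorem.
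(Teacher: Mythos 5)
The paper does not actually prove \cref{thm:gr:tww}: it is imported verbatim from Bonnet et al.~\cite{BonnetGMSTT22} and used as a black box, so the only ``proof'' in the paper is the citation (the closest in-paper argument is \cref{app:tw:to:gr}, which proves the related \cref{thm:grid-minor-apx} by quoting Marcus--Tardos~\cite{MarcusTardos} and the ``$t$-twin-ordered $\Rightarrow$ no large mixed minor'' theorem of~\cite{BonnetKTW22}). Your sketch is therefore not an alternative to anything in the paper; it is a reconstruction of the architecture of the cited works themselves: reduce to ordered $0$-$1$ matrices, relate $\gridrank$ to mixed minors, use the Marcus--Tardos grid theorem for the hard direction ($\gridrank$ bounded $\Rightarrow$ order-respecting contraction sequence of bounded red degree), use the linear mixed-cell bound for twin-ordered matrices for the converse, and use the DFS order of the partition tree of a $k$-sequence to produce the good order. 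That is exactly the route of~\cite{BonnetKTW22,BonnetGMSTT22}, and your item (ii)/(iii) reasoning matches what the paper's own appendix invokes.

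Two caveats, both minor given that the statement is a quoted result. First, the technical heart --- propagating the mixed-cell invariant level by level in item (i) --- is only gestured at; you say so yourself, and it genuinely is the content of the grid theorem for twin-width, so as written your text is a correct roadmap rather than a proof. Second, be careful with the matrix-vs-graph translation: in the matrix contraction model rows and columns are merged independently, so a matrix contraction sequence is not automatically a vertex contraction sequence of $G$; your argument only works because, for the symmetric matrix $\Adj_\prec(G)$, you build identical row and column hierarchies and merge them in sync (and, in item (iii), because distinct parts are disjoint $\prec$-intervals, the order relation is homogeneous between parts and contributes no red edges). Making those synchronization points explicit is all that separates your sketch from the argument in~\cite{BonnetGMSTT22}.
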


\begin{lemma}[Bonnet et al.~\cite{BonnetCKKLT22}]\label{lem:gridrankbound}
	There is a computable function $f \colon \bN \to \bN$ such that the following holds.
	Let $V$ be a set of vertices, let $\prec$ be a linear order on $V$, 
	and let $G_1$ and $G_2$ be two graphs on the vertex set $V$.
	If $\gridrank(\Adj_\prec(G_i)) \le k$ for all $i \in [2]$, 
	then $\gridrank(\Adj_\prec(G_1 \cup G_2)) \le f(k)$.
\end{lemma}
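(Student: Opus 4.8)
The plan is to argue directly at the level of the three matrices $\Adj_\prec(G_1)$, $\Adj_\prec(G_2)$ and $\Adj_\prec(G_1 \cup G_2)$ --- all of which have both their rows and their columns indexed by $V$ in the order $\prec$ --- and to combine the two hypotheses through the Marcus--Tardos-type ``linearly many mixed cells'' description of bounded grid rank that underlies \cref{thm:gr:tww}. Call a cell (contiguous submatrix) of a $0$-$1$ matrix \emph{mixed} if it contains both a $0$ and a $1$. The two facts I will use are: (i) [easy] if $\gridrank(M) \ge r$ for some $r \ge 2$, then $M$ admits a division $(\calP^R, \calP^C)$ with $\card{\calP^R} = \card{\calP^C} = r$ all of whose $r^2$ cells are mixed --- indeed each such cell has at least $r \ge 2$ distinct rows, so some column of the cell carries two different values; and (ii) [the content, from Bonnet et al.~\cite{BonnetGMSTT22,BonnetCKKLT22}] there is a computable function $h$ such that if $\gridrank(M) \le k$ then \emph{every} division $(\calP^R, \calP^C)$ of $M$ into row-intervals and column-intervals has at most $h(k) \cdot (\card{\calP^R} + \card{\calP^C})$ mixed cells (the precise shape of this linear bound is immaterial). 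The reason to pass to (ii) is that it is stated uniformly over all interval divisions of the single ground set $V$, so the bounds coming from $G_1$ and $G_2$ live over the same objects and can simply be added; a formulation via a single good contraction sequence would instead force one to reconcile two a priori unrelated sequences for $G_1$ and $G_2$.

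The elementary observation gluing the two bounds together is a subadditivity of mixedness: for any two intervals $P, P'$ of $(V,\prec)$, if the block $\Adj_\prec(G_1 \cup G_2)[P, P']$ is mixed, then $\Adj_\prec(G_1)[P, P']$ is mixed or $\Adj_\prec(G_2)[P, P']$ is mixed. Indeed, if both of the latter blocks were monochromatic, then, as $\Adj_\prec(G_1 \cup G_2)[P, P']$ is their entrywise maximum, it would be the all-ones block whenever at least one of them is all-ones and the all-zeros block otherwise; in either case it would be monochromatic, a contradiction.

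Putting this together, let $r := \gridrank(\Adj_\prec(G_1 \cup G_2))$; we may assume $r \ge 2$, as otherwise there is nothing to prove. Fix a division $\calD = (\calD^R, \calD^C)$ of $\Adj_\prec(G_1 \cup G_2)$ with $\card{\calD^R} = \card{\calD^C} = r$ all of whose $r^2$ cells are mixed, as provided by fact~(i). Since $\calD$ is equally a division of $\Adj_\prec(G_1)$ and of $\Adj_\prec(G_2)$ (same index set, same order), the subadditivity observation shows that every cell of $\calD$ that is mixed in $\Adj_\prec(G_1 \cup G_2)$ is, at the same position, mixed in $\Adj_\prec(G_1)$ or in $\Adj_\prec(G_2)$. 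Hence $r^2$, the number of mixed cells of $\calD$ in $\Adj_\prec(G_1 \cup G_2)$, is at most the number of mixed cells of $\calD$ in $\Adj_\prec(G_1)$ plus that in $\Adj_\prec(G_2)$; applying fact~(ii) to each of $\Adj_\prec(G_i)$ with $i \in \{1,2\}$ (using $\gridrank(\Adj_\prec(G_i)) \le k$ and $\card{\calD^R} + \card{\calD^C} = 2r$) bounds each summand by $h(k) \cdot 2r$, so $r^2 \le 4 h(k)\, r$, i.e.\ $r \le 4h(k)$. Thus $\gridrank(\Adj_\prec(G_1 \cup G_2)) \le f(k)$ with $f(k) := \max\{1, 4h(k)\}$, which is computable since $h$ is.

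The only non-routine ingredient is fact~(ii): that bounded grid rank forces a bound, linear in the number of parts, on the number of mixed cells over \emph{all} interval divisions. This is exactly the Marcus--Tardos-based machinery behind \cref{thm:gr:tww}, and I would simply invoke \cite{BonnetGMSTT22,BonnetCKKLT22} for it; were one to want a self-contained treatment, re-deriving this bound via the Marcus--Tardos theorem would be essentially the whole of the work, whereas the subadditivity observation and the counting above are immediate.
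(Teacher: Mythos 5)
First, a point of reference: the paper does not prove \cref{lem:gridrankbound} at all --- it is imported verbatim from~\cite{BonnetCKKLT22} --- so your attempt can only be judged on its own merits, not against an in-paper argument.

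Your proof has a genuine gap: the ``fact~(ii)'' on which everything rests is false. You claim that $\gridrank(M)\le k$ forces \emph{every} division $\calD=(\calD^R,\calD^C)$ to have at most $h(k)\,(\card{\calD^R}+\card{\calD^C})$ cells containing both a $0$ and a $1$. Take $M$ to be the $n\times m$ matrix whose odd rows are all $1$ and whose even rows are all $0$. Every cell of every division has at most two distinct rows (the restrictions of the all-one and the all-zero row), so there is no rank-$3$ division and $\gridrank(M)\le 2$. Yet for the division into consecutive row pairs and singleton columns, \emph{every} cell is a $2\times 1$ block with entries $1$ and $0$, i.e.\ mixed in your sense, giving $(n/2)\cdot m$ mixed cells against $n/2+m$ parts --- quadratically more than any linear bound $h(2)(\card{\calD^R}+\card{\calD^C})$. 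Even the weaker statement one might hope to substitute (bounded grid rank forbids large divisions \emph{all} of whose cells are mixed) fails: the checkerboard matrix $M[i,j]=(i+j)\bmod 2$ also has $\gridrank(M)\le 2$ (again at most two distinct rows per cell), yet its division into $2\times 2$ blocks has every one of its $(n/2)^2$ cells mixed, in fact with two distinct rows and two distinct columns each. So no theorem of the form you invoke can be extracted from~\cite{BonnetGMSTT22,BonnetCKKLT22}; the Marcus--Tardos-type statements there control $1$-entries, or mixed cells of matrices that admit no large \emph{all-mixed} division, and neither hypothesis is implied by bounded grid rank. With fact~(ii) gone, the counting $r^2\le 4h(k)r$ has nothing to stand on (your fact~(i) and the subadditivity observation are fine but carry no weight by themselves).

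The underlying issue is that ``contains a $0$ and a $1$'' is far too weak a property for grid rank to control; grid rank constrains cells having \emph{many distinct rows and many distinct columns simultaneously}. A correct argument has to work at that level, e.g.\ via the submultiplicativity of distinct-row (and distinct-column) counts --- a cell of $\Adj_\prec(G_1\cup G_2)$ with many distinct rows forces, in the same position, many distinct rows in $\Adj_\prec(G_1)$ or in $\Adj_\prec(G_2)$ --- followed by a nontrivial Ramsey/Marcus--Tardos-style coarsening step to find a single index $i\in[2]$ and a single coarsened division all of whose cells are row- \emph{and} column-rich for that same $G_i$, yielding a large rank division of $\Adj_\prec(G_i)$ and hence a contradiction. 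That step is precisely the substance of the cited result in~\cite{BonnetCKKLT22}, and it is not supplied by your argument.
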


The following theorem can be deduced from the arguments of~\cite{BonnetKTW22},
but is not stated there explicitly; for completeness, we
provide a proof in \cref{app:tw:to:gr}.
\begin{restatable}{theorem}{thmgminorgrank}\label{thm:grid-minor-apx}
	There is a computable function $f$ with $f(k) = 2^{\Oh(k \log k)}$ such that the following holds.
	There is an algorithm that, given a 
	0-1 matrix $\bfA$, in $f(k) n^{\Oh(1)}$ time either 
	\begin{itemize}
		\item finds a $k$-grid minor in $\bfA$, or
		\item certifies that $\gridrank(\bfA) \le f(k)$.
	\end{itemize}
\end{restatable}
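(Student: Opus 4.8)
The plan is to re-run, on the single $0$-$1$ matrix $\bfA$, the constructive core of the matrix-regularity machinery of Bonnet et al.~\cite{BonnetKTW22}, while keeping explicit track of the Marcus--Tardos bounds; I deliberately will \emph{not} route through \cref{thm:gr:tww}, whose function is only asserted to be computable, since I need the explicit $2^{\Oh(k\log k)}$ estimate. As a preliminary remark, the ``$k$-grid minor'' alternative is combinatorially cheap: for $k\ge 2$ any rank-$k$ division is already a $k$-grid minor (a cell with at least $k\ge 2$ distinct rows contains a nonzero row, hence a $1$), so $\gridrank(\bfA)\ge k$ already forces a $k$-grid minor in $\bfA$. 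The content of the statement is therefore algorithmic -- the task is to \emph{produce} a $k$-grid minor, or a compact certificate of bounded grid rank, in $f(k)\,n^{\Oh(1)}$ time -- and it is precisely here that the $f(k)=2^{\Oh(k\log k)}$ gap between the two outcomes appears.

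\emph{Reduction to a submatrix.} It suffices to exhibit a $k$-grid minor inside \emph{some} submatrix $\bfA[R',C']$. Indeed, if the rows of $R'$ are split into $k$ consecutive (within $R'$) blocks and the columns of $C'$ into $k$ such blocks so that every one of the resulting $k^2$ block-cells of $\bfA[R',C']$ contains a $1$, then, taking as the row cut-points of $\bfA$ the largest row index of each but the last $R'$-block (and $n$ for the last one), and symmetrically for the columns, one obtains a $k$-grid minor of $\bfA$: enlarging a block only adds rows/columns to the cells it meets, so each witnessing $1$ survives. The same device lifts a $k$-grid minor of a block-incidence matrix of a coarsening of $\bfA$ to a $k$-grid minor of $\bfA$.

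\emph{The iterative procedure.} Starting from the finest ordered division of $\bfA$ (single rows, single columns), I would repeatedly \emph{coarsen} the current division $\calD$ by merging two consecutive row-parts or two consecutive column-parts, following the division-sequence construction of~\cite{BonnetKTW22}, maintaining the invariant that each part is non-uniform (``red'', in the terminology of~\cite{BonnetKTW22}) in at most $g(k):=2^{\Oh(k\log k)}$ of its cells. As long as an admissible merge exists I continue, and on reaching the one-cell division I output the whole merge sequence as a certificate. If I get stuck -- no admissible merge -- then the Marcus--Tardos-based regularity lemma of~\cite{BonnetKTW22} (whose constant for a size-$k$ grid pattern is $2^{\Oh(k\log k)}$) applies to the obstruction and produces, in polynomial time, a $k$-division of the relevant coarsening of $\bfA$ all of whose cells are ``mixed'' and hence contain a $1$; by the reduction above this yields a $k$-grid minor of $\bfA$.

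\emph{Correctness of the certificate, the bound, and the obstacle.} By the standard estimates of~\cite{BonnetKTW22} (and the equivalence of mixed minors with rank divisions from~\cite{BonnetGMSTT22}), a merge sequence of red degree $g(k)$ implies that $\bfA$ has no rank-$t$ division for $t$ exceeding a fixed polynomial in $g(k)$: a rank-$t$ division with $t\ge 3$ has all cells ``mixed'' (a cell with $\ge 3$ distinct rows is not row-uniform, likewise for columns), so it would be a large mixed minor, incompatible with a bounded-width merge sequence. Hence the emitted merge sequence certifies $\gridrank(\bfA)\le\mathrm{poly}(g(k))=2^{\Oh(k\log k)}$, and one may take $f$ to be the maximum of $g$ and this bound. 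All steps are polynomial-time, with only the Marcus--Tardos constant $2^{\Oh(k\log k)}$ depending on $k$, giving running time $f(k)\,n^{\Oh(1)}$. I expect the main obstacle to be bookkeeping rather than a new idea: distilling from the class-level arguments of~\cite{BonnetKTW22} a single-matrix algorithm with the stated running time and an \emph{explicit} $2^{\Oh(k\log k)}$ bound, in particular pinning down the algorithmic form of Marcus--Tardos used and checking that the translations among ``red degree of a merge sequence'', ``mixed-freeness'', and ``grid rank'' never inflate the bound past $2^{\Oh(k\log k)}$.
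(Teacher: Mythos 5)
Your proposal is correct and follows essentially the same route as the paper's proof in Appendix~B: greedily build an ordered contraction/division sequence, invoke the algorithmic Marcus--Tardos theorem to extract a $k$-grid minor when the greedy step is blocked, and otherwise use the theorem of Bonnet et al.~\cite{BonnetKTW22} that a bounded-width order-respecting sequence excludes large mixed minors, hence certifies $\gridrank(\bfA)\le 2^{\Oh(k\log k)}$. The only difference is bookkeeping: the paper's greedy merges two consecutive rows/columns whenever their total number of $1$/red entries is small (applying Marcus--Tardos directly to the coarsened matrix with red treated as $1$), whereas you re-run the mixed-cell division-sequence machinery of~\cite{BonnetKTW22}, which is heavier but yields the same bound.
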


\subsection{(Permutation) CSP}\label{ss:permCSP}
\looseness=-1
An instance of a \emph{constraint satisfaction problem} (CSP) 
is a triple $(X, \calD, \calC)$, where $X = \{x_1, \ldots, x_n\}$ is a set of \emph{variables},
$\calD = \{D_1, \ldots, D_n\}$ a set of \emph{domains},
and each $C \in \calC$ a \emph{constraint}.
A constraint $C$ is an $a_C$-tuple $(x_{a_1}, \ldots, x_{a_C})$ of variables
and a relation $R(C) \subseteq D_{a_1} \times \ldots \times D_{a_C}$.
A \emph{valuation} $\alpha$ assigns to each $x_i$ a value $\alpha(x_i) \in D_i$.
A constraint $C \in \calC$ is \emph{satisfied} by $\alpha$ if 
$(\alpha(x_{a_1}), \ldots, \alpha(x_{a_C})) \in R(C)$.
Valuation $\alpha$ \emph{satisfies} $(X, \calD, \calC)$ if it satisfies all constraints.

Let $(D_1, \le_1)$ and $(D_2, \le_2)$ be finite totally ordered sets.
We need two types of relations in $D_1 \times D_2$.
A relation $R \subseteq D_1 \times D_2$ is called \emph{\downclosed} 
if for every $(x_1, x_2) \in R$ and every $x_1' \le_1 x_1$ and $x_2' \le_2 x_2$ it holds that $(x_1', x_2') \in R$.
Let $X_1 \subseteq D_1$ and $X_2 \subseteq D_2$ with $\card{X_1} = \card{X_2}$
and let $\pi \colon X_1 \to X_2$ be a bijection.
We refer to the relation $\{(x,\pi(x))~|~x \in X_1\}$ as the \emph{permutation constraint $\pi$}.
We denote by $\Adj(\pi)$ the 0-1 matrix of dimension $|D_1| \times |D_2|$ associated with $\pi$ as follows.
The rows and columns one-to-one correspond to $D_1$ and $D_2$, respectively, following the orders $\le_1$ and $\le_2$.
The entry associated with $(x,y) \in D_1 \times D_2$ equals $1$ if and only if $x \in D_1$ and $y = \pi(x)$.
We deal with CSPs of the following form.

\begin{definition}\label{def:permCSP}
	A \twwCSP{$w$}\ instance consists of variables $x_1, \ldots, x_k$ with domains $(D_1, \le_1)$, $\ldots$, $(D_k, \le_k)$,
	where for all $i \in [k]$, $(D_i, \le_i)$ is a totally ordered set,
	and each constraint is, for some $i \neq j$, either
	\begin{itemize}
		\item a \downclosed relation $R \subseteq D_i \times D_j$, or 
		\item a permutation constraint $\pi \colon X_i \to X_j$ where $X_i \subseteq D_i$ and $X_j \subseteq D_j$.
	\end{itemize}
	Furthermore, for each permutation constraint $\pi$, we have that $\gridrank(\Adj(\pi)) \le w$.
\end{definition}

\subsection{Flow-augmentation}\label{ss:dfl}

Because the shadow removal technique is much easier to phrase and use in the vertex-deletion
regime, we need to adjust flow-augmentation~\cite{dfl-arxiv} from edge deletions to vertex deletions. This is pretty straightforward via 
the standard reductions between edge- and vertex-deletion regimes in directed graphs. 
More importantly, we need to squeeze an extra connectivity property out of
flow-augmentation, which we formalize below as \emph{soybeans}. 
We present here only the necessary definitions and the main statement that is used
in the algorithm; its proof is deferred to \cref{sec:soybean}.

\looseness=-1
Let $G$ be a directed graph with two distinguished vertices $s,t \in V(G)$.
The vertices of $G$ are partitioned into \emph{deletable} vertices and \emph{undeletable}
vertices; $s$ and $t$ are undeletable. 
A (vertex-based) \emph{$st$-flow} is a collection of $st$-paths that do not share a deletable vertex;
the number of paths is the \emph{value} of the flow.
We use $\lambda_G(s,t)$ for the maximum possible value of an $st$-flow;
a flow of value $\lambda_G(s,t)$ is an \emph{$st$-maxflow}.
As a convention we say that, if $G$ contains an $st$-path consisting of undeletable vertices
only, then the flow containing such a path has value $+\infty$ and $\lambda_G(s,t) = +\infty$.

A set $Z$ of deletable vertices is an \emph{$st$-separation} 
if there is no path from $s$ to $t$ in $G-Z$. 
An $st$-separation $Z$ is \emph{minimal} if no proper subset of $Z$ is an $st$-separation and
\emph{minimum} (or \emph{$st$-mincut}) if it has minimum possible cardinality.
By Menger's theorem, if $\lambda_G(s,t) < +\infty$ then the size of every $st$-mincut is exactly $\lambda_G(s,t)$ and there are no $st$-separations if $\lambda_G(s,t) = +\infty$. 
We drop the subscript if the graph $G$ is clear from the context. 

We say that a set of arcs $A \subseteq V(G) \times V(G)$ is \emph{compatible} with a minimal $st$-separation $Z$ if the following holds:
for every $v \in V(G)$, there is a path from $s$ to $v$ in $G-Z$ if and only if there is a path from $s$ to $v$ in $(G+A)-Z$. 
The pair $(A, \flow)$ is \emph{compatible} with $Z$ if $A$ is compatible with $Z$
and $\flow$ is an $st$-maxflow in $G+A$.

A \emph{soybean} in $G$ is an unordered pair of walks that have the same starting vertex and the same ending vertex. We do not require the walks to be disjoint in any sense; in particular, a pair consisting of the same walk twice is always a soybean. 
Two soybeans $PQ$ and $P'Q'$ are \emph{vertex-disjoint} if $(V(P) \cup V(Q)) \cap (V(P') \cup V(Q')) = \emptyset$.
For two sets of vertices or edges $C,D \subseteq V(G) \cup E(G)$, a soybean $PQ$ is a \emph{$CD$-soybean} if one walk of $PQ$ contains an edge or a vertex of $C$ and the other walk of $PQ$ contains an edge or a vertex of $D$. 
For a path $P$ and two disjoint sets $C,D \subseteq V(P) \cup E(P)$, we say that $C$ and $D$ are \emph{interlaced on $P$} if $|C|=|D|=q$ for some integer $q$, $C$ can be enumerated as $c_1,\ldots,c_q$, $D$ can be enumerated
as $d_1,\ldots,d_q$, and the order of these vertices and edges along $P$ is $c_1,d_1,c_2,d_2,\ldots,c_q,d_q$. 

\begin{restatable}{theorem}{VertexSoybeanTheorem}\label{thm:dfl-upgrade-vertices}
  There exist computable functions $c : \N \to \N$ and $q : \N \times \N \to \N$ such that the following holds.

  There exists a polynomial-time randomized algorithm that, given a directed graph $G$
  (with possibly some vertices marked as undeletable),
  vertices $s,t \in V(G)$, and an integer $k$,
  returns 
  an arc set $A \subseteq V(G) \times V(G)$
  and an $st$-maxflow $\witnessflow$ in $G + A$ such that 
  for every minimal $st$-separator $Z$ of size at most $k$,
  with probability $2^{-\Oh(k^4 \log k)}$, the tuple $(A,\witnessflow)$ is compatible with $Z$.
  
  Additionally, the algorithm returns a partition $\mathcal{B}$ of the deletable
  vertices of $\bigcup_{P \in \witnessflow} V(P)$ into at most $c(k)$ sets
  such that for every $P \in \witnessflow$, every integer $p \in \N$, every $B \in \mathcal{B}$
  and every two disjoint sets $C,D$ of size at least $q(k,p)$,
  consisting of vertices of $B \cap V(P)$ that
  are interlaced on $P$, the graph $G$ contains a family of $p$ pairwise vertex-disjoint $CD$-soybeans.
  
  Finally, one can take $c$ and $q$ such that 
  $c(k) = 2^{\Oh(k^3 \log k)}$ and $q(k,p) = 2^{\Oh(k^3 \log (kp))}$.
\end{restatable}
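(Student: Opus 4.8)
The plan is to prove a vertex-deletion strengthening of \cref{thm:dfl-basic} that additionally carries the soybean bookkeeping, and to obtain the soybean part we will have to \emph{re-run} --- rather than invoke as a black box --- the recursive flow-augmentation algorithm underlying \cref{thm:dfl-basic}. First I would pass to the edge-deletion world by the standard gadget: replace each deletable vertex $v$ by an arc $(v^-,v^+)$, redirect every in-arc of $v$ to end in $v^-$ and every out-arc of $v$ to start in $v^+$, declare $v^-v^+$ deletable and every ``copy'' arc uncuttable (taken with multiplicity $k{+}1$), and leave undeletable vertices untouched. Then minimal $st$-separators of $G$ of size at most $k$ correspond bijectively to minimal edge $st$-cuts of the gadget graph of size at most $k$, $st$-flows map to $st$-flows, and walks map to walks in both directions, so the compatibility guarantee, the partition $\mathcal B$, the notion of interlacing on a flow path, and $CD$-soybeans all transfer verbatim; it therefore suffices to prove the edge analogue of \cref{thm:dfl-upgrade-vertices}. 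In that setting the first assertion is free: unfolding the definition of compatibility (equal $s$-reachability in $G-Z$ and $(G+A)-Z$, and $\witnessflow$ an $st$-maxflow of $G+A$) it is exactly the guarantee of \cref{thm:dfl-basic} applied to $Z$, once $\witnessflow$ is fixed to be any maxflow of $G+A$. All the new content is the partition $\mathcal B$ and the soybean property.

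For those I would open up the recursive skeleton of flow-augmentation: as long as the current maxflow value $\lambda$ is smaller than $|Z|$, the algorithm makes one random guess among $2^{\mathrm{poly}(k)}$ options, adds a structured \emph{bundle} of arcs that raises $\lambda$ by one (and, on the correct guess, keeps $Z$ a cut), and recurses; the recursion has depth at most $k$, and the product of the $\le k$ per-level success probabilities $2^{-\Oh(k^3\log k)}$ is the global $2^{-\Oh(k^4\log k)}$. The structural fact I would extract --- really an invariant to be re-proved in lockstep with the algorithm, not quoted --- is that every arc placed into $A$ at a given recursion node is \emph{backed} by an honest connection inside the graph entering that node: such an arc $(u,v)$ can be traded for a $uv$-walk using only arcs of $G$ together with arcs added \emph{strictly earlier} in the recursion, chosen from a repertoire of ``shapes'' whose number is bounded by a function of $k$ only. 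I would then define $\mathcal B$ by classifying each deletable edge lying on a path of $\witnessflow$ by the recursion node at which it was installed on the flow, together with a bounded amount of extra data (which flow path of which level, and the position of its bundle arc); a counting over the successful sub-recursion bounds the number of classes by $c(k)=2^{\Oh(k^3\log k)}$.

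The soybean property I would prove by induction mirroring the augmentation rounds, in the strengthened form: for every part $B$, every flow path $P$ at the current stage, and every disjoint interlaced $C,D\subseteq B\cap V(P)$ with $|C|,|D|\ge q_j$ (where $j$ rounds remain to be undone and $q_0=p$), $G$ contains $p$ pairwise vertex-disjoint $CD$-soybeans. The base case $j=0$ is immediate: the flow paths are genuine $G$-paths, and if $C,D$ interlace on $P$ as $c_1,d_1,c_2,d_2,\dots$ then the subpaths $P[c_i \to d_i]$ are $p$ pairwise vertex-disjoint walks each meeting both $C$ and $D$, which is a $CD$-soybean (a single walk hitting $C$ and $D$ counts). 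In the inductive step I take a $CD$-soybean in the stage-$j$ graph, replace every arc of the last bundle it uses by its backing walk from the previous paragraph, and thereby get a candidate soybean in the stage-$(j{-}1)$ graph; the backing walks may collide with each other, with the other $p-1$ soybeans, or require internal re-routing, but since there are only $2^{\Oh(k\log k)}$ bundle shapes and $\Oh(k)$ arcs per bundle, a pigeonhole over shapes plus a discard of the remaining collisions reduces to $q_{j-1}$ usable interlaced elements at a cost $q_j = q_{j-1}\cdot p\cdot 2^{\Oh(k^2\log k)}$. Iterating over the $\Oh(k)$ rounds starting from $q_0=p$ gives $q(k,p)=p^{\Oh(k)}\cdot 2^{\Oh(k^3\log k)}=2^{\Oh(k^3\log(kp))}$, and pushing the edge statement back through the gadget proves \cref{thm:dfl-upgrade-vertices}.

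The main obstacle is the interaction of the last two steps. Flow-augmentation is engineered to \emph{add} connectivity, so an inserted arc $(u,v)$ is in general \emph{not} realized by any $uv$-walk of $G$; one has to pin down the precise weaker sense in which it is backed by earlier-stage structure, and then verify that soybeans genuinely survive iterated de-augmentation while staying vertex-disjoint --- this is the bulk of \cref{sec:soybean}, and it is exactly why the flow-augmentation algorithm is re-run rather than invoked. A secondary, fiddly point is controlling the parameter growth tightly enough to reach $c(k)=2^{\Oh(k^3\log k)}$ and $q(k,p)=2^{\Oh(k^3\log(kp))}$, since these bounds are consumed downstream in \cref{sec:tw-red} and leave no room to spare.
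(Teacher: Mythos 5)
Your skeleton matches the paper's: pass to the edge-deletion regime by splitting deletable vertices, observe that the compatibility guarantee is essentially the old flow-augmentation statement, and get $\mathcal{B}$ and the soybeans by re-running the recursive algorithm and proving a strengthened invariant by induction over the recursion tree. But the heart of your argument --- the claim that every augmentation arc installed at a recursion node is ``backed'' by a walk in the graph entering that node, so that soybeans can be de-augmented by substituting backing walks --- is exactly the point that fails, as you yourself concede in your last paragraph; augmentation arcs (e.g.\ arcs from $s$ or into $t$, or the shortcut arcs added in the base case) genuinely create connectivity absent from $G$, and no ``weaker sense'' of backing is ever established. The paper never de-augments a soybean. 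Instead, the inductive step is a dichotomy: either at least $q_{j-1}+1$ elements of $C$ fall inside a single recursive sub-instance, in which case the inductive hypothesis for that sub-call is applied directly, or the elements of $C$ spread over at least $2p+1$ of the regions lying between consecutive $st$-mincuts of the current graph; in the latter case the soybeans are built \emph{inside $G$} from walks internal to those regions, concatenated with subpaths of two fixed paths $P^1,P^2$ of the flow $\flow$ given as \emph{input} to the call --- a flow in the unaugmented graph --- and vertex-disjointness comes from the regions and the flow-path segments being nested between distinct mincuts. The reason such region-internal walks exist for all edges in one part of $\mathcal{B}$ is precisely how the parts are defined, which brings up the second gap.

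Your proposed partition ``by the recursion node at which the edge was installed, plus bounded extra data'' does not give $c(k)$ parts: the recursion tree has polynomially many nodes (the number of regions between consecutive mincuts in the large-$\ell$ case, and the number of segments in the base case, are not bounded in $k$), so this classification is too fine, and dropping the node leaves data that does not support the soybean construction. The paper instead assigns to each vertex a \emph{signature}: for every ancestor call in the large-$\ell$ case, the pair $(i_1,i_2)$ of indices of the parent's flow paths whose mincut edges bound a walk through the vertex inside its region. Since the recursion depth is $\Oh(k^3)$ (not $k$, as you assume) and each local signature has $\Oh(k^2)$ values, this yields $2^{\Oh(k^3\log k)}$ parts, and sharing a signature is exactly what lets the walks $W(e)$ of different elements of $C$ and $D$ be stitched to the \emph{same} $P^1,P^2$ to form the disjoint soybeans. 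Finally, even your base case is not immediate: the flow paths returned at the leaves already traverse augmentation arcs, and the paper handles this with the decomposition into graphs $G_{i,b}$ (every vertex reachable from the region's entry and reaching its exit in $G$ minus the bottleneck edges), again building soybeans from region-internal walks rather than subpaths of the returned flow paths. So while the announced strategy is the right one, the two load-bearing ingredients --- the signature-based definition of $\mathcal{B}$ and the direct, non-substitutional construction of soybeans in $G$ --- are missing, and the substitution induction you sketch cannot be repaired into them.
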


\section{\permcsp\ with bounded twin-width}\label{sec:low-tw-csp-algo}
In this section we show the following.
\begin{theorem}\label{thm:bounded-twin-width}
\twwCSP{$w$}\ parameterized by the number of constraints plus $w$ is fixed-parameter tractable. 
\end{theorem}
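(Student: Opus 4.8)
The plan is to reduce a \twwCSP{$w$}\ instance to an instance of first-order model checking on an ordered binary structure of bounded twin-width, and then invoke the meta-algorithm of Bonnet et al.~\cite{BonnetGMSTT22}. The number of variables $k$ is bounded (it is at most twice the number of constraints, since every constraint is binary, plus we may assume no isolated variables), and the number of constraints is part of the parameter, so the only obstacle to a direct encoding is that the domains $D_1, \ldots, D_k$ can be arbitrarily large.

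\textbf{Step 1: Build a binary structure.} I would take as universe the disjoint union $D_1 \sqcup \cdots \sqcup D_k$, together with $k$ unary predicates marking which domain each element belongs to, and a single linear order $\prec$ that is the concatenation of the orders $(D_1,\le_1), \ldots, (D_k,\le_k)$ (so $\prec$ restricted to each $D_i$ is $\le_i$). Each \downclosed constraint $R \subseteq D_i \times D_j$ is first-order definable from $\prec$ and the two minimal forbidden pairs along each ``staircase'' -- but since there can be unboundedly many steps, this is not directly an FO-definable relation with a bounded formula. Instead I would represent $R$ by adding it as an explicit binary relation symbol $E_R$ to the structure. Likewise each permutation constraint $\pi$ is added as a binary relation symbol $E_\pi$. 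Thus the structure $\mathfrak{A}$ has a bounded number of relation symbols ($k$ unary, one order, one binary per constraint), but two of the binary relations ($E_R$ and $E_\pi$) have unbounded ``complexity''.

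\textbf{Step 2: Bound the twin-width.} This is the heart of the argument and the main obstacle. By \cref{thm:gr:tww}, it suffices to exhibit a linear order on the universe under which every relation's adjacency matrix has bounded grid rank, and then repeatedly apply \cref{lem:gridrankbound} to take the union of all these matrices into one graph of bounded grid rank (and hence bounded twin-width). For the order $\prec$ itself, the unary predicates, and the intra-domain comparisons, grid rank is trivially small. For a \downclosed relation $R \subseteq D_i \times D_j$: its matrix (with rows indexed by $D_i$ in $\le_i$-order and columns by $D_j$ in $\le_j$-order) is a ``staircase'' $0$-$1$ matrix, i.e.\ each row's set of $1$'s is a prefix and these prefixes are monotone in the row index; such matrices have grid rank at most a small constant (any division's cell either is all-$0$, all-$1$, or straddles the single monotone boundary, and there is only one boundary, so at most one cell per row-block and one per column-block has both a $0$ and a $1$ -- hence rank-$2$ cells can only occur along a single diagonal, bounding the rank-division size). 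For a permutation constraint $\pi$, grid rank is at most $w$ \emph{by the defining hypothesis} of \twwCSP{$w$}. So every binary relation of $\mathfrak{A}$, with the order $\prec$, has grid rank bounded by $\max(w, O(1))$, and by \cref{lem:gridrankbound} (iterated over the constantly-many relations, using that $k$ and the number of constraints are bounded in the parameter) the ``total'' graph on the universe ordered by $\prec$ has grid rank bounded by some $g(k, w)$; hence by \cref{thm:gr:tww} the ordered structure $\mathfrak{A}$ has twin-width bounded by $f(g(k,w))$.

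\textbf{Step 3: Express satisfiability as an FO sentence.} The existence of a satisfying valuation $\alpha$ is expressed by the sentence $\exists y_1 \in D_1 \cdots \exists y_k \in D_k\ \bigwedge_{C \in \calC} \varphi_C(\bar y)$, where for a \downclosed constraint $R$ on $(x_i,x_j)$ we write $\varphi_C \equiv E_R(y_i, y_j)$ and for a permutation constraint $\pi$ on $(x_i,x_j)$ we write $\varphi_C \equiv E_\pi(y_i,y_j)$; the quantifier ``$\exists y_i \in D_i$'' is the unary predicate for $D_i$. This is a fixed FO sentence whose length depends only on $k$ and $|\calC|$. Running the FPT model-checking algorithm of~\cite{BonnetGMSTT22} on $\mathfrak{A}$ (whose twin-width, equivalently a witnessing contraction sequence or the order witnessing bounded grid rank, we computed in Step~2) with this sentence decides the instance in time $g'(k, w, |\calC|)\cdot n^{O(1)}$, which proves \cref{thm:bounded-twin-width}. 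The only nontrivial point to verify carefully is the grid-rank bound for staircase (\downclosed) matrices in Step~2 and the bookkeeping that combining a bounded number of bounded-grid-rank relations via \cref{lem:gridrankbound} keeps the bound a function of the parameter only -- which it does, since each application of \cref{lem:gridrankbound} increases the bound by a fixed computable function and we apply it $O(|\calC|)$ times.
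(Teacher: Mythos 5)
Your proposal is correct and follows the same overall route as the paper: encode the instance as an ordered (colored) structure over the concatenated domains, bound its twin-width via grid rank using \cref{thm:gr:tww} and repeated application of \cref{lem:gridrankbound}, express satisfiability by a short existential FO sentence, and invoke the ordered-structure FO model-checking algorithm of Bonnet et al. The one substantive deviation is how \downclosed constraints are handled: the paper encodes only the set of maximal pairs of $R$ (a matching marking the ``boundary''), proves via \cref{claim:gr:downclosed,claim:embed} that this matching has grid rank at most $3$, and then recovers membership in $R$ inside the formula $\phi_R$ using the order predicates ($\exists e\,\exists z_i\,\exists z_j$ with $y_h \le_h z_h$); you instead put the full relation $R$ into the structure and check it with a single atomic formula. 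Your version works because the full staircase matrix of a \downclosed relation indeed has bounded grid rank: in any $2$-division of the $(D_i,D_j)$-block, a $1$ in the bottom-right cell forces the top-left cell to be all-$1$, and otherwise the bottom-right cell is all-$0$, so either way some cell has only one distinct row; an embedding step as in \cref{claim:embed} then bounds the grid rank of the whole ordered adjacency matrix. Your stated justification (``cells straddling a single monotone boundary'') is somewhat loose, but the fix is the two-line argument above, so the gap is cosmetic. The trade-off is minor: the paper's sparse boundary encoding keeps every color class a matching at the cost of a slightly longer formula using the orders, while your dense encoding shortens the formula at the cost of proving the (easy) grid-rank bound for full staircase matrices; both yield twin-width bounded by a computable function of $w$ and the number of constraints and hence the same FPT result.
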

\begin{proof}
  \looseness=-1
	We transform the given \twwCSP{$w$} instance $\CSPinst$ into an ordered, vertex- and edge-colored
	graph $G$
	whose twin-width only depends on $w$ and the number of constraints in $\CSPinst$.
	We then give an \FO-formula $\phi$ such that $G \models \phi$ if and only if $\CSPinst$ is satisfiable.
	We can then apply the \FO\ model checking algorithm on (ordered) graphs 
	that runs in \FPT\ time when parameterized by the twin-width of the input graph plus the length of the formula~\cite{BonnetGMSTT22,BonnetKTW22}.

	\looseness=-1
	We denote the variables of $\CSPinst$ by $x_1, \ldots, x_k$, by $\calR$ the set of \downclosed constraints of~$\CSPinst$, 
	and by $\Pi$ the set of permutation constraints of $\CSPinst$.
	For each $i \in [k]$, let $d^i_1, \ldots, d^i_{n_i}$ denote the elements of $D_i$, ordered according to $\le_i$.
	For each $i \in [k]$, we introduce into $G$ a set $V_i$ of $n_i$ vertices, colored with color $i$ and denoted (and ordered) as $v^i_1, \ldots, v^i_{n_i}$.
	We reuse the symbols ``$\le_i$'' to denote the ordering of the vertices in $V_i$
	in the \FO-formula.
	At the heart of our \FO-formula $\phi$ is an existential guess of one vertex per $V_i$, 
	and choosing $v^i_j$ for $j \in [n_i]$ corresponds to setting the variable $x_i$ to $d^i_j$.
	The vertex colors are $[k]$, and encode membership in the $V_i$'s.
	We assume we have predicates $\col(y) = i$ for a vertex variable $y$ and $i \in [k]$ 
	that verify whether the vertex assigned to $y$ has color $i$ (which in turn means that $y \in V_i$).
	The existential guess is:
	\begin{align*}
		\phi_\exists \equiv \exists y_1 \ldots \exists y_k \bigwedge\nolimits_{i \in [k]} \col(y_i) = i
	\end{align*}

	\myparagraph{Encoding the constraints in \boldmath$(G, \phi)$.}
	Next we add edges to the graph $G$ and color them using the constraints $\calR \cup \Pi$ of $\CSPinst$ as colors,
	thus indicating why an edge was added to $G$.
	Similarly to above, we assume that we have predicates $\col(e) = C$ where $C \in \calR \cup \Pi$,
	that verify whether the edge assigned to $e$ has color $C$.
	Let $R \in \calR$ be a \downclosed constraint with $R \subseteq D_i \times D_j$.
	For each $(d^i_a, d^j_b) \in R$ such that there is no $(a', b') \in [n_i] \times [n_j] \setminus \{(a, b)\}$ with $a' \ge a$, $b' \ge b$, 
	and $(d^i_{a'}, d^j_{b'}) \in R$, 
	we add the edge $v^i_a v^j_b$ colored $R$ to $G$.
	Note that the edges with color $R$ form a matching,
	and that they mark the ``boundary'' of the relation $R$,
	in the following sense.
	For each $(d^i_a, d^j_b) \in D_i \times D_j$,
	we have that $(d^i_a, d^j_b) \in R$ if and only if there is an edge $(v^i_{a'}, v^j_{b'})$ 
	with $a' \ge_i a$, $b' \ge_j b$, and of color $R$ in $G$.
	We construct the following part of $\phi$ which checks this condition, and 
	therefore is true if and only if $R$ is satisfied 
	under the value assignment to $x_1, \ldots, x_k$ corresponding to the choice of the vertices $y_1, \ldots, y_k$.
	\begin{align*}
		\phi_R \equiv \exists e\exists z_i \exists z_j 
			\left(\col(e) = R \land \bigwedge\nolimits_{h \in \{i, j\}} \inc(e, z_h) \land \col(z_h) = h \land y_h \le_h z_h\right)
	\end{align*}
	Now, let $\pi \in \Pi$ with $\pi \colon X_i \to X_j$ for some $X_i \subseteq D_i$ and $X_j \subseteq D_j$ be a permutation constraint.
	Then, for each $d^i_a \in X_i$, we let $d^j_b = \pi(d^i_a)$, and add the edge $v^i_av^j_b$ colored
  $\pi$ to $G$.
	Again, the edges with color $\pi$ form a matching.
	This finishes the construction of $G$, and the last building block of $\phi$ is as follows.
	It checks that whenever $y_i$ (resp.\ $y_j$) is incident with some edge colored~$\pi$, 
	that the other endpoint of that edge is chosen to be $y_j$ (resp.\ $y_i$).
	This part of the formula evaluates to true if and only if the permutation constraint $\pi$ is satisfied 
	under the corresponding choice of values for the variables of $\CSPinst$.
	\begin{align*}
		\phi_\pi \equiv \exists e \left(\col(e) = \pi \land \inc(e, y_i) \land \inc(e, y_j)\right).
	\end{align*}
	
	We now obtain $\phi$ as 
		$$\phi \equiv \phi_\exists \land \bigwedge\nolimits_{R \in \calR} \phi_R \land \bigwedge\nolimits_{\pi \in \Pi} \phi_\pi.$$
	
	The correctness of this transformation immediately follows from the description above:
	\begin{claim}\label{claim:csp:correct}
		$\CSPinst$ is satisfiable if and only if $G \models \phi$.
	\end{claim}
	
	It remains to show that the twin-width of $G$ is bounded by a function of $w$ and the number of constraints in $\CSPinst$.
	To do that, consider the order $V_1 < V_2 < \ldots < V_k$ of $V(G)$,
	where each $V_i$ is ordered according to $\le_i$; 
	denote this order by $\prec$.
	We want to show that $\gridrank(\Adj_\prec(G)) \le h(w, \card{\calR \cup \Pi})$, 
	for some computable function $h$,
	which implies a desired bound on the twin-width of $G$ by \cref{thm:gr:tww}.
	This can be done by repeated application of \cref{lem:gridrankbound}
	over all edge colors of~$G$.
	All permutation constraints have grid rank at most $w$ by assumption, 
	so it remains to show that the grid rank of \downclosed constraints is bounded as well.
	Before we do so, we observe one minor technical detail.
	\begin{claim}\label{claim:embed}
		Let $(V, \prec)$ be an ordered set of vertices and let $X_i$ and $X_j$ be disjoint consecutive subsets of $V$.
		Let $G$ be a graph on the vertex set $V$ that only has edges between $X_i$ and $X_j$.
		For each $k \ge 1$,
		if $\gridrank(\Adj_\prec(G)) \ge k+2$, then $\gridrank(\Adj_\prec(G)[X_i, X_j]) \ge k$.
	\end{claim}
	\begin{proof}
		Let $M = \Adj_\prec(G)$, $M_{i, j} = M[X_i, X_j]$, and $M_{j, i} = M[X_j, X_i]$.
		Let $\calD = (\calD^R, \calD^C)$ be a rank-$(k+2)$ division of $M$.
		Since outside of $M_{i, j}$ and $M_{j, i}$, $M$ is all-$0$,
		for either $M_{i, j}$ or $M_{j, i}$, 
		we may assume (up to renaming $i$ and $j$) that 
		every cell of $\calD$ intersects $M_{i, j}$.
		This implies that $k^2$ cells of $\calD$ are entirely contained in $M_{i, j}$.
		Since each such cell has combinatorial rank at least $k+2$, 
		this gives a rank-$k$ division of $M_{i, j}$.
	\end{proof}
	\begin{claim}\label{claim:gr:downclosed}
		Let $R \in \calR$ be a \downclosed constraint,
		let $E_R \subseteq E(G)$ be the set of edges colored $R$ in $G$,
		and let $G_R = (V(G), E_R)$.
		Then, $\gridrank(\Adj_\prec(G_R)) \le 3$.
	\end{claim}
	\begin{proof}
		Let $i, j \in [k]$ be such that $R \subseteq D_i \times D_j$,
		and let $M_{i,j} = \Adj_\prec(G_R)[V_i, V_j]$.
		Suppose for a contradiction that there is a rank-$2$ division of $M_{i,j}$
		with row intervals $(R_1, R_2)$ and column intervals $(C_1, C_2)$.
		Since each cell of this division has combinatorial rank at least $2$,
		we know that each such cell contains at least one $1$.
		Let $e_1 = u_1v_1$ be the edge corresponding to the $1$ in $R_1 \cap C_1$,
		and let $e_2 = u_2 v_2$ be the edge corresponding to the $1$ in $R_2 \cap C_2$,
		where $u_1, u_2 \in V_i$ and $v_1, v_2 \in V_j$.
		But then, $u_1 <_i u_2$ and $v_1 <_j v_2$, 
		which means that the above construction would not have added the edge $e_1$ with color $R$.
		This implies that $\gridrank(M_{i, j}) \le 1$, 
		so by \cref{claim:embed}, $\gridrank(\Adj_\prec(G_R)) \le 3$.
	\end{proof}
	
	\begin{claim}\label{claim:tww}
		There is a computable function $h \colon \bN \times \bN \to \bN$,
		such that $\tww(G) \le h(w, \card{\calR \cup \Pi})$.
	\end{claim}
	\begin{proof}
		For each $C \in \calR \cup \Pi$, let $E_C \subseteq E(G)$ denote the set of edges colored $C$,
		and let $G_C = (V(G), E_C)$.
		Then, $\calE = \{E_C \mid C \in \calR \cup \Pi\}$ is a partition of $E(G)$.
		Recall that $\prec$ is a linear order on $V(G)$ which lets $V_1 < V_2 < \ldots < V_k$,
		and for each $i \in [k]$, coincides with $\le_i$ on~$V_i$.
		By the assumption of the theorem, and \cref{claim:gr:downclosed},
		we have that $\gridrank(\Adj_\prec(G_C)) \le \max\{3, w\}$ for all~$C$.
		Since $\calE$ is a partition of the edge set of $G$,
		we can repeatedly apply \cref{lem:gridrankbound}
		to conclude that $\gridrank(\Adj_\prec(G)) \le h'(w, \card{\calR \cup \Pi})$ for some computable function $h'$.
		\cref{thm:gr:tww} in turn implies that $\tww(G) \le h(w, \card{\calR \cup \Pi})$ for some computable function $h$.
	\end{proof}
	
	We can now run the \FO\ model checking algorithm~\cite{BonnetGMSTT22,BonnetKTW22} 
	on $(G, \phi)$ and return the same answer.
	Note that by~\cite{BonnetGMSTT22}, 
	we can compute a contraction sequence of $G$ whose twin-width 
	is bounded by a computable function of $\tww(G)$
	in FPT time parameterized by $\tww(G)$,
	using the ordering $\prec$.
	Correctness follows from \cref{claim:csp:correct}, 
	and by \cref{claim:tww} the twin-width of $G$ only depends on $w$ and the number of constraints $s$ in $\CSPinst$.
	It is clear that the length of $\phi$ can be upper bounded by a function of $k$ and $s$, 
	and since we can assume that $k \le s$, the length of $\phi$ can be upper bounded by a function of $s$ alone.
	The algorithm of~\cite{BonnetGMSTT22,BonnetKTW22} is fixed-parameter tractable in $\tww(G) + \card{\phi}$,
	which in our application translates to an \FPT-algorithm for \twwCSP{$w$} parameterized by $w + s$, as desired.
\end{proof}

We would like to remark that \cref{thm:bounded-twin-width} generalizes the fixed-parameter tractability result of 
another type of CSP, called \textsc{Forest CSP}~\cite{KratschMMPS21}.
Moreover, if we drop the twin-width of the permutation constraints as part of the parameter, then the resulting
\permcsp\ problem parameterized by the number of constraints 
is \W[1]-hard (see \cref{app:edge-choice}).

\section{Three-terminal-pair \textsc{Directed Multicut} is fixed-parameter tractable}\label{sec:csp-formulation}

This section is dedicated to proving the main theorem of this paper:
\DMCthreeFPTTheorem*

To prove \cref{thm:dmcthree-fpt}, we show how to reduce \dmcthree\ to \permcsp\ such that the twin-width of each constraint is bounded by some function of the desired separator size~$k$.
One main ingredient in this reduction is the flow-augmentation technique explained in \cref{ss:dfl}.
The outline is as follows.
We first perform flow-augmentation for each of the terminal pairs, giving us an augmented graph and a flow of value $k$ for each terminal pair.
If there is a solution~$S$, then with large-enough probability $S$ is preserved as a separator after the augmentation steps.
The solution thus corresponds to a selection of vertices, one for each of the obtained flow paths.
To obtain a reduction to \permcsp, the idea is then to introduce one variable $x$ for each flow path $P$ where the domain of $x$ is the set of vertices on~$P$.
A set of straightforward constraints ensures that the vertices selected by the variables form a separator for each terminal pair.
One crux with this approach is how to ensure that the variables introduced for flow paths between different terminal pairs select vertices in a consistent way.
We note that, after trying all possibilities of the possible overlaps of selected vertices, the consistency requirement can be modelled as a permutation constraint, we also call these \emph{consistency constraints}.
In this way, we obtain an instance of \permcsp\ with $\Oh(k^{2})$ constraints.
However, as mentioned in \cref{sec:low-tw-csp-algo}, \permcsp\ in general is \W[1]-hard with respect to the number of constraints.
Thus, we need more work to obtain \cref{thm:dmcthree-fpt}. 
We show how to ensure that the obtained constraints are simple, in the sense that they have low twin-width, crucially leveraging our improved version of flow-augmentation from \cref{ss:dfl}.
We then apply the algorithm from \cref{sec:low-tw-csp-algo} for solving \permcsp\ instances of low twin-width.

For use in the remainder of the section, fix an instance $(G, (s_i, t_i)_{i \in [3]}, k)$ of \dmcthree.
We show how to solve this instance in fixed-parameter time with large-enough probability.
The algorithm is partitioned into four main steps; they are reflected in the structure of the remainder of this section, also see \cref{fig:flowchart} for an illustration.
The first step is the shadow-removal technique, which ensures in fixed-parameter time that each vertex is reachable from some terminal and reaches some terminal.
We crucially use this property when bounding the twin-width later on.
The second step is using flow-augmentation and reducing to \permcsp.
The third step is to reduce the twin-width of the constraints in the \permcsp\ instance.
Finally, we solve the \permcsp\ instance using the algorithm from \cref{sec:low-tw-csp-algo}.

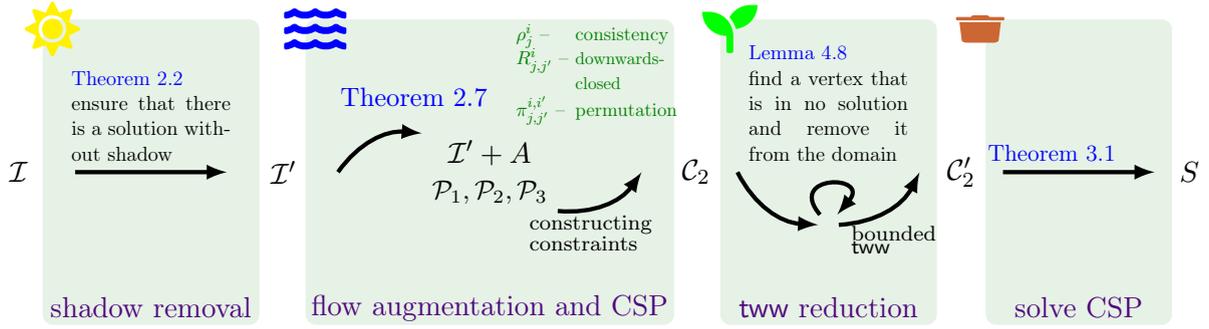
\begin{figure}[t]

\begin{tikzpicture}[scale=0.97]

	\colorlet{myGreen}{green!50!black}
	\definecolor{myRed}{rgb}{0.68, 0.05, 0.0}
	\colorlet{myBlue}{blue!90!black}
	\definecolor{myLightblue}{rgb}{0.54, 0.81, 0.94}
	\colorlet{myViolet}{myBlue!55!myRed}
	\colorlet{myOrange}{yellow!55!myRed}
	
	\tikzstyle{vertex} = [draw, circle, inner sep=.6mm, thick,fill=black]
	\tikzstyle{edge} = [draw=black,line width=1.3pt]
	\tikzstyle{nonedge} = [draw,thick,dashed]
	\tikzstyle{directededge} = [edge,->,-latex]
	\tikzstyle{path} = [decorate,decoration={snake,post length=1mm,pre length=1mm,segment length=17},->,-latex]
	
	\node (input) at (0,0) {$\mathcal{I}$};
	
	\node[fill=myGreen!10!white,rounded corners] (shadow-removal) at ($(input)+(1.8,0)$) {
		\begin{tikzpicture}
                  \useasboundingbox (-1.3,-1.9) rectangle (1.3,1.9);
                  \node[text=yellow] (sun) at (-1.3,1.9) {\huge\faSun};
			\node (sr-center) at (0,0) {};
			\node (sr-label) at ($(sr-center)+(0,-1.8)$) {\textcolor{myViolet}{shadow removal}};
			\draw[directededge,ultra thick] ($(sr-center)+(-1,0)$) -- ($(sr-center)+(1,0)$)
			node[midway,above,align=center] (sr-description) {
				\scalebox{0.7}{\begin{minipage}{3cm}
						\cref{thm:shadow-removal}\\
						ensure that there is a solution without shadow
					\end{minipage}}};
			
	\end{tikzpicture}};

	\node (noshadow-instance) at ($(shadow-removal)+(1.8,0)$) {$\mathcal{I}'$};

	\node[fill=myGreen!10!white,rounded corners] (flow-aug-CSP) at ($(noshadow-instance)+(2.8,0)$) {
		\begin{tikzpicture}
			 
                  \useasboundingbox (-2.3,-1.9) rectangle (2.3,1.9);
                  \node[text=blue] (water) at (-2.3,1.9) {\huge\faWater};
			\node[align=center] (faCSP-center) at (0,0) {$\mathcal{I}' + A$\\
				\small{$\mathcal{P}_1,\mathcal{P}_2,\mathcal{P}_3$}};
			\node (faCSP-label) at ($(faCSP-center)+(0,-1.8)$) {\textcolor{myViolet}{flow augmentation and CSP}};
			\draw[directededge,ultra thick,bend left] ($(faCSP-center)+(-2,0)$) to (faCSP-center); \node[align=center] (fa-description) at ($(faCSP-center)+(-1,1)$) {\small{\cref{thm:dfl-upgrade-vertices}}};

			\draw[directededge,ultra thick,bend right] (faCSP-center) to 
			node[midway,below,align=center] (CSP-description) {
				\begin{minipage}{2cm}
					\linespread{0.5}
					\scriptsize{constructing}\\\scriptsize{constraints}
				\end{minipage}} ($(faCSP-center)+(2,0)$);

			\node[myGreen] (CSP-constraints) at ($(faCSP-center)+(1.6,1.3)$) {
				\scalebox{0.65}{\begin{minipage}{4cm}
						\setlength{\tabcolsep}{2pt}
						\begin{tabular}{ll}
					$\rho^i_j$ -- &consistency\\
					$R^i_{j,j'}$ -- &\small{downwards-}\\
					&\small{closed}\\
					$\pi^{i,i'}_{j,j'}$ -- &permutation
					\end{tabular}
			\end{minipage}}};
	\end{tikzpicture}};

	\node (CSP-instance) at ($(flow-aug-CSP)+(2.8,0)$) {$\mathcal{C}_2$};
	
	\node[fill=myGreen!10!white,rounded corners] (twin-width-reduction) at ($(CSP-instance)+(1.8,0)$) {
		\begin{tikzpicture}
			\useasboundingbox (-1.3,-1.9) rectangle (1.3,1.9);
                        \node[text=green] (soybeans) at (-1.3,1.9) {\huge\faSeedling};
			\node (tw-center) at (0,0) {};
			\node (tw-label) at ($(tw-center)+(0,-1.8)$) {\textcolor{myViolet}{$\tww$ reduction}};
			\node (tw-vertex) at ($(tw-center)+(0,-0.7)$) {};
			\draw[directededge,ultra thick,bend right] ($(tw-center)+(-1.2,0)$) to (tw-vertex);
			\draw[directededge,ultra thick,bend right] (tw-vertex) to ($(tw-center)+(1.2,0)$);
			
			\node[align=center] (bounded-tww-label) at ($(tw-vertex)+(1.3,-0.2)$) {
				\begin{minipage}{2cm}
					\linespread{0.5}
					\scriptsize{bounded}\\\scriptsize{$\tww$}
			\end{minipage}};
			
			\draw[directededge,ultra thick,in=50,out=130,loop] (tw-vertex) to (tw-vertex)
			node[midway,above] (tw-description) {
				\scalebox{0.7}{\begin{minipage}{3cm}
					\cref{lem:tw-red}\\
				find a vertex that
				is in no solution and
				remove it from the domain
				\end{minipage}}};
	\end{tikzpicture}};

	\node (bounded-tw-instance) at ($(twin-width-reduction)+(1.8,0)$) {$\mathcal{C}'_2$};
	
	\node[fill=myGreen!10!white,rounded corners] (solve-CSP) at ($(bounded-tw-instance)+(1.6,0)$) {
		\begin{tikzpicture}
                  \useasboundingbox (-1.1,-1.9) rectangle (1.1,1.9);
                  \node[text=brown!80!red] (cook) at (-1.3,1.9) {\scalebox{1}[0.5]{\huge\faTrash}};
			\node (csp-center) at (0,0) {};
			\node (csp-label) at ($(csp-center)+(0,-1.8)$) {\textcolor{myViolet}{solve CSP}};
			\draw[directededge,ultra thick] ($(csp-center)+(-1,0)$) -- ($(csp-center)+(1,0)$)
			node[midway,above,align=center] (csp-description) {
				\scalebox{0.8}{\begin{minipage}{3cm}
						\cref{thm:bounded-twin-width}
			\end{minipage}}};
	\end{tikzpicture}};

	\node (solution) at ($(solve-CSP)+(1.5,0)$) {$S$};
\end{tikzpicture}
\caption{This flowchart gives an overview on the structure of the algorithm and where the results from the other sections are used.}
\label{fig:flowchart}
\end{figure}

\paragraph{Shadow removal.}
The first step in the algorithm is to remove vertices from the graph in order to ensure that if there is a solution, then there is also one without a shadow.
Recall the definition of being shadowless from \cref{sec:prelims} and recall \cref{thm:shadow-removal} which we restate for convenience below and prove in \cref{sec:shadow-removal}.

\ShadowRemovalTheorem*

We apply the algorithm of \cref{thm:shadow-removal}, yielding the family $\mathcal{Z}$.
We then iterate over all instances of \dmcthree\ resulting from bypassing a set $Z \in \mathcal{Z}$ in the input graph $G$.
For each such constructed instance we continue with the remainder of the algorithm as described below.
For simplicity, we call the instance of the current iteration $(G, k, (s_i, t_i)_{i \in [3]}, V^{\infty})$.
Note that, from now on, it is enough to find a shadowless solution.
This is not immediately relevant, but we crucially use this property when bounding the twin-width later~on.

\paragraph{Flow-augmentation and reduction to \permcsp.}
We continue with the reduction to \permcsp, however, without at first bounding the twin-width of all constraints.
(Recall that the definition of \permcsp{} can be found in \cref{ss:permCSP} 
 as \cref{def:permCSP}.)

Recall that $(G, k, (s_i, t_i)_{i \in [3]}, V^{\infty})$ is the instance of \dmcthree\ that we are working on.
The reduction to \permcsp\ works as follows.
For each terminal pair $(s_i, t_i)$, $i \in [3]$, run the algorithm from \cref{thm:dfl-upgrade-vertices} with input $s = s_i$, $t = t_i$, the graph $G$, and $k$.
We obtain a triple $(A_i, \mathcal{P}_i,\mathcal{B}_i)$ consisting 
of an arc set $A_i \subseteq V(G) \times V(G)$,
an $s_i t_i$-maxflow $\mathcal{P}_i$ in $G + A_i$,
and a partition $\mathcal{B}_i$ of the deletable arcs on $\mathcal{P}_i$. 
For each $\mathcal{P}_i$ let $k_i$ be its flow value and fix an arbitrary ordering $P_1^i, P_2^i, \ldots, P_{k_i}^i$ of the paths in $\mathcal{P}_i$.
Note that not necessarily $k_i \leq k$, but this is the case if $A_i$ is in a sense compatible with a solution, see the notion of safe augmentation below.
Hence, at this point if for some $i \in [3]$ we have $k_i > k$ we stop and return a failure symbol.
Otherwise we continue and denote $G_i \coloneqq G + A_i$.

In order to define a \permcsp\ instance, for each $i \in [3]$ and $j \in [k_i]$ we introduce a variable $x_j^i$ with domain $D^i_j \coloneqq V(P_j^i) \setminus (\{s_i, t_i\} \cup V^{\infty})$ and a variable $x_j'^i$ with domain $D'^i_j \coloneqq V(P_j^i) \setminus (\{s_i, t_i\} \cup V^{\infty})$.
The ordering $\leq_j^i$ of $D^i_j$ corresponds to a traversal of the path $P_j^i$ from $s_i$ to~$t_i$ and the ordering $\leq'^i_j$ of $D'^i_j$ corresponds to a traversal of the path $P_j^i$ from $t_i$ to~$s_i$.
(Intuitively, $x_{j}^{i}$ and $x'^{i}_{j}$ always choose the same vertex, but we need both orderings of their domain in order to define downwards-closed constraints below.)
Below we let $X$ denote the set of variables $x_{j}^{i}$ and $X'$ the set of variables $x'^{i}_{j}$.

As to the constraints, first, for each $i \in [3]$ and $j \in [k_i]$ we introduce a permutation constraint~$\rho_j^i \colon D^i_j \to D'^i_j$ that ensures that $x_j^i$ and $x'^i_j$ are the same, that is, for each $u \in D^i_j$ let $\rho_j^i(u) = u$.

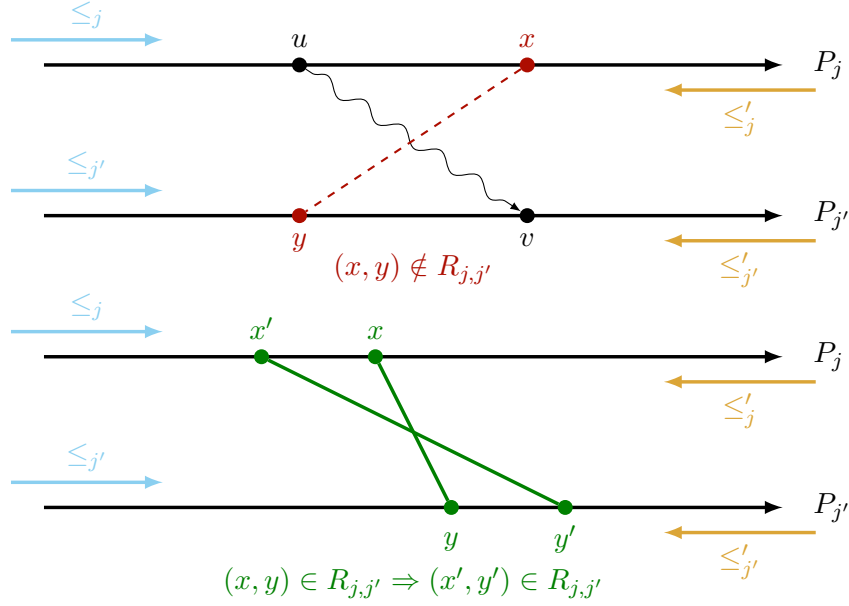
\begin{figure}[t]
	\centering
	\begin{tikzpicture}{scale=0.8}
		\colorlet{myGreen}{green!50!black}
		\definecolor{myRed}{rgb}{0.68, 0.05, 0.0}
		\colorlet{myBlue}{blue!90!black}
		\definecolor{myLightblue}{rgb}{0.54, 0.81, 0.94}
		\colorlet{myViolet}{myBlue!55!myRed}
		\colorlet{myOrange}{yellow!55!myRed}
		
		\tikzstyle{vertex} = [draw, circle, inner sep=.6mm, thick,fill=black]
		\tikzstyle{edge} = [draw=black,line width=1.3pt]
		\tikzstyle{nonedge} = [draw,thick,dashed]
		\tikzstyle{directededge} = [edge,->,-latex]
		\tikzstyle{path} = [decorate,decoration={snake,post length=1mm,pre length=1mm,segment length=17},->,-latex]
		
		\def\vdist{1}
		\def\hdist{5}
		\node (A) at (0,2) {\begin{tikzpicture}
				\node (center) at (0,0) {};
				\node (P-j-start) at ($(center)+(-\hdist,\vdist)$) {};
				\node[label=0:{$P_j$}] (P-j-end) at ($(center)+(\hdist,\vdist)$) {};
				\node (P-jj-start) at ($(center)+(-\hdist,-\vdist)$) {};
				\node[label=0:{$P_{j'}$}] (P-jj-end) at ($(center)+(\hdist,-\vdist)$) {};
				
				\draw[directededge,myLightblue] ($(P-j-start)+(-0.3,\vdist/3)$) -- ($(P-j-start)+(1.7,\vdist/3)$) node [midway, above] {$\leq_{j}$};
				\draw[directededge,myLightblue] ($(P-jj-start)+(-0.3,\vdist/3)$) -- ($(P-jj-start)+(1.7,\vdist/3)$) node [midway, above] {$\leq_{j'}$};
				
				\draw[directededge,myOrange] ($(P-j-end)+(0.3,-\vdist/3)$) -- ($(P-j-end)+(-1.7,-\vdist/3)$) node [midway, below] {$\leq'_{j}$};
				\draw[directededge,myOrange] ($(P-jj-end)+(0.3,-\vdist/3)$) -- ($(P-jj-end)+(-1.7,-\vdist/3)$) node [midway, below] {$\leq'_{j'}$};
				
				\draw[directededge] (P-j-start) to (P-j-end);
				\draw[directededge] (P-jj-start) to (P-jj-end);
				
				\node[vertex,label=90:{$u$}] (u) at ($(center)+(-1.5,\vdist)$) {};
				\node[vertex,myRed,label=90:{\textcolor{myRed}{$x$}}] (x) at ($(center)+(1.5,\vdist)$) {};
				\node[vertex,label=270:{$\vphantom{g}v$}] (v) at ($(center)+(1.5,-\vdist)$) {};
				\node[vertex,myRed,label=270:{\textcolor{myRed}{$\vphantom{g}y$}}] (y) at ($(center)+(-1.5,-\vdist)$) {};
				
				\draw[path] (u) to (v);
				\draw[nonedge,myRed] (x) to (y);
				
				\node at ($(center)+(0,-\vdist*1.7)$) {\textcolor{myRed}{$(x,y) \notin R_{j,j'}$}};
		\end{tikzpicture}};
		\node (B) at (0,-2) {\begin{tikzpicture}
				\node (center) at (0,0) {};
				\node (P-j-start) at ($(center)+(-\hdist,\vdist)$) {};
				\node[label=0:{$P_j$}] (P-j-end) at ($(center)+(\hdist,\vdist)$) {};
				\node (P-jj-start) at ($(center)+(-\hdist,-\vdist)$) {};
				\node[label=0:{$P_{j'}$}] (P-jj-end) at ($(center)+(\hdist,-\vdist)$) {};
				
				\draw[directededge,myLightblue] ($(P-j-start)+(-0.3,\vdist/3)$) -- ($(P-j-start)+(1.7,\vdist/3)$) node [midway, above] {$\leq_{j}$};
				\draw[directededge,myLightblue] ($(P-jj-start)+(-0.3,\vdist/3)$) -- ($(P-jj-start)+(1.7,\vdist/3)$) node [midway, above] {$\leq_{j'}$};
				
				\draw[directededge,myOrange] ($(P-j-end)+(0.3,-\vdist/3)$) -- ($(P-j-end)+(-1.7,-\vdist/3)$) node [midway, below] {$\leq'_{j}$};
				\draw[directededge,myOrange] ($(P-jj-end)+(0.3,-\vdist/3)$) -- ($(P-jj-end)+(-1.7,-\vdist/3)$) node [midway, below] {$\leq'_{j'}$};
				
				\draw[directededge] (P-j-start) to (P-j-end);
				\draw[directededge] (P-jj-start) to (P-jj-end);
				
				\node[vertex,myGreen,label=90:{\textcolor{myGreen}{$x'$}}] (xx) at ($(center)+(-2,\vdist)$) {};
				\node[vertex,myGreen,label=90:{\textcolor{myGreen}{$x$}}] (x) at ($(center)+(-0.5,\vdist)$) {};
				\node[vertex,myGreen,label=270:{\textcolor{myGreen}{$\vphantom{fg}y$}}] (y) at ($(center)+(0.5,-\vdist)$) {};
				\node[vertex,myGreen,label=270:{\textcolor{myGreen}{$\vphantom{fg}y'$}}] (yy) at ($(center)+(2,-\vdist)$) {};
				
				\draw[edge,myGreen] (xx) to (yy);
				\draw[edge,myGreen] (x) to (y);
				
				\node at ($(center)+(0,-\vdist*2)$) {\textcolor{myGreen}{$(x,y) \in R_{j,j'} \Rightarrow (x',y') \in R_{j,j'}$}};
		\end{tikzpicture}};
	\end{tikzpicture}
	\caption{The downwards-closed constraints introduced into the \permcsp\ instance~$\mathcal{C}_1$.
		All identifiers $\leq, \leq', R, P$ to be understood with an index $^i$.}
	\label{fig:downwards-closed-constraints}
\end{figure}

Next, we define the constraints that ensure that for each $i \in [3]$ the vertices selected by the variables form an $s_it_i$-separator.
For each $i \in [3]$ and each pair of variables $x_j^i \in X$, $x'^i_{j'} \in X'$ introduce a constraint $R_{j, j'}^i \subseteq D^i_j \times D'^i_{j'}$ as follows; refer to \cref{fig:downwards-closed-constraints} for an illustration.
At first, put $R_{j, j'}^i = D^i_j \times D'^{i}_{j'}$.
Then, for each $u \in V(P^i_j)$ and $v \in V(P^i_{j'})$ such that there is a path $Q$ in $G_i$ from $u$ to $v$ such that $Q$ is internally vertex-disjoint from each path $P_j^i \in \mathcal{P}_i$, remove from $R_{j, j'}^i$ all pairs $(x, y)$ such that $u \leq^i_j x$ and $v \leq'^i_{j'} y$.
(This means that $u$ occurs before $x$ on path $P^i_j$ and $v$ occurs after $y$ on path $P^i_{j'}$.
Intuitively, no solution may choose $x$ and $y$ because $Q$ bypasses the corresponding vertex set, showing it is not a separator.
Note that the path $Q$ may consist of a single edge and that one of its endpoints may be $s_i$ or $t_i$.)
This finishes the description of the constraint $R^i_{j, j'}$.
(Note that not necessarily $R^i_{j, j'} = R^i_{j', j}$.)
Note that if $(x, y) \in R_{j, j'}^i$ and $x' \leq_j x$ and $y' \leq'_{j'} y$ then $(x', y') \in R_{j, j'}^i$:
Otherwise, at the point where we have removed $(x', y')$ from $R_{j, j'}^i$ in the above construction, say due to some pair $(u, v)$, we have $u \leq_j x' \leq_j x$ and $v \leq'_j y' \leq'_j y$ and thus we would have removed $(x, y)$ as well, a contradiction.
Thus, $R^i_{j, j'}$ is downwards-closed.

We add further constraints to the \permcsp\ instance below.
But first we observe that already at this point, every solution to the \dmcthree\ instance $(G, k, (s_i, t_i)_{i \in [3]}, V^{\infty})$ induces a solution for the \permcsp\ instance and every solution to the \permcsp\ instance induces three separators between the three terminal pairs.
Let $\mathcal{C}_1$ denote the \permcsp\ instance constructed so far, that is, \[\mathcal{C}_1 = (X \cup X', (D^i_j, \leq^i_j, D'^i_j, \leq'^i_j)_{i \in [3], j \in [k_i]}, (\rho^i_j)_{i \in [3], j \in [k_i]}, (R^i_{j, j'})_{i \in [3], j, j' \in [k_i]}).\]
Observe that we can construct $\mathcal{C}_1$ in polynomial time because the algorithm of \cref{thm:dfl-upgrade-vertices} runs in polynomial time and the paths underlying the construction of $R^i_{j, j'}$ can be computed in polynomial time.

For proving the soundness of the algorithm, we need the following definition. 
Let $S$ be a solution to $(G, k, (s_i, t_i)_{i \in [3]}, V^{\infty})$.
Define the event \emph{safely augmented (wrt.~$S$)} as the intersection of the three events $E_i$, $i \in [3]$, that state that there exists a minimal $s_it_i$-separator $S_i$ that is contained in $S$ such that $(A_i, \mathcal{P}_i)$ is compatible with $S_i$.
We also say that the $S_i$ are the \emph{witnesses} to having safely augmented.
Note that, if we have safely augmented, then the flow value $k_i$ of each $\mathcal{P}_i$ is at most $k$.
In order to prove soundness, we make use of the following two statements.

\begin{lemma}\label{lem:dmc-to-csp-probab}
  Let $S$ be an arbitrary fixed solution to $(G, k, (s_i, t_i)_{i \in [3]}, V^{\infty})$.
  Then with probability at least $2^{-\Oh(k^{4} \log k)}$ we have safely augmented wrt.~$S$.
\end{lemma}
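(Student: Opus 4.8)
The plan is to derive the lemma almost directly from \cref{thm:dfl-upgrade-vertices}, together with independence of the three augmentation steps. Fix the solution $S$, so $|S| \le k$, $S \cap V^{\infty} = \emptyset$, and for every $i \in [3]$ the vertex $t_i$ is unreachable from $s_i$ in $G - S$. First I would observe that, since $\term \subseteq V^{\infty}$, every vertex of $S$ is deletable and non-terminal, and since $S$ is an $s_it_i$-separation, it contains some inclusion-wise minimal $s_it_i$-separation $S_i \subseteq S$, which moreover satisfies $|S_i| \le |S| \le k$ and consists only of deletable vertices.

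Next I would invoke \cref{thm:dfl-upgrade-vertices} for the $i$-th run of the algorithm, i.e., the call on input $s = s_i$, $t = t_i$, graph $G$, and integer $k$ that produced the triple $(A_i, \mathcal{P}_i, \mathcal{B}_i)$; in the notation of that theorem, $\mathcal{P}_i$ plays the role of $\witnessflow$. Applying the probabilistic guarantee to the minimal $s_it_i$-separator $S_i$ of size at most $k$ constructed above, we obtain that with probability $2^{-\Oh(k^4 \log k)}$ the pair $(A_i, \mathcal{P}_i)$ is compatible with $S_i$ (using that compatibility of the pair $(A_i,\mathcal{P}_i)$ is, by definition, precisely the conjunction of $A_i$ being compatible with $S_i$ and $\mathcal{P}_i$ being an $s_it_i$-maxflow in $G+A_i$). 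On this event the event $E_i$ holds, witnessed by $S_i$, so $\Pr[E_i] \ge 2^{-\Oh(k^4 \log k)}$.

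Finally, since the three invocations of the algorithm of \cref{thm:dfl-upgrade-vertices} use fresh, mutually independent random bits, the events $E_1, E_2, E_3$ are independent, and
\[
  \Pr[\text{safely augmented w.r.t.\ } S] \;=\; \Pr\!\left[\bigcap_{i \in [3]} E_i\right] \;=\; \prod_{i \in [3]} \Pr[E_i] \;\ge\; \bigl(2^{-\Oh(k^4 \log k)}\bigr)^3 \;=\; 2^{-\Oh(k^4 \log k)},
\]
which is the claimed bound (and, as already noted in the surrounding text, on this event every flow value $k_i$ is at most $k$, so the algorithm does not abort at this step). I do not expect a genuine obstacle here: the argument is essentially bookkeeping, and the only two points requiring a little care are (i) that the separator $S_i$ extracted from $S$ must be a genuinely \emph{minimal} $s_it_i$-separation, so that \cref{thm:dfl-upgrade-vertices} applies to it verbatim, and (ii) that the three augmentation calls are run with independent randomness, so that the per-pair success probabilities multiply rather than being combined only through a union bound.
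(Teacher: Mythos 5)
Your proposal is correct and follows essentially the same argument as the paper: extract a minimal $s_it_i$-separator $S_i \subseteq S$ of size at most $k$ for each $i$, lower-bound $\Pr[E_i]$ by the probability that $(A_i,\mathcal{P}_i)$ is compatible with that specific $S_i$ via \cref{thm:dfl-upgrade-vertices}, and multiply the three bounds using independence of the three augmentation calls. The extra care you flag (minimality of $S_i$, independence of the randomness) is exactly what the paper relies on implicitly, so there is no gap.
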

\begin{proof}
  Let $S_i$, $i \in [3]$, be a minimal $s_it_i$ separator (in $G$) contained in $S$.
  The three events $E_i$ are independent from each other.
  They each have probability lower bounded by the probability of the event that $(A_i, \mathcal{P}_i)$ is compatible with $S_i$.
  By \cref{thm:dfl-upgrade-vertices} they thus each have probability at least $2^{-\Oh(k^4 \log k)}$, which implies the desired bound.
\end{proof}

\begin{lemma}\label{lem:dmc-to-csp-1}
  Assume we have safely augmented and let $S_i$, $i \in [3]$, be witnesses to that fact.
  Then, for each $i \in [3]$ and for each $j \in [k_i]$ we have $|V(P^i_j) \cap S_i| = 1$.
  Define the mapping $\phi$ by defining $\phi(x^i_j)$ and $\phi(x'^i_j)$ both as the single vertex in $V(P^i_j) \cap S_i$ for each $i \in [3]$ and $j \in [k_i]$.
  Then $\phi$ is a solution to the \permcsp\ instance $\mathcal{C}_1$.
\end{lemma}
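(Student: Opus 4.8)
The plan is to check four things in turn: that each flow path $P_j^i$ meets $S_i$ in exactly one vertex (so that $\phi$ is well defined), that $\phi$ sends each variable into its domain, that $\phi$ satisfies all permutation constraints $\rho_j^i$, and that $\phi$ satisfies all downwards-closed constraints $R_{j,j'}^i$. The key structural input throughout is that $S_i$ is a minimum $s_it_i$-separator in $G_i \coloneqq G+A_i$: this follows from $(A_i,\mathcal{P}_i)$ being compatible with $S_i$, since compatibility preserves the set of vertices reachable from $s_i$ (so $t_i$ stays unreachable from $s_i$ in $G_i-S_i$, it being unreachable in $G-S_i$), and by the guarantee of \cref{thm:dfl-upgrade-vertices} the returned flow $\mathcal{P}_i$ certifies $S_i$ as a minimum separator, i.e.\ $|S_i|=k_i$, the value of $\mathcal{P}_i$.

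First I would prove $|V(P_j^i)\cap S_i|=1$ for each $j$. Each $P_j^i$ is an $s_it_i$-path and $S_i$ separates $s_i$ from $t_i$ in $G_i$, so $P_j^i$ meets $S_i$ in at least one vertex, which is necessarily non-terminal. Since distinct paths of $\mathcal{P}_i$ share no deletable vertex and every vertex of $S_i$ is deletable ($S_i\subseteq S$ and $S\cap V^{\infty}=\emptyset$), no vertex of $S_i$ lies on two flow paths, whence $k_i=|S_i|\ge \sum_{j}|V(P_j^i)\cap S_i|\ge k_i$. Hence each intersection has size exactly one and every vertex of $S_i$ lies on a unique flow path. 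We may thus write $V(P_j^i)\cap S_i=\{z_j^i\}$ and set $\phi(x_j^i)=\phi(x_j'^i)=z_j^i$. This vertex lies on $P_j^i$, is not a terminal, and is not in $V^{\infty}$, so $z_j^i\in D^i_j=D'^i_j$ and $\phi$ respects all domains; satisfaction of $\rho_j^i$ is then immediate, as its relation is $\{(u,u):u\in D^i_j\}$ and $(\phi(x_j^i),\phi(x_j'^i))=(z_j^i,z_j^i)$ belongs to it.

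The substantial step is showing $\phi$ satisfies every $R_{j,j'}^i$, i.e.\ $(z_j^i,z_{j'}^i)\in R_{j,j'}^i$. Suppose not; then this pair was removed during the construction of $R_{j,j'}^i$, which means there are $u\in V(P_j^i)$ lying before $z_j^i$ on $P_j^i$ and $v\in V(P_{j'}^i)$ lying after $z_{j'}^i$ on $P_{j'}^i$, together with a path $Q$ in $G_i$ from $u$ to $v$ internally vertex-disjoint from $\mathcal{P}_i$. Concatenating the prefix of $P_j^i$ from $s_i$ to $u$, then $Q$, then the suffix of $P_{j'}^i$ from $v$ to $t_i$, yields an $s_it_i$-walk $W$ in $G_i$. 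I would then argue that $W$ avoids $S_i$ entirely: by the previous step the only vertex of $S_i$ on $P_j^i$ (resp.\ $P_{j'}^i$) is $z_j^i$ (resp.\ $z_{j'}^i$), and the chosen prefix (which stops at $u$, before $z_j^i$) and suffix (which starts at $v$, after $z_{j'}^i$) do not traverse it; the interior of $Q$ is disjoint from $\mathcal{P}_i$, hence from $S_i$; and the gluing vertices $u,v$ lie on flow paths but differ from $z_j^i,z_{j'}^i$. An $s_it_i$-walk avoiding the separator $S_i$ contradicts $S_i$ being an $s_it_i$-separator in $G_i$, so no such removal occurred.

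I expect this last step to be the main obstacle, for two reasons. First, arguing that the glued walk $W$ misses $S_i$ relies crucially on the fact (from the second paragraph) that every vertex of $S_i$ sits on exactly one flow path, together with the precise meaning of ``before''/``after'' in the removal rule; the boundary cases where an endpoint of $Q$ coincides with $z_j^i$ or $z_{j'}^i$ need the strict reading of these relations (or, under a non-strict reading, an extra appeal to the minimality of $S_i$ in $G_i$). Second, pinning down $|S_i|=k_i$ in the second paragraph is exactly where the strength of the compatibility guarantee of \cref{thm:dfl-upgrade-vertices} is used; the rest is bookkeeping.
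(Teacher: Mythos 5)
Your proof is correct and follows essentially the same route as the paper's: the same counting argument (compatibility makes $\mathcal{P}_i$ a maxflow of value $|S_i|$, paths share no deletable vertex, hence exactly one solution vertex per path), and the same contradiction for the downwards-closed constraints via the glued $s_it_i$-walk, with compatibility used to transfer the separation property between $G$ and $G_i=G+A_i$. The boundary case you flag (an endpoint of $Q$ coinciding with $z^i_j$ or $z^i_{j'}$) is indeed glossed over in the paper, which simply asserts the glued path avoids $S_i$; your remark that the strict reading of the removal rule is what makes this step go through is the right fix, whereas the suggested fallback via minimality of $S_i$ alone would not repair a genuinely non-strict removal.
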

\begin{proof}
  Fix $i \in [3]$.
  Since we have safely augmented, $(A_i, \mathcal{P}_i)$ is compatible with $S_i$.
  Thus $\mathcal{P}_i$ is an $s_i t_i$-maxflow of value $|S_i|$, showing that for each $j \in [k_i]$ we have $|V(P^i_j) \cap S_i| = 1$.
  We define $\phi(x^i_j)$, $\phi(x'^i_j)$ as specified in the statement.
  Clearly the permutation constraints $\rho^i_j$ are satisfied.
  Assume towards a contradiction that some downwards-closed constraint $R^i_{j, j'}$ is not satisfied.
  Thus, at some point in the construction of $R^i_{j, j'}$ we have removed $(\phi(x^i_j), \phi(x'^i_{j'}))$ from $R^i_{j, j'}$, say due to some path $Q$ from the vertex $u \in V(P^i_j)$ to the vertex $v \in V(P^i_{j'})$.
  By the construction of $R^i_{j, j'}$, we have $u \leq^i_j \phi(x^i_j)$ and $v \leq'^i_j \phi(x'^i_{j'})$.
  We construct an $s_i t_i$-path $Q'$ in $G_i$ by following $P^i_j$ from $s_i$ to $u$, then adding $Q$, and finally following $P^i_{j'}$ from $v$ to $t_i$.
  Observe that $V(Q') \cap S_i = \emptyset$.
  Since $A_i$ is compatible with $S_i$, this implies that there is an $s_i t_i$-path in $G$, a contradiction to the fact that $S_i$ is an $s_i t_i$-separator.
\end{proof}

In order to prove completeness, we need the following.
\begin{lemma}\label{lem:csp-to-dmc-1}
  If $\phi$ is a solution to the \permcsp\ instance $\mathcal{C}_1$, then for each $i \in [3]$ the set $\{ \phi(x^i_j) \mid j \in [k_i]\}$ is an $s_it_i$-separator in $G$.
\end{lemma}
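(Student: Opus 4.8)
The plan is to prove the contrapositive locally, at one terminal pair. Fix $i \in [3]$, put $z_j := \phi(x^i_j)$ for $j \in [k_i]$ and $Z := \{z_1, \dots, z_{k_i}\}$; note that the permutation constraint $\rho^i_j$ forces $\phi(x'^i_j) = z_j$ as well, and that $z_j \notin \{s_i, t_i\}$ since $s_i, t_i \notin D^i_j$. Because $G$ is a subgraph of $G_i = G + A_i$, a set separating $s_i$ from $t_i$ in $G_i$ also separates them in $G$, so it suffices to prove that $Z$ is an $s_it_i$-separator in $G_i$. Assume not, and let $W = w_0 w_1 \cdots w_m$ be an $s_it_i$-path in $G_i - Z$ (so $w_0 = s_i$, $w_m = t_i$, and $V(W) \cap Z = \emptyset$). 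The goal is to extract from $W$ a subpath certifying that some pair $(z_j, z_{j'})$ was deleted when $R^i_{j,j'}$ was built, contradicting that $\phi$ satisfies $\mathcal{C}_1$.

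Call an index $\ell$ \emph{low} if $w_\ell$ lies on some path $P^i_j \in \mathcal{P}_i$ and is strictly before $z_j$ on $P^i_j$ (in the order $\leq^i_j$). Index $0$ is low, since $s_i$ lies on every flow path and precedes every $z_j$; so there is a largest low index $a$. Index $m$ is not low, since $w_m = t_i$ lies on every $P^i_{j'}$ but always \emph{after} $z_{j'}$; hence $a < m$. Let $b$ be the smallest index with $a < b \le m$ such that $w_b$ lies on some flow path — this exists because $w_m = t_i$ lies on all of them. Pick a flow path $P^i_j$ witnessing that $a$ is low, and a flow path $P^i_{j'}$ containing $w_b$. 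Since $b > a$, the index $b$ is not low; combined with $V(W) \cap Z = \emptyset$, this forces $w_b$ to lie strictly \emph{after} $z_{j'}$ on $P^i_{j'}$.

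Now set $Q := w_a w_{a+1} \cdots w_b$, a subpath of $W$, hence a path in $G_i$ from $u := w_a \in V(P^i_j)$ to $v := w_b \in V(P^i_{j'})$ (if $b = a+1$ this is a single arc). By the minimality of $b$, none of the internal vertices $w_{a+1}, \dots, w_{b-1}$ of $Q$ lies on any path of $\mathcal{P}_i$, so $Q$ is internally vertex-disjoint from every path in $\mathcal{P}_i$. Therefore, during the construction of $R^i_{j,j'}$, this pair $(u,v)$ causes the removal of every $(x,y)$ with $u \leq^i_j x$ and $v \leq'^i_{j'} y$. Since $u$ precedes $z_j$ on $P^i_j$ we have $u \leq^i_j z_j$, and since $v$ comes after $z_{j'}$ on $P^i_{j'}$, it precedes $z_{j'}$ in the reversed order $\leq'^i_{j'}$, so $v \leq'^i_{j'} z_{j'}$; hence $(z_j, z_{j'}) \notin R^i_{j,j'}$. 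But $(\phi(x^i_j), \phi(x'^i_{j'})) = (z_j, z_{j'})$, contradicting that $\phi$ satisfies the constraint $R^i_{j,j'}$ of $\mathcal{C}_1$. Thus $Z$ separates $s_i$ from $t_i$ in $G_i$, and therefore in $G$.

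This argument is not technically demanding; the main point requiring care is that a vertex may lie on several flow paths at once (necessarily an undeletable one), which is why ``low'' and ``before/after $z_j$'' are defined relative to a chosen flow path rather than to the vertex itself — since $W$ avoids $Z$, the position within any fixed flow path is unambiguous, and that is all the construction of $R^i_{j,j'}$ needs. One should also keep straight that $\leq'^i_{j'}$ reverses $P^i_{j'}$, so ``after $z_{j'}$ in path order'' becomes ``$\leq'^i_{j'} z_{j'}$''. (The degenerate case $\lambda_{G_i}(s_i,t_i) = +\infty$, in which no separator exists, never arises once $\mathcal{C}_1$ has been built, since then $k_i > k$ and the algorithm reports failure earlier.)
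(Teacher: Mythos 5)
Your proof is correct and follows essentially the same route as the paper: assume a surviving $s_it_i$-path in $G_i$, extract a subpath that is internally disjoint from the flow paths and jumps from a vertex before $\phi(x^i_j)$ on $P^i_j$ to a vertex after $\phi(x'^i_{j'})$ on $P^i_{j'}$, contradict the construction of $R^i_{j,j'}$, and conclude for $G$ since $G_i$ is a supergraph. Your ``largest low index'' extremal argument is just a slightly more explicit bookkeeping for what the paper does by segmenting the path and finding a segment labelled $(-,+)$.
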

\begin{proof}
  Fix $i \in [3]$ and let $S_i = \{ \phi(x^i_j) \mid j \in [k_i]\}$.
  We show that $S_i$ is an $s_i t_i$-separator in $G_i$.
  Since $G_i$ is a supergraph of $G$ the statement then follows.
  For a contradiction, assume that there is an $s_i t_i$-path $Q$ in $G_i - S_i$.
  Observe that $Q$ has at least one internal vertex that is contained in a path $P^i_j$; otherwise, $\mathcal{P}^i$ would not be an $s_i t_i$-maxflow in $G_i$.
  Divide $Q$ into segments $Q_1, Q_2, \ldots$ such that each segment is of maximal length with respect to not having an internal vertex of a path $P^i_j$, $j \in [k_i]$.
  That is, the endpoints of the segments are either $s_i$, $t_i$, or an internal vertex of a path $P^i_j$.
  Label each segment $Q_p$ by a label in $\{-, +\} \times \{-, +\}$ as follows.
  Let $u, v$ be the first and the last vertex of $Q_p$, respectively, and observe that one of them occurs as internal vertex on a path in $\mathcal{P}_i$.
  The first part of the label of $Q_p$ is $-$ if $u = s_i$ or if $u$ occurs on a path $P^i_j$ before $\phi(x^i_j)$; otherwise the first part of the label is $+$.
  The second part of the label is $+$ if $v = t_i$ or if $v$ occurs on a path $P^i_{j'}$ after $\phi(x^i_{j'})$; otherwise the second part of the label is~$-$.
  Note that, since $Q$ is an $s_i t_i$-path, there is at least one segment with label $(- , +)$, say segment $Q_p$.
  Let $u, v$ be the first and the last vertex of $Q_p$, respectively.
  Without loss of generality, by symmetry, assume that $u \neq s_i$, that is, $u$ appears as internal vertex on a path in $\mathcal{P}_i$, say $P^i_j$.
  If $v$ appears as internal vertex on a path in $\mathcal{P}_i$ then let $P^i_{j'}$ be that path; otherwise take $P^i_{j'}$ to be an arbitrary path in $\mathcal{P}_i$ different from $P^i_j$.
  By construction of $Q_p$, we have $u \leq^i_j \phi(x^i_j)$ and $v \leq'^i_j \phi(x'^i_{j'})$.
  But then, by construction of $R^i_{j, j'}$ we would have removed $(\phi(x^i_j), \phi(x'^i_j))$ from $R^i_{j, j'}$ due to the path $Q_p$, a contradiction.
  Thus, indeed $S_i$ is an $s_i t_i$-separator in $G_i$.
\end{proof}

Next, we aim to add the consistency constraints between variables mentioned in the outline above.
To this end, we iterate over all possibilities of variables being assigned to the same vertex.
More precisely, we iterate over all of the $2^{\Oh(k \log k)}$ partitions $\mathcal{X}$ of the variable set $X$ into at most $k$ parts of size at most three such that no part contains two variables $x_\cdot^i$, $x_{\cdot}^{i'}$ with $i = i'$.
(Intuitively, these capture all possibilities because no two paths in $\mathcal{P}_{i}$ share a vertex and thus no cluster of pairwise equal variables exceeds size three.)
We call this the \emph{consistency iteration}.

Next, for each pair of variables $x_j^i, x_{j'}^{i'}$ in the \emph{same} part in $\mathcal{X}$, restrict their domains to the set of shared vertices of $P_j^i$ and $P_{j'}^{i'}$, that is, replace  both $D_j^i$ and $D_{j'}^{i'}$ by  their intersection $D_j^i \cap D_{j'}^{i'}$.
Perform analogous restrictions to the domains of the variables $x'^i_j, x'^{i'}_{j'}$.
Omit the thereby invalidated bindings from the constraints $\rho^i_j$ and~$R^i_{j, j'}$.

Finally, we introduce the permutation constraints enforcing the guessed consistency relation represented by~$\mathcal{X}$.
For every pair of variables $x^i_j$, $x^{i'}_{j'}$ contained in same part of $\mathcal{X}$, we introduce the constraint $\pi^{i, i'}_{j, j'} \colon D^i_j \to D^{i'}_{j'}$ mapping each $u \in D^i_j$ as $\pi^{i, i'}_{j, j'}(u) = u \in D^{i'}_{j'}$.
This concludes the description of the reduction to \permcsp.
Let $\mathcal{C}_2$ denote the resulting \permcsp\ instance \[(X \cup X', (D^i_j, \leq^i_j, D'^i_j, \leq'^i_j)_{i \in [3], j \in [k_i]}, (\rho^i_j)_{i \in [3], j \in [k]}, (R^i_{j, j'})_{i \in [3], j, j' \in [k_i]}, (\pi^{i, i'}_{j, j'})_{i \neq i' \in [3], j \in [k_i], j' \in [k_{i'}]}).\]
Note that iterating over all possibilities for $\mathcal{X}$ can be done in $2^{\Oh(k \log k)}$ time and thus constructing all the instances $\mathcal{C}_2$ takes $2^{\Oh(k \log k)} \cdot n^{O(1)}$ time.

We can now extend \cref{lem:dmc-to-csp-1,lem:csp-to-dmc-1} to $\mathcal{C}_2$.
To this end, let $S$ be a solution to $(G, k, \allowbreak (s_i, t_i)_{i \in [3]}, \allowbreak V^{\infty})$ and assume we have safely augmented with witnesses $S_i$.
We say that $\mathcal{X}$ \emph{complies (with $S$ and the witnesses $S_i$)} if for each pair $i, i' \in [3]$, each $j \in [k_i]$, and $j' \in [k_{i'}]$ we have that $V(P^i_j) \cap S_i = V(P^{i'}_{j'}) \cap S_{i'}$ if $x^i_j$ and $x^{i'}_{j'}$ are both contained in the same part of $\mathcal{X}$.

\begin{lemma}\label{lem:dmc-to-csp-correct}
  Let $S$ be a solution to $(G, k, (s_i, t_i)_{i \in [3]}, V^{\infty})$ and assume we have safely augmented with witnesses $S_i$.
  Then, one of the partitions considered in the consistency iteration complies.
\end{lemma}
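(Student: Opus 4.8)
The plan is to produce the complying partition by an explicit construction: two variables of $X$ are placed in the same part exactly when the witnesses $S_i$ force them to select the same vertex, and then one checks that this partition is among those enumerated by the consistency iteration and that it complies.

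Concretely, I would start from \cref{lem:dmc-to-csp-1}: since we have safely augmented with witnesses $S_i$, for every $i \in [3]$ and $j \in [k_i]$ the set $V(P^i_j) \cap S_i$ is a single vertex, which I denote $\sigma(x^i_j)$; as $S_i \subseteq S$ avoids $V^{\infty}$ and the terminals, $\sigma(x^i_j) \in D^i_j$. Define, for $x, x' \in X$, the relation $x \sim x'$ iff $\sigma(x) = \sigma(x')$. This is an equivalence relation, and I take $\mathcal{X}$ to be the partition of $X$ into its classes. It then remains to check that $\mathcal{X}$ is one of the partitions iterated over and that it complies.

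For the first point, observe that a class cannot contain two variables $x^i_j, x^i_{j'}$ with the same superscript $i$ and $j \neq j'$: this would make the deletable vertex $\sigma(x^i_j) = \sigma(x^i_{j'})$ lie on both $P^i_j$ and $P^i_{j'}$, contradicting that the paths of the flow $\mathcal{P}_i$ are pairwise disjoint on deletable vertices; in particular every class has at most three elements. Moreover, the map sending each class to the common value $\sigma(\cdot)$ of its members is injective with image contained in $S_1 \cup S_2 \cup S_3 \subseteq S$, so there are at most $|S| \le k$ classes. Hence $\mathcal{X}$ is among the partitions considered in the consistency iteration. Compliance is then immediate: if $x^i_j$ and $x^{i'}_{j'}$ lie in the same class of $\mathcal{X}$, then $\sigma(x^i_j) = \sigma(x^{i'}_{j'})$, which says precisely that the unique vertex of $V(P^i_j) \cap S_i$ equals the unique vertex of $V(P^{i'}_{j'}) \cap S_{i'}$, i.e.\ $V(P^i_j) \cap S_i = V(P^{i'}_{j'}) \cap S_{i'}$, as required.

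I do not expect any genuine obstacle here; the only steps needing care are bounding the class sizes by three (using vertex-disjointness of the flow paths on deletable vertices) and bounding the number of classes by $k$ (using $S_i \subseteq S$ and $|S| \le k$, the latter being guaranteed because we have safely augmented, so each $k_i \le k$).
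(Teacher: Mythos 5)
Your proof is correct and follows essentially the same route as the paper: both define the partition by grouping variables whose flow paths meet the witnesses $S_i$ in the same vertex of $S$ (your $\sigma$-equivalence classes are exactly the paper's sets $X_v$), then verify $|\mathcal{X}|\le k$ and that no part contains two variables from the same flow $\mathcal{P}_i$, with compliance holding by construction. Your write-up just spells out a couple of checks (part size at most three, injectivity into $S$) that the paper leaves implicit.
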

\begin{proof}
  Since we have safely augmented and by \cref{lem:dmc-to-csp-1}, we have $|V(P^i_j) \cap S_i| = 1$ for each $i \in [3]$ and for each $j \in [k_i]$.
  For every vertex $v\in S$ define a set of variables $X_v \coloneqq \{x^i_j \mid V(P^i_j) \cap S_i = \{v\}\}$, which yields a complying partition $\mathcal{X} \coloneqq \{X_v \mid v \in S\}$ of $X$.
  To see that $\mathcal{X}$ is considered in the consistency iteration observe that $|\mathcal{X}| \leq k$.
  Furthermore, for each $i \in [3]$ there are no two variables $x^i_j$, $x^i_{j'}$ in the same part in $\mathcal{X}$, which proves the claim. 
\end{proof}

\begin{lemma}
  Let $S$ be a solution to $(G, k, (s_i, t_i)_{i \in [3]}, V^{\infty})$, assume we have safely augmented with witnesses $S_i$ and assume that $\mathcal{X}$ complies.
  Then, we have $|V(P^i_j) \cap S_i| = 1$ for each $i \in [3]$ and for each $j \in [k_i]$.
  Define the mapping $\phi$ by defining $\phi(x^i_j)$ and $\phi(x'^i_j)$ both as the single vertex in $V(P^i_j) \cap S_i$ for every $i \in [3]$ and $j \in [k_i]$.
  Then, $\phi$ is a solution to the \permcsp\ instance $\mathcal{C}_2$.
\end{lemma}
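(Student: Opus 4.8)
The plan is to bootstrap from \cref{lem:dmc-to-csp-1}, whose hypotheses (a solution $S$ and having safely augmented with witnesses $S_i$) are exactly those assumed here. That lemma already yields $|V(P^i_j) \cap S_i| = 1$ for all $i \in [3]$ and $j \in [k_i]$, so the vertex $v^i_j$ defined as the unique element of $V(P^i_j) \cap S_i$ is well defined, and the map $\phi$ with $\phi(x^i_j) = \phi(x'^i_j) = v^i_j$ is a solution of $\mathcal{C}_1$; in particular $\phi$ satisfies every constraint $\rho^i_j$ and $R^i_{j,j'}$ \emph{before} the domain restrictions of the consistency iteration are performed. So the bulk of the work is done, and what remains is to check that passing from $\mathcal{C}_1$ to $\mathcal{C}_2$ — the domain restrictions, the deletion of invalidated bindings, and the new permutation constraints $\pi^{i,i'}_{j,j'}$ — does not spoil $\phi$.

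First I would verify that $\phi$'s values survive the domain restrictions. When $x^i_j$ and $x^{i'}_{j'}$ lie in the same part of $\mathcal{X}$, compliance gives $V(P^i_j) \cap S_i = V(P^{i'}_{j'}) \cap S_{i'}$, i.e.\ $v^i_j = v^{i'}_{j'} =: v$. This vertex lies on both $P^i_j$ and $P^{i'}_{j'}$, is non-terminal (as $S \cap \term = \emptyset$), and is not undeletable (as $S \cap V^{\infty} = \emptyset$); hence $v \in D^i_j \cap D^{i'}_{j'}$ and likewise for the primed domains, so $v$ is retained when these domains are replaced by their intersections. Consequently $\phi$ still maps each variable into its (restricted) domain. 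Since the construction only removes those bindings of $\rho^i_j$ and $R^i_{j,j'}$ that involve a value no longer present in the restricted domains, the particular bindings $(\phi(x^i_j), \phi(x'^i_j))$ and $(\phi(x^i_j), \phi(x'^i_{j'}))$ that witnessed satisfaction of $\rho^i_j$ and $R^i_{j,j'}$ in $\mathcal{C}_1$ are not deleted; thus $\phi$ satisfies all $\rho$- and $R$-constraints of $\mathcal{C}_2$.

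Next I would check the new permutation constraints. Each $\pi^{i,i'}_{j,j'}$ is introduced precisely for a pair $x^i_j, x^{i'}_{j'}$ in the same part of $\mathcal{X}$, and by construction it is the identity on the common restricted domain. By compliance $\phi(x^i_j) = v = \phi(x^{i'}_{j'})$, so $\pi^{i,i'}_{j,j'}(\phi(x^i_j)) = \phi(x^i_j) = \phi(x^{i'}_{j'})$, and the constraint is satisfied. Collecting the three points, $\phi$ satisfies every constraint of $\mathcal{C}_2$, which is the claim.

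I do not anticipate a genuine obstacle: the only subtlety is making sure that ``omitting invalidated bindings'' is harmless, which reduces to the observation that $\phi$'s values remain inside the restricted domains — and that in turn is exactly what compliance together with $S \cap V^{\infty} = \emptyset$ provides. The rest is bookkeeping against the definitions of $\rho^i_j$, $R^i_{j,j'}$, $\pi^{i,i'}_{j,j'}$, and of a complying partition $\mathcal{X}$.
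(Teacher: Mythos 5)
Your proposal is correct and follows essentially the same route as the paper: invoke \cref{lem:dmc-to-csp-1} for the $\rho$- and $R$-constraints, verify that compliance of $\mathcal{X}$ keeps $\phi$'s values inside the restricted domains (the paper phrases this as a short contradiction argument, you argue it directly via the common vertex $v$ being non-terminal and deletable), and then observe that the identity permutation constraints $\pi^{i,i'}_{j,j'}$ are satisfied because compliance forces $\phi(x^i_j)=\phi(x^{i'}_{j'})$. No gaps.
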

\begin{proof}
  By \cref{lem:dmc-to-csp-1}, $|V(P^i_j) \cap S_i| = 1$ and the mapping $\phi$ satisfies all constraints $\rho^i_j$ and $R^i_{j, j'}$.
  It remains to show that the values of $\phi$ have not been removed from the domains and that the permutation constraints $\pi^{i, i'}_{j, j'}$ are satisfied.
  
  For the claim about the domains, fix some variable $x^i_j$.
  Towards a contradiction, assume that $\phi(x^i_j)$ was removed in the domain-restriction step.
  Then, there is some other variable $x^{i'}_{j'}$ that is in the same part of $\mathcal{X}$ such that $\phi(x^i_j) \notin D^{i'}_{j'}$.
  In other words, $V(P^i_j) \cap S_i \neq V(P^{i'}_{j'}) \cap S_{i'}$.
  Since $\mathcal{X}$ complies, $x^i_j$ and $x^{i'}_{j'}$ are in different parts in $\mathcal{X}$, a contradiction.

  In order to show that the permutation constraints are satisfied, fix some constraint $\pi^{i, i'}_{j, j'}$.
  By construction of $\pi^{i, i'}_{j, j'}$, variables $x^i_j$ and $x^{i'}_{j'}$ are in the same part of $\mathcal{X}$.
  Since $\mathcal{X}$ complies, $V(P^i_j) \cap S_i = V(P^{i'}_{j'}) \cap S_{i'}$.
  Thus, by definition of $\phi$ we have $\phi(x^i_j) = \phi(x^{i'}_{j'})$, that is, $\pi^{i, i'}_{j, j'}$ is satisfied.
\end{proof}

Finally, we prove that solutions to $\mathcal{C}_2$ yield solutions to our \dmcthree\ instance.

\begin{lemma}\label{lem:csp-to-dmc-2}
  If $\phi$ is a solution to the \permcsp\ instance $\mathcal{C}_2$, then the set $\{\phi(x^i_j) \mid i \in [3], j \in [k_i]\}$ is a solution to $(G, k, (s_i, t_i)_{i \in [3]}, V^\infty)$.
\end{lemma}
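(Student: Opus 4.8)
The plan is to derive this from \cref{lem:csp-to-dmc-1} together with the two features that distinguish $\mathcal{C}_2$ from $\mathcal{C}_1$: the domain restrictions and the consistency constraints $\pi^{i,i'}_{j,j'}$. First I would observe that $\phi$, restricted to the variables $X \cup X'$, is a valuation satisfying all constraints of $\mathcal{C}_1$. Indeed, every domain of $\mathcal{C}_2$ is a subset of the corresponding domain of $\mathcal{C}_1$, and each constraint $\rho^i_j$, $R^i_{j,j'}$ of $\mathcal{C}_2$ is obtained from the corresponding constraint of $\mathcal{C}_1$ by discarding the pairs that are no longer in the (smaller) domains; hence any tuple lying in a $\mathcal{C}_2$-relation lies in the corresponding $\mathcal{C}_1$-relation as well. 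Applying \cref{lem:csp-to-dmc-1} then gives, for each $i \in [3]$, that $S_i \coloneqq \{\phi(x^i_j) \mid j \in [k_i]\}$ is an $s_it_i$-separator in $G$.

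Next I would set $S \coloneqq \bigcup_{i \in [3]} S_i = \{\phi(x^i_j) \mid i \in [3], j \in [k_i]\}$ and check that $S$ is a valid cutset of size at most $k$. It separates every terminal pair because $G - S$ is a subgraph of $G - S_i$ for each $i$, and $S_i$ already separates $s_i$ from $t_i$. It contains no terminal and no undeletable vertex, since each $\phi(x^i_j)$ lies in its domain $D^i_j \subseteq V(P^i_j) \setminus (\{s_i,t_i\} \cup V^\infty)$ and $\term \subseteq V^\infty$. For the cardinality bound I would invoke the consistency constraints: the partition $\mathcal{X}$ underlying the construction of this $\mathcal{C}_2$ has at most $k$ parts, and whenever $x^i_j$ and $x^{i'}_{j'}$ (necessarily with $i \neq i'$, by the restriction on $\mathcal{X}$) lie in the same part, the constraint $\pi^{i,i'}_{j,j'}$, which maps $u \mapsto u$, forces $\phi(x^i_j) = \phi(x^{i'}_{j'})$. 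Thus $\phi$ is constant on each part of $\mathcal{X}$, so $|S| \le |\mathcal{X}| \le k$. (Recall that we only construct $\mathcal{C}_2$ when $k_i \le k$ for every $i$, so all variables $x^i_j$ are covered by $\mathcal{X}$.) Combining these observations, $S$ is a solution to $(G, k, (s_i, t_i)_{i \in [3]}, V^\infty)$.

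I do not anticipate a genuine obstacle in this argument; the only point that needs a line of care is the monotonicity observation of the first paragraph — that satisfying the restricted constraints of $\mathcal{C}_2$ and forgetting the $\pi^{i,i'}_{j,j'}$'s yields a satisfying assignment for $\mathcal{C}_1$ — which is exactly what lets us reuse \cref{lem:csp-to-dmc-1} verbatim. Everything else is bookkeeping about which vertices land in the domains and counting the parts of $\mathcal{X}$.
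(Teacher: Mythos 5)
Your proposal is correct and follows essentially the same route as the paper: reuse \cref{lem:csp-to-dmc-1} via the observation that $\mathcal{C}_2$ only shrinks domains and restricts/adds constraints relative to $\mathcal{C}_1$, and then bound $|S|\le k$ because the permutation constraints $\pi^{i,i'}_{j,j'}$ force $\phi$ to be constant on each of the at most $k$ parts of $\mathcal{X}$. Your extra check that the selected vertices avoid terminals and $V^\infty$ is fine bookkeeping that the paper leaves implicit via the domain definitions.
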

\begin{proof}
  Let $S = \{\phi(x^i_j) \mid i \in [3], j \in [k_i]\}$ and for each $i \in [3]$ let $S_i = \{ \phi(x^i_j) \mid j \in [k_i]\}$.
  Note that $S = S_1 \cup S_2 \cup S_3$.
  In comparison to $\mathcal{C}_1$, instance $\mathcal{C}_2$ contains only smaller domains and more constraints.
  Thus, the conclusion of \cref{lem:csp-to-dmc-1} still holds for $\mathcal{C}_2$.
  Hence, each $S_i$ is an $s_i t_i$-separator in~$G$.
  Thus it remains to show that $|S| \leq k$.
  For this, observe that, due to the constraints $\pi^{i, i'}_{j, j'}$, we have that for each part $Y \in \mathcal{X}$ all the variables in $Y$ have the same value.
  Since $|\mathcal{X}| \leq k$ by construction of $\mathcal{X}$ it follows that $|S| \leq k$, as required.
\end{proof}

We now continue working with $\mathcal{C}_2$; first reducing the twin-width of its constraints and then applying the algorithm from \cref{sec:low-tw-csp-algo}.

\paragraph{Twin-width reduction by irrelevant vertices.}
We next show how to bound the twin-width of the constraints in $\mathcal{C}_2$.
First, observe that the twin-width of the constraints $\rho^i_j$ is already bounded.
Recall the definition of grid rank $\gridrank$ from \cref{sec:low-tw-csp-algo}.

\begin{lemma}
  For each $i \in [3]$ and $j \in [k_i]$ we have $\gridrank(\Adj(\rho^i_j)) \leq 1$.
\end{lemma}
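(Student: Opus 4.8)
The plan is to recognise $\Adj(\rho^i_j)$ as the anti-diagonal permutation matrix and then to rule out a rank-$2$ division by an elementary ordering argument.

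First I would unwind the construction. The domains $D^i_j$ and $D'^i_j$ are the same set of vertices (both equal $V(P^i_j)\setminus(\{s_i,t_i\}\cup V^{\infty})$, and after any domain restrictions performed during the consistency iteration they are cut down to the same common subset), the order $\leq'^i_j$ is exactly the reverse of $\leq^i_j$ (backward versus forward traversal of $P^i_j$), and $\rho^i_j$ is the identity map. Writing the $n$ elements of the common domain as $u_1<^i_j\cdots<^i_j u_n$, the columns of $\Adj(\rho^i_j)$ taken in the order $\leq'^i_j$ are $u_n,u_{n-1},\ldots,u_1$, so the unique $1$ of row $a$ lies in column $n+1-a$. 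Hence $\Adj(\rho^i_j)$ is the $n\times n$ reversal (anti-diagonal) permutation matrix.

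Next I would invoke the remark following the definition of grid minors: any rank-$2$ division is in particular a $2$-grid minor, so it suffices to show $\Adj(\rho^i_j)$ has no $2$-grid minor, and then $\gridrank(\Adj(\rho^i_j))\le 1$ follows from the definition of grid rank. Suppose for contradiction there is a $2$-division with consecutive row blocks $(R_1,R_2)$ and column blocks $(C_1,C_2)$ in which every one of the four cells contains a $1$. Pick a $1$ at position $(a,n+1-a)$ inside $R_1\cap C_1$ and a $1$ at position $(a',n+1-a')$ inside $R_2\cap C_2$. Since $R_1$ precedes $R_2$ we have $a<a'$, hence $n+1-a>n+1-a'$; but $n+1-a$ lies in $C_1$ and $n+1-a'$ lies in the later block $C_2$, forcing $n+1-a<n+1-a'$, a contradiction.

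I do not anticipate any real obstacle here: the lemma is an immediate consequence of the fact that a permutation matrix whose row order and column order are mutual reverses admits no $2$-grid minor. The only mild point to check is that the domain restrictions in the consistency iteration do not destroy the ``reversed order'' property, which is clear since restricting both $\leq^i_j$ and $\leq'^i_j$ to the same subset preserves the fact that one is the reverse of the other.
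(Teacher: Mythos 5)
Your proof is correct and follows essentially the same route as the paper: identify $\Adj(\rho^i_j)$ as the anti-diagonal (reversal) permutation matrix, note a rank-$2$ division would give a $2$-grid minor, and derive an ordering contradiction from anti-diagonal $1$s in the upper-left and lower-right cells. Your added remark that the domain restrictions from the consistency iteration preserve the reversed-order structure is a fine (if implicit in the paper) sanity check.
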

\begin{proof}
  Let $M = \Adj(\rho^i_j)$.
  Observe that $M$ is an anti-diagonal matrix.
  Each cell of $M$ with at least two distinct rows or columns thus contains an entry of the anti-diagonal.
  Consider the upper-left cell $C$ in a rank-$k$ division of $M$.
  For a contradiction, assume that $k \geq 2$.
  Then, $C$ contains an entry of the anti-diagonal.
  Consider the cell $C'$ that is south east of~$C$.
  Cell $C'$ contains only zero entries, a contradiction.
\end{proof}

The crucial constraints are the constraints $\pi^{i, i'}_{j, j'}$ and they are indeed a priori not of bounded twin-width.
However, using an irrelevant vertex argument, we can bound their twin-width.

\begin{restatable}{lemma}{TwRedLemma}
	\label{lem:tw-red}
	There exists a computable function $h \colon \N \to \N$ and an algorithm that, given an instance $(G, k, (s_i, t_i)_{i \in [3]}, V^{\infty})$, 
        the constraint $\pi^{i, i'}_{j, j'}$ in $\mathcal{C}_2$ for $i, i' \in [3]$ distinct, $j \in [k_i]$, and $j' \in [k_{i'}]$ and the augmented paths $P^i_j$, $P^{i'}_{j'}$
together with the partitions $\mathcal{B}_{i}$ and $\mathcal{B}_{i'}$
(from \cref{thm:dfl-upgrade-vertices}),
  certifies that $\gridrank(\Adj(\pi^{i, i'}_{j, j'})) \leq h(k)$ or finds a vertex $v \in D^i_j \cap D^{i'}_{j'}$ such that 
  there is no shadowless solution $S$ with $v \in S$ and:
  \begin{itemize}
  \item all vertices before $v$ on $P^i_j$ do not reach $t_i$ and all vertices after $v$ on $P^i_j$ are not reachable from $s_i$ in $G-S$;
  \item all vertices before $v$ on $P^{i'}_{j'}$ do not reach $t_{i'}$ and all vertices after $v$ on $P^{i'}_{j'}$ are not reachable from $s_{i'}$ in $G-S$.
    \end{itemize}
    Moreover, the algorithm runs in fixed-parameter time with respect to~$k$.
\end{restatable}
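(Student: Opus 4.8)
The plan is to reduce to the case that $M := \Adj(\pi^{i,i'}_{j,j'})$ has a large grid minor and then to run an irrelevant-vertex argument on the ``middle'' vertex of that grid minor, powered by the soybean guarantee of \cref{thm:dfl-upgrade-vertices}. Concretely, the algorithm runs \cref{thm:grid-minor-apx} on $M$ with threshold $\ell := \ell(k)$, a sufficiently large computable function of $k$ fixed at the end. If the outcome is a certificate that $\gridrank(M) \le f(\ell)$, where $f$ is the function of \cref{thm:grid-minor-apx}, we return it and set $h(k) := f(\ell(k))$; this is the first alternative. Otherwise we obtain an $\ell$-grid minor of $M$: a partition of the rows into bands (intervals of positions along $P^i_j$, read from $s_i$ to $t_i$) and of the columns into bands (intervals of positions along $P^{i'}_{j'}$, read from $s_{i'}$ to $t_{i'}$), with a $1$ in every cell. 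Since every $1$ of $M$ lies at an entry $(u,u)$ for a vertex $u \in D^i_j \cap D^{i'}_{j'}$ on both $P^i_j$ and $P^{i'}_{j'}$, reading off the $1$ in a central cell yields a vertex $v \in D^i_j \cap D^{i'}_{j'}$, which the algorithm returns. The running time is that of \cref{thm:grid-minor-apx}, i.e.\ $f(\ell(k)) \cdot n^{\Oh(1)}$, hence \FPT; everything else is the correctness proof that this $v$ is irrelevant.

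So suppose for contradiction that $S$ is a shadowless solution with $v \in S$ satisfying the two conditions displayed in the statement. Since $\card{S} \le k$ while $\ell$ is large, a standard cleaning argument lets us assume that the auxiliary vertices extracted below lie outside $S$ and that $S$ does not obstruct the short subpaths of $P^i_j$ and $P^{i'}_{j'}$ used in the routing; we still have an $\ell'$-grid minor with $\ell' = \ell - \Oh(k)$. Picking a $1$ from every cell lying in a row band \emph{before} $v$'s row band and a column band \emph{after} $v$'s column band yields $\Omega(\ell'^2)$ \emph{type-1} vertices, each strictly before $v$ on $P^i_j$ and strictly after $v$ on $P^{i'}_{j'}$; symmetrically the cells after $v$'s row band and before $v$'s column band give $\Omega(\ell'^2)$ \emph{type-2} vertices, after $v$ on $P^i_j$ and before $v$ on $P^{i'}_{j'}$. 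These picked $1$'s lie in pairwise distinct rows and columns since $M$ is a partial permutation matrix, so the vertices are distinct; all of them are deletable and lie outside $S$.

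Next one reads off the connectivity the hypotheses force in $G - S$. Write $i''$ for the third index. For a type-1 vertex $u$: since $u$ precedes $v$ on $P^i_j$, the first condition gives that $u$ does not reach $t_i$; since $u$ follows $v$ on $P^{i'}_{j'}$, the second gives that $s_{i'}$ does not reach $u$; and since $u$ is a non-terminal outside $S$ and $S$ is shadowless, $u$ reaches some $t_b$ and is reached from some $s_a$ in $G-S$. Hence $u$ reaches $t_{i'}$ or $t_{i''}$ and is reached from $s_i$ or $s_{i''}$. Symmetrically each type-2 vertex $u'$ does not reach $t_{i'}$ and is not reached from $s_i$, so it reaches $t_i$ or $t_{i''}$ and is reached from $s_{i'}$ or $s_{i''}$. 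Labelling each auxiliary vertex by which source/sink options it realizes --- only $\Oh(1)$ labels, as there are only three terminal pairs --- and applying pigeonhole, we keep $\Omega(\ell'^2)$ type-1 and $\Omega(\ell'^2)$ type-2 vertices, each with a fixed label.

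Finally, the soybean guarantee of \cref{thm:dfl-upgrade-vertices} supplies the ``robust detour'' connectivity needed to close the argument: pigeonholing the kept vertices into one of the $c(k)$ classes of $\mathcal{B}_i$ (resp.\ $\mathcal{B}_{i'}$) and then splitting them into two interlaced sets of size $\ge q(k,p)$ along $P^i_j$ (resp.\ $P^{i'}_{j'}$) produces $p$ pairwise vertex-disjoint soybeans, at least $p - k$ of which survive in $G - S$; taking $p$ large enough forces a surviving soybean whose two walks run through auxiliary vertices of the kept types, which gives in $G - S$ a single vertex reaching two of our auxiliary vertices and a single vertex reached from both. Chaining these detours with the clean portions of $P^i_j$ and $P^{i'}_{j'}$ and with the source/sink connectivity fixed by the labels, one obtains an $s_it_i$-walk or an $s_{i'}t_{i'}$-walk in $G - S$ (contradicting that $S$ is a solution), or forces one of the auxiliary vertices into the shadow of $S$ (contradicting shadowlessness). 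Choosing $\ell(k)$ large enough to absorb the cleaning, the label-pigeonhole, and the applications of \cref{thm:dfl-upgrade-vertices} --- so that $\ell(k) = 2^{\Oh(k^{3}\log k)}$ and $h(k) = f(\ell(k))$ are well-defined and computable --- completes the plan. I expect this last step to be the main obstacle: arranging the case analysis so that every combination of labels of the kept auxiliary vertices together with the soybean detours collapses to a forbidden walk (or a shadow vertex) is precisely where only three terminal pairs can be afforded --- with four pairs the analogous analysis has an escape case, matching the \W[1]-hardness of~\cite{PilipczukW18a} --- and it requires care about which soybean endpoints and which orientations of $P^i_j$ and $P^{i'}_{j'}$ are actually usable.
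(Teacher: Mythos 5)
Your setup matches the paper's first half: run \cref{thm:grid-minor-apx}, either certify bounded grid rank or extract a large grid minor, take the representative of the middle cell as $v$, use the split hypotheses to classify the other representatives by quadrant, invoke shadowlessness, and use the soybeans of \cref{thm:dfl-upgrade-vertices} with a ``most soybeans avoid $S$'' count. But the heart of the lemma --- the argument that actually derives a contradiction from $v\in S$ --- is exactly the step you defer (``I expect this last step to be the main obstacle''), and your sketch of it would not go through as described. First, you work only with the two \emph{mixed} quadrants (before on $P^i_j$/after on $P^{i'}_{j'}$ and vice versa), for which shadowlessness yields only disjunctive labels; the paper's proof hinges on the two \emph{pure} quadrants, where the hypotheses plus shadowlessness pin down the third terminal pair unconditionally: every representative before $v$ on both paths must reach $t_{i''}$, and every representative after $v$ on both paths must be reachable from $s_{i''}$ in $G-S$. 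Second, your plan to ``chain the detours with the clean portions of $P^i_j$ and $P^{i'}_{j'}$'' is unsound, because the flow paths live in $G+A_i$ and may consist of augmentation arcs absent from $G$; all connectivity in $G-S$ must come from the soybeans themselves (this is precisely why the strengthened flow-augmentation is needed). Third, the contradiction you aim for ($s_it_i$- or $s_{i'}t_{i'}$-walk, or a shadow vertex) is not the one that materializes; the paper produces an $s_{i''}t_{i''}$-path.

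Concretely, the missing idea is a \emph{two-stage soybean relay between quadrants}. Stage one: on $P^{i'}_{j'}$, interlace representatives of the all-after quadrant (reachable from $s_{i''}$) with representatives of a mixed quadrant; a soybean disjoint from $S$ forces its common start vertex to be unreachable from $s_i$ and $s_{i'}$, hence (shadowlessness) reachable from $s_{i''}$, and this reachability propagates along the other walk to the mixed-quadrant representative. Discarding the at most $k$ soybeans hit by $S$ still leaves many mixed-quadrant representatives now known to be reachable from $s_{i''}$. Stage two: on $P^i_j$, interlace these surviving representatives with representatives of the all-before quadrant (which reach $t_{i''}$); a surviving soybean forces its common end vertex to reach $t_{i''}$, hence some representative is simultaneously reachable from $s_{i''}$ and reaches $t_{i''}$ in $G-S$, contradicting that $S$ cuts the third pair. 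Two further technical ingredients you omit are needed to make the interlacing legal: the soybean guarantee applies only to interlaced sets inside a single block $B$ of $\mathcal{B}_i$ (resp.\ $\mathcal{B}_{i'}$), and since stage two reuses the representatives filtered in stage one, a per-path pigeonhole at the end is too late --- the paper instead applies a bipartite Ramsey argument over the $(c(k))^2$ colour pairs $(B_1,B_2)$ to pass to a coarser grid minor whose representatives are all monochromatic before anything else is done, and it sets the grid size via the nested thresholds $q(k,\,q(k,k+1)+1+k)$ so that both stages survive the loss of up to $k$ soybeans.
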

The proof is given in \cref{sec:tw-red}.

Note that if the instance is safely augmented and $\mathcal{X}$ complies with $S$, then $P^i_j$ contains a unique vertex of $S_i$, $P^{i'}_{j'}$ contains a unique vertex of $S_{i'}$, and this is
the same vertex. \cref{lem:tw-red} returns $v$ that is \emph{guaranteed not to be the said vertex}, so we can remove it from the domains of the variables corresponding to $P^i_j$ and $P^{i'}_{j'}$.
Formally, we use \cref{lem:tw-red} as follows.
We iterate over all consistency constraints. 
That is, for each $i, i' \in [3]$, $j \in [k_i]$, and $j' \in [k_{i'}]$, we consider the constraint $\pi^{i, i'}_{j, j'}$ in $\mathcal{C}_2$.
We iteratively apply the algorithm of \cref{lem:tw-red} to it. 
If it returns a vertex $v$, 
then we remove $v$ from both the domains $D^i_j$ and $D^{i'}_{j'}$ of $x^i_j$ and $x^{i'}_{j'}$, and repeat.
If it returns that the grid rank is at most $h(k)$, we continue to the next constraint.
Since we may drop at most $n$ vertices from a domain, this eventually leads to an empty domain for a variable, that is, a no-instance, or a situation in which for every constraint $\pi^{i, i'}_{j, j'}$ in $\mathcal{C}_2$ the matrix has grid rank bounded by $h(k)$.
In the latter case, we apply \cref{thm:bounded-twin-width} to solve $\mathcal{C}_2$ in fixed-parameter time with respect to~$k$.

We are now able to prove \cref{thm:dmcthree-fpt}.

\DMCthreeFPTTheorem*
\begin{proof}%
  We claim that the algorithm described in this section solves $(G, k, (s_i, t_i)_{i \in [3]}, V^{\infty})$ in fixed-parameter time with respect to $k$ with large-enough probability.
  By the arguments given throughout the section, the algorithm indeed runs in fixed-parameter time.
  If the algorithm returns a vertex set $S$, then $S$ is a solution to $(G, k, (s_i, t_i)_{i \in [3]}, V^{\infty})$ by \cref{lem:csp-to-dmc-2}.

  Assume now that there is a solution $S$ to $(G, k, (s_i, t_i)_{i \in [3]}, V^{\infty})$.
  By \cref{thm:shadow-removal}, we may assume that $S$ is shadowless.
  By \cref{lem:dmc-to-csp-probab}, with probability at least $2^{-\Oh(k^4 \log k)}$ the algorithm safely augmented with respect to $S$.
  By \cref{lem:dmc-to-csp-correct}, one of the considered $\mathcal{X}$ complies with $S$.
  Thus, by \cref{lem:csp-to-dmc-2}, the \permcsp\ instance $\mathcal{C}_2$ has a solution.
  By \cref{lem:tw-red}, $\mathcal{C}_2$ maintains having a solution even after removing the vertices computed in \cref{lem:tw-red} from the respective domains.
  Thus, the algorithm of \cref{thm:bounded-twin-width} returns a solution, as required.
\end{proof}

\section{Irrelevant vertex rule---Proof of \cref{lem:tw-red}}
\label{sec:tw-red}

In this section, we give a tool that reduces the complexity of complicated permutation constraints in $\mathcal{C}_2$.
In fact, we argue that due to the shadow removal (as described in \cref{sec:shadow-removal}) and an improved version of the flow augmentation (\cref{thm:dfl-upgrade}), we obtain the following.
If the permutation constraint has a high grid rank ($\gridrank$), then we can explicitly find a vertex $v$ that cannot 
play the role of the solution vertex in the given permutation constraint.

First, we provide some very brief intuition.
Shadow removal grants us some additional information about the reachability relation with respect to the terminals:
For example, if some vertex cannot be reached from $s_1$ and $s_2$ after removal of a shadowless solution $S$, then it must be reachable from $s_3$.
Symmetrically, if a vertex cannot reach $t_1$ and $t_2$, it has to reach~$t_3$.
Using the large grid rank, this additional reachability can be easily extended first along flow path $P^{i'}_{j'}$ and then along flow path $P^i_j$, which gives a forbidden path $s_3t_3$-path.
However, this does not prove the existence of such an $s_3t_3$-path in the original graph, as the flow paths might contain augmented edges.
Therefore, we make use of the improved flow augmentation (\cref{thm:dfl-upgrade}) in order to still reach a contradiction.
In particular, if $S$ satisfies the additional assumption stated in \cref{lem:tw-red} (which still follows from the flow augmentation), the soybeans (introduced in \cref{ss:dfl}) provide the needed connectivity without using the augmented edges.
Unfortunately, their structure is (and has to be; see \cref{sec:soybean}) a bit more complicated than just a simple path, but using additional shadow removal arguments, we are able to derive the existence of an $s_3t_3$-path and therefore, the desired contradiction.

Note one more complication:
The soybeans may use vertices in other flow paths, but as they are vertex disjoint, at most $k$ of them can intersect $S$, so if we set the threshold high enough, we still reach the contradiction.

\TwRedLemma*

\begin{proof}%
  \newcommand{\splitP}{\zeta}
  \newcommand{\splitPv}{2q(k,q(k,k+1)+1+k)+1}
\newcommand{\division}{2\zeta}
\newcommand{\soybeansB}{q(k,q(k,k+1)+1+k)}
\newcommand{\soybeansRes}{q(k,k+1)+1+k}
\newcommand{\soybeansResTwo}{q(k,k+1) +1}

\newcommand{\df}{\coloneqq}

For simplicity, denote $P_j = P^i_j$ and $P_{j'} = P^{i'}_{j'}$.
Without loss of generality suppose that $i=1$ and $i'=2$.

For a (sub)matrix $A$ of $\Adj(\pi^{1,2}_{j,j'})$, we refer to the vertices that correspond to domains of $A$ as the \emph{vertices corresponding to $A$}.
Note that whenever there is a $1$ in the (sub)matrix $A$, it means that the corresponding two domain values are, in fact, the same vertex in $G$.

Recall functions $c$ and $q$ of \cref{thm:dfl-upgrade-vertices}.
We set $\splitP\df \splitPv$.
Let $\rho$ be the bipartite Ramsey number for $(c(k))^2$ colors and a
monochromatic biclique of size $\division$; that is,
every edge coloring of $K_{\rho,\rho}$ with $(c(k))^2$ colors contains
a monochromatic copy of $K_{\division,\division}$. 
We apply \cref{thm:grid-minor-apx}
to $\Adj(\pi^{1,2}_{j,j'})$ and $\rho$
to either obtain that $\gridrank(\Adj(\pi^{1,2}_{j,j'}))$ is bounded
by $2^{\Oh(\rho \log \rho)}$ (and a computable function of $k$), 
or obtain an $\rho$-division of $\Adj(\pi^{1,2}_{j,j'})$ that has at least one
$1$ in every cell.
In the first case, we return that $\gridrank(\Adj(\pi^{1,2}_{j,j'}))$ is bounded by a computable function of $k$,
   in the second case we proceed further; we leverage the obtained $\rho$-division to obtain the desired irrelevant vertex $v$.

We fix one such entry with value 1 for every $(p,q)$-cell of the said division, which corresponds to a vertex, and call it the \emph{$(p,q)$-representative}.
For every cell $(p,q)$, we color it with a pair $(B_1,B_2)$, where
for $\iota = 1,2$ the set $B_\iota \in \mathcal{B}_\iota$ is the set
containing the $(p,q)$-representative in the application of \cref{thm:dfl-upgrade-vertices}
for the pair $s_\iota, t_\iota$. 
Clearly, there are at most $(c(k))^2$ colors. 
By the choice of $\rho$, there is a coarser $\division$-division of $\Adj(\pi^{1,2}_{j,j'})$
such that we can choose a representative in every cell of this division
such that all representatives are of the same color $(B_1,B_2)$. 
In what follows, we only work with the latter division and representatives
and hence use the name $(p,q)$-representative for them.
For a (sub)matrix $A$ of $\Adj(\pi^{1,2}_{j,j'})$, we say \emph{representatives} for the set of all $(p,q)$-representatives within $A$. 

We choose $v$ as the $(\splitP,\splitP)$-representative.
Now, towards a contradiction, suppose that $v\in S$
and $v$ splits $P_j$ and $P_{j'}$ as in the lemma statement.

This split in particular applies to the representatives.
The first part of $P_j$ (resp.\ $P_{j'}$) contains the vertices not reaching $t_1$ (resp.\ $t_2$) up to the cell index $i=\splitP -1$ (resp.\ $j=\splitP -1$) and the second part contains vertices not reachable from $s_1$ (resp.\ $s_2$) in $G-S$.
Based on the above, we split the cells of the matrix $\Adj(\pi^{1, 2}_{j, j'})$ into four quadrants and take submatrices consisting only of their representatives: $A_{t12}$, $A_{t1s2}$, $A_{t2s1}$, and $A_{s12}$, where the subscript indicates the non-reachability of the quadrant.
More formally, $A_{t12}$ consists of the $(p,q)$-representatives for $1\le p,q \le \splitP -1$
(and they do not reach $t_1$ nor $t_2$ in $G-S$),
$A_{t1s2}$ consists of the $(p,q)$-representatives for $1\le p \le \splitP -1$ and $\splitP +1 \le q\le \division$ (and they do not reach $t_1$ and are not reachable from $s_2$ in $G-S$),
$A_{t2s1}$ consists of the $(p,q)$-representatives for $1\le q \le \splitP -1$ and $\splitP +1 \le p\le \division$ (and they do not reach $t_2$ and are not reachable from $s_1$ in $G-S$), and finally
$A_{s12}$ consists of the $(p,q)$-representatives for $\splitP +1 \le p,q \le \division$
(and they are not reachable from $s_1$ nor $s_2$ in $G-S$).

As $S$ is shadowless thus all vertices in $A_{t12}$ have to reach $t_3$ and all vertices in $A_{s12}$ have to be reachable from $s_3$ in $G-S$.
In order to make use of the soybeans given by the flow-augmentation (\cref{thm:dfl-upgrade-vertices}) we need to define suitable interlaced sets.
Let $\xi_o$ be the vertices of $P_{j'}$ corresponding to $(p,q)$-representatives of $A_{s12}$ for every odd $p\ge\splitP+1$ and any one fixed $q\ge\splitP +1$,
and let $\xi_e$ be the set of vertices of $P_{j'}$ corresponding to $(p,q)$-representatives of $A_{t1s2}$ for every even $p\ge\splitP+1$ and $q\df p-\splitP$.
It follows that $\xi_o$ and $\xi_e$ form interlaced sets on path $P_{j'}$ of size
	\[\lfloor\splitP/2\rfloor= \lfloor(\splitPv)/2\rfloor\ge \soybeansB.\]
	
  Hence we obtain $\soybeansRes$ $\xi_o,\xi_e$-soybeans by \cref{thm:dfl-upgrade-vertices} (note that the interlaced sets are within one set $B_2 \in \mathcal{B}_2$).
  Let $Q_1Q_2$ be one of these $\xi_o\xi_e$-soybean that does not intersect $S$.
  Pick two vertices $v_1 \in \xi_o$ and $v_2 \in \xi_e$.
  As $v_1$ corresponds to a vertex of $A_{s12}$, vertex $v_1$ is 
not reachable from $s_1$ and $s_2$ in $G-S$.
As $Q_1Q_2$ is disjoint with the solution $S$, 
$s(Q_1Q_2)$ is not reachable from $s_1$ nor $s_2$ in $G-S$.
Since $S$ is shadowless, $s(Q_1Q_2)$ is reachable from $s_3$ in $G-S$.
Again as $Q_1Q_2$ is disjoint with the solution $S$, $v_2$ is reachable from $s_3$ in $G-S$.

For an illustration, consider the left part of \cref{fig:tw-red-soybeans}.
Some of the $\xi_o,\xi_e$-soybeans may intersect $S$, but as they are pairwise vertex disjoint, at most $k$ of them can.
Hence, the above properties hold for $\soybeansResTwo$ of them.
We restrict the submatrix $A_{t1s2}$ only to entries in $\xi_e$ that correspond to $\xi_o,\xi_e$-soybeans that are disjoint from $S$.
We call the resulting submatrix $A_{t1s2}'$.
Observe that $A_{t1s2}'$ has at least $\soybeansResTwo$ columns with at least one $1$.

\begin{sidewaysfigure}
	\centering
  \includegraphics[scale=0.59]{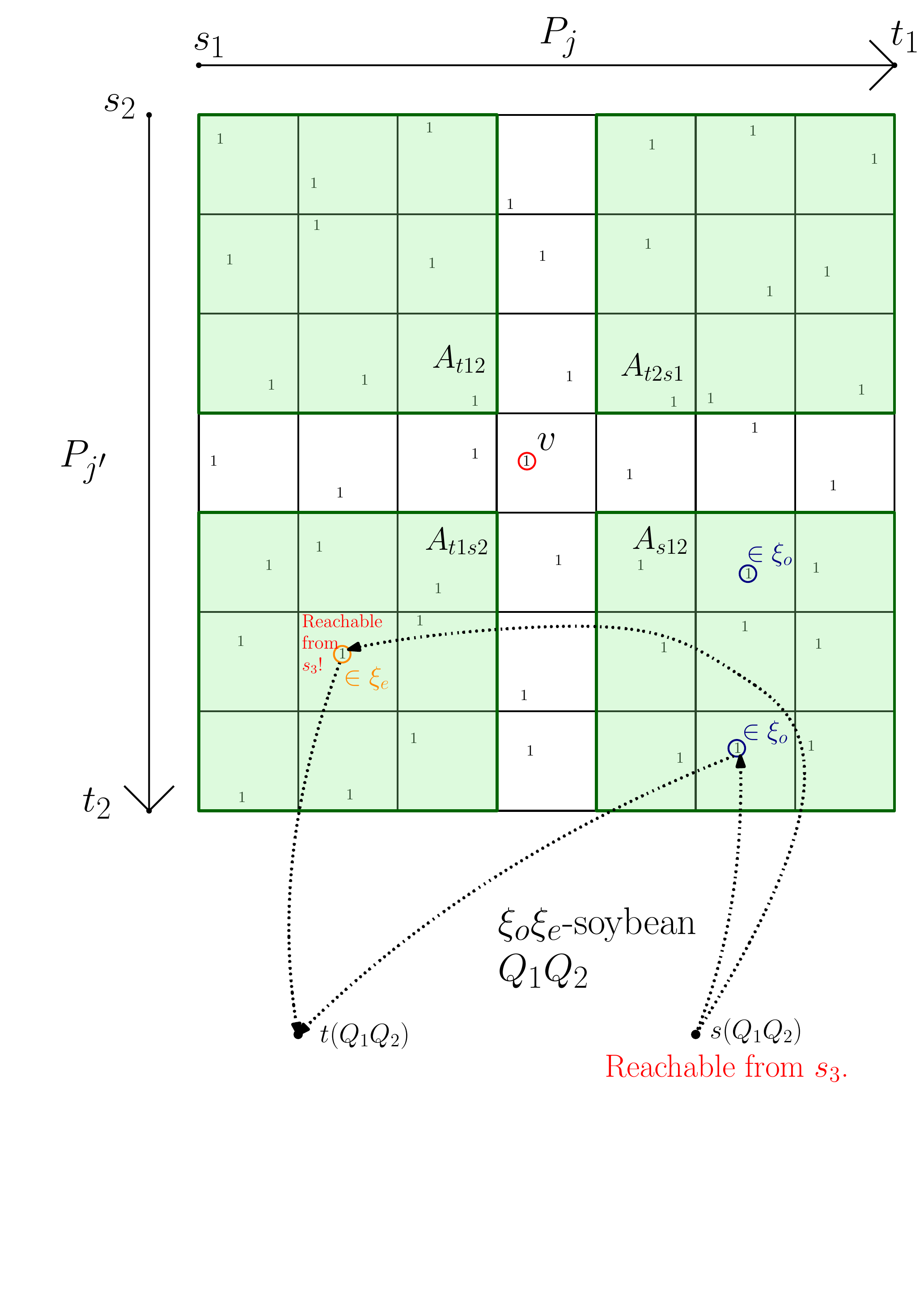}\qquad
  \includegraphics[scale=0.59]{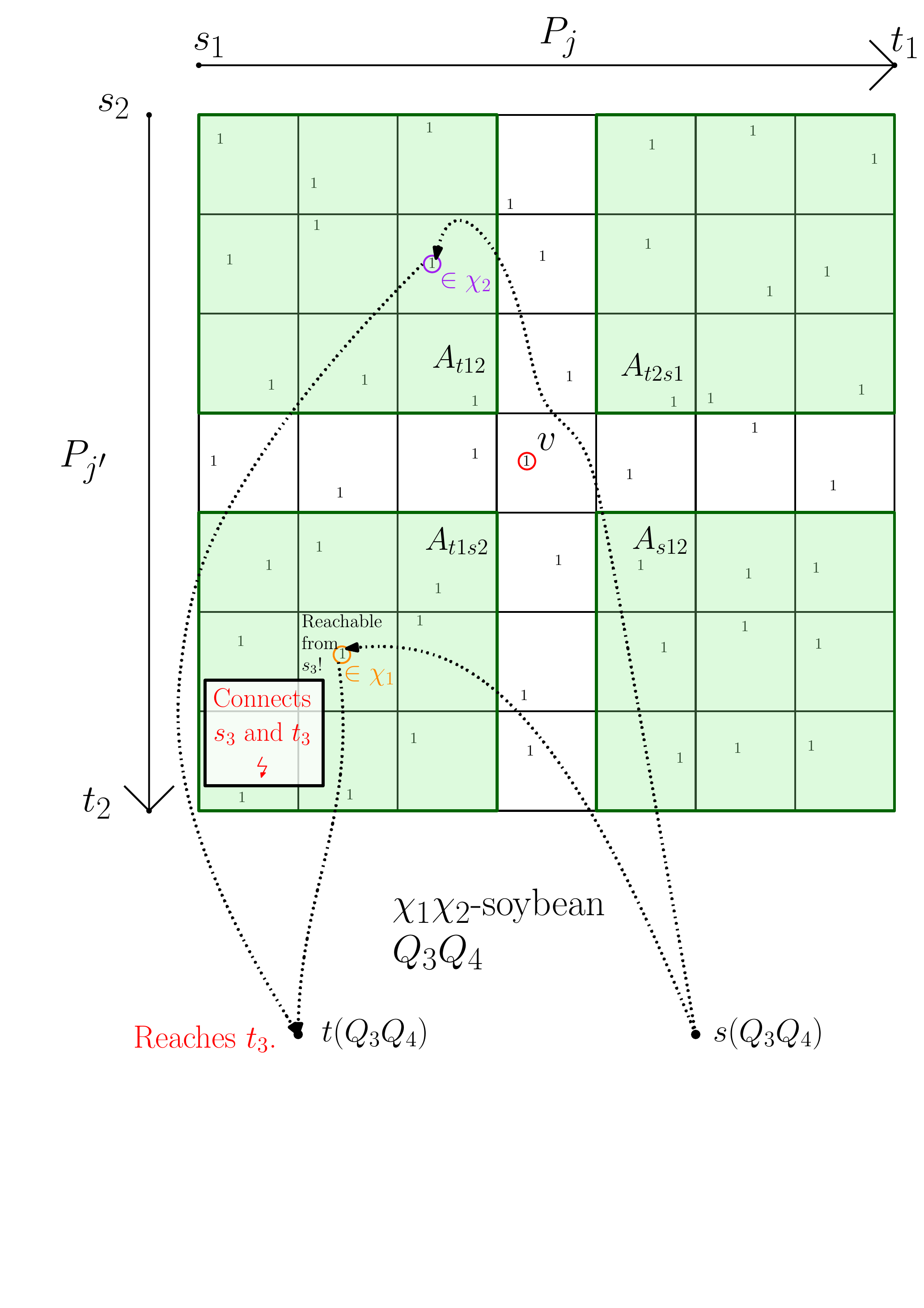}
	\caption{
    The \textbf{left} part shows a basic layout of the matrix $\Adj(\pi^{1, 2}_{j, j'})$. 
Each cell might contain multiple $1$ entries, but we show only the representatives in the picture.
The vertex $v$ is marked by a red circle.
The submatrices $A_{t12}$, $A_{t1s2}$, $A_{t2s1}$, and $A_{s12}$ have a light-green background.
The interlaced sets $\xi_o$ and $\xi_e$ on $P_{j'}$ are marked in blue and orange.
One $\xi_o\xi_e$-soybean is depicted using dotted lines.
If such a soybean does not intersect $S$, we derive the conclusions stated in red.
The \textbf{right} part shows the interlaced sets $\chi_1$ (marked orange) and $\chi_2$ (marked purple) on $P_{j}$.
Note that some filtering steps have to happen in the proof, so only vertices for which we derived the conclusion (Reachable from $s_3$!) are part of $\chi_1$.
One $\chi_1\chi_2$-soybean is depicted using dotted lines.
If such a $\chi_1\chi_2$-soybean does not intersect $S$, we derive the conclusions stated in red and, therefore, a contradiction.
        }\label{fig:tw-red-soybeans}
\end{sidewaysfigure}

Now, we construct another pair of interlaced sets.
Let $\chi_1$ be the set of vertices that are contained in $\xi_e$ and that correspond to the soybeans
selected in the previous step (i.e., to the representatives in $A_{t1s2}'$, except for the last one on the path $P_j$).
Note that by our previous construction, for $(p,q)$-representatives in $\chi_1$, all values of $q$ are pairwise different and of the same parity.
For every $(p,q)$-representative in $\chi_1$, we insert into $\chi_2$ a $(p',q')$-representative of $A_{t12}$ for $q' = q+1$ and any fixed $p \leq \splitP-1$.
It follows that $\chi_1,\chi_2$ are interlaced sets on path $P_{j}$ of size $q(k,k+1)$,
and they are within the same set $B_1 \in \mathcal{B}_1$. (We removed the last representative in $\chi_1$ as it could be the in the $(p,\splitP-2)$-cell of $\Adj(\pi^{1, 2}_{j, j'})$ for some $p\ge\splitP+1$.)
Hence we obtain $k+1$ vertex-disjoint $\chi_1,\chi_2$-soybeans by \cref{thm:dfl-upgrade-vertices}.
At most $k$ of them can intersect $S$.
Therefore let $Q_3Q_4$ be one $\chi_1,\chi_2$-soybean that does not.
We know that $v_3\in Q_3 \cap \chi_1$ is reachable by $s_3$ in $G-S$ by the arguments in the previous paragraph.

We know that $v_4\in Q_4\cap A_{t12}$ does not reach $t_1$ nor $t_2$ in $G-S$.
Hence, $t(Q_3Q_4)$ does not reach $t_1$ nor $t_2$.
Since $S$ is shadowless, $t(Q_3Q_4)$ reaches $t_3$ in $G-S$.
Since $Q_3Q_4$ is disjoint with the solution, $v_3$ reaches $t_3$ in $G-S$.
This is the desired contradiction, see the right part of \cref{fig:tw-red-soybeans} for this case.
\end{proof}

\section{Shadow removal}\label{sec:shadow-removal}
In this section, we prove \cref{thm:shadow-removal}.
Recall that $(G,k, (s_i,t_i)_{i\in [3]}, V^{\infty})$ is an instance of \dmcthree. 
A set $W \subseteq V(G)$ is called \emph{thin} if for every $v \in W$, $v \not \in \rr_G(W\setminus v)$.
For two sets $A,B \subseteq V(G)$ such that $A \cap B = \emptyset$, an \emph{$(A,B)$-separator} is a set $S \subseteq V(G) \setminus (A \cup B)$ such that $G-S$ has no path from any vertex of $A$ to any vertex of $B$.
Let $R^+_G(A)$ be the set of vertices that are reachable from some vertex of $A$ in $G$.
An \emph{$(A,B)$-important separator} is an inclusionwise minimal set $S \subseteq V(G) \setminus (A \cup B)$ such that $G-S$ has no path from $A$ to $B$ and there is no $(A,B)$-separator $S'$ such that $|S'| \leq |S|$ and $R^+_{G - S}(A) \subset R^+_{G-S'}(A)$.
To prove \cref{thm:shadow-removal}, we use the tool of random sampling of important separators from~\cite{marx:multicut,dir-mwc,dsfvs}, presented in its derandomized form and in the form that is convenient for us to use, as \cref{prop:sampling-shadow}.

\begin{proposition}[Theorem~$3.18$ by Chitnis et al.~\cite{dsfvs}]\label{prop:sampling-shadow}
Given an instance $(G,k,(s_i,t_i)_{i \in [3]}, V^{\infty})$,
there is a deterministic algorithm that runs in time $2^{\OO(k^2)} \cdot n^{\OO(1)}$ and outputs 
a family $\mathcal{Z} \subseteq 2^{V(G) \setminus V^{\infty}}$ of size $2^{\OO(k^2)} \log n$ %
such that the following holds.
Let $W \subseteq V(G)$ be a thin set of size at most $k$ and let $Y \subseteq V(G)$ such that for each $v \in Y$, there is an important $(v,T)$-separator $W' \subseteq W$. 
For every such pair $(W,Y)$, there exists $Z \in \mathcal{Z}$ such that $Z \cap W = \emptyset$ and $Y \subseteq Z$.
\end{proposition}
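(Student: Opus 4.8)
The plan is to follow the \emph{random sampling of important separators} technique of Chitnis et al.~\cite{dir-mwc,dsfvs}: reduce the task to ``guessing'' a single vertex set of size at most $k$ whose reverse $T$-shadow is already a good choice of $Z$, and then manufacture enough such guesses with a pseudorandom family. Throughout I will use the standard facts on important separators: for disjoint $X,Y\subseteq V(G)$ the number of important $(X,Y)$-separators of size at most $k$ is at most $4^k$, and all of them can be listed in time $4^k n^{\OO(1)}$. I will also assume, as is implicit in the intended use, that $Y\cap(W\cup V^\infty)=\emptyset$: otherwise no $Z\subseteq V(G)\setminus V^\infty$ disjoint from $W$ could contain $Y$, so the conclusion would be unsatisfiable.

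\smallskip\noindent\emph{Step 1: the core.} Given a pair $(W,Y)$ as in the statement, for each $v\in Y$ fix an important $(v,T)$-separator $W'_v\subseteq W$ and put $U\coloneqq\bigcup_{v\in Y}W'_v\subseteq W$, so that $|U|\le|W|\le k$. I claim $\rr_G(U)$ is already a valid witness. Indeed, $Y\subseteq\rr_G(U)$: every $v$--$T$ path meets $W'_v\subseteq U$, so $v$ cannot reach $T$ in $G-U$, and $v\notin U$ since $v\notin W$. And $\rr_G(U)\cap W=\emptyset$: for $w\in W$, either $w\in U$ (so $w\notin\rr_G(U)$ trivially) or $w\in W\setminus U$, in which case $U\subseteq W\setminus\{w\}$ and thinness of $W$ gives a path from $w$ to some $t_i$ in $G-(W\setminus\{w\})$, hence in $G-U$, so again $w\notin\rr_G(U)$. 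Thus if the algorithm could output $\rr_G(U)$ we would be done; the remaining work is to produce this set (or an equally good one) without knowing $U$. The same use of thinness also yields: for every $w\in W$, \emph{no} important $(w,T)$-separator of size at most $k$ is contained in $U$ --- such a set $J$ would satisfy $J\subseteq W\setminus\{w\}$ and, by the argument just given, would fail to separate $w$ from $T$.

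\smallskip\noindent\emph{Step 2: random sampling.} Set $p\coloneqq 1/(4k4^k)$ and let $R$ contain each vertex of $V(G)\setminus V^\infty$ independently with probability $p$; output
\[
  Z(R)\;\coloneqq\;\bigl\{\,v\in V(G)\setminus V^\infty \;:\; \text{some important $(v,T)$-separator of size $\le k$ is contained in $R$}\,\bigr\},
\]
which is computable from $R$ in time $4^k n^{\OO(1)}$. If $U\subseteq R$ then $W'_v\subseteq U\subseteq R$ gives $v\in Z(R)$ for each $v\in Y$; and if in addition no important $(w,T)$-separator of size $\le k$, over all $w\in W$, lies inside $R$, then $W\cap Z(R)=\emptyset$. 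Now $\Pr[U\subseteq R]=p^{|U|}\ge p^{k}=2^{-\OO(k^2)}$, and conditioned on $U\subseteq R$, each of the at most $k\cdot 4^k$ important separators $J$ in question has $|J\setminus U|\ge 1$ by the last remark of Step~1, so $\Pr[J\subseteq R\mid U\subseteq R]=p^{|J\setminus U|}\le p$; a union bound bounds the probability of a ``bad'' $J$ by $k4^k p=\tfrac14$. Hence, for \emph{every} admissible pair $(W,Y)$, $Z(R)$ is a valid witness with probability at least $\tfrac34 p^{k}=2^{-\OO(k^2)}$.

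\smallskip\noindent\emph{Step 3: derandomization.} Whether $Z(R)$ works for $(W,Y)$ depends on $R$ only through whether it contains the $\le k$ vertices of $U$ and through which of the $\le k\cdot 4^k$ ``forbidden'' sets (the important $(w,T)$-separators of size $\le k$, $w\in W$) it contains. I would replace the random $R$ by an explicit family of $2^{\OO(k^2)}\log n$ sets coming from standard pseudorandom-set constructions (universal sets / splitters), exactly as in~\cite{dir-mwc,dsfvs}, and output $Z(R)$ for each family member; listing the important separators and building each $Z(R)$ keeps the total time at $2^{\OO(k^2)}n^{\OO(1)}$. I expect the only delicate point to be this last step, i.e.\ choosing the pseudorandom family so that its size is exactly $2^{\OO(k^2)}\log n$ while it still certifiably handles the (function-of-$k$ many, each of size $\le k$) forbidden sets; by contrast, all of the structural content is in Step~1, and the sole place thinness is used is the disjointness of $\rr_G(U)$ from $W$ together with its restated form for the forbidden sets.
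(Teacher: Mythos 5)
The paper itself does not prove this proposition --- it is imported as Theorem~3.18 of Chitnis et al.~\cite{dsfvs} and used as a black box --- so the comparison is with the cited source, and your reconstruction follows exactly the random-sampling-of-important-separators argument that underlies it. Steps~1 and~2 are correct: the thinness argument giving $\rr_G(U) \cap W = \emptyset$ and the observation that no important $(w,T)$-separator of size at most $k$ lies inside $U$, together with the bound $\tfrac{3}{4}p^{k} = 2^{-\OO(k^2)}$ on the success probability, are the standard analysis.

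Two points to tighten. First, the derandomization you defer to is the only place where the quantitative claim could fail, and a plain $(n,r)$-universal set will not do: the pattern to realize involves the at most $k$ vertices of $U$ plus one witness vertex from $J \setminus U$ for each of the up to $k \cdot 4^{k}$ forbidden separators, and a universal set over $k + k4^{k}$ coordinates has size at least $2^{k4^k}$, far above $2^{\OO(k^2)}\log n$. What saves the bound --- and what \cite{dir-mwc,dsfvs} actually use --- is the asymmetric splitter-based lemma: for disjoint $A,B$ with $|A| \le a$ and $|B| \le b$ one can construct, in time $2^{\OO(\min(a,b)\log(a+b))} n^{\OO(1)}$, a family of $2^{\OO(\min(a,b)\log(a+b))}\log n$ subsets containing some $F$ with $A \subseteq F$ and $B \cap F = \emptyset$; applying it with $a = k$, $b = k4^{k}$ (taking $A = U$ and $B$ to consist of one witness per forbidden separator) gives precisely $2^{\OO(k^2)}\log n$ sets, and each $Z(R)$ is then computed by enumerating important separators in $4^{k} n^{\OO(1)}$ time. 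Second, a small slip relative to the literal statement: you sample $R \subseteq V(G)\setminus V^{\infty}$, but $W$ (hence $U$) is not required to avoid $V^{\infty}$, so $U \subseteq R$ could be impossible; either sample $R$ from all of $V(G)$ while keeping $Z(R) \subseteq V(G)\setminus V^{\infty}$ as you define it, or record the assumption that $W$ is disjoint from $V^{\infty}$ (true in the paper's application, where $W$ is a solution). With these two repairs your argument is a faithful reconstruction of the cited proof.
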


The idea is to use \cref{prop:sampling-shadow} to find a family of sets such that at least one of the sets in this collection contains the reverse and forward shadow of some solution, but does not contain the corresponding solution.
We apply \cref{prop:sampling-shadow} in two steps: first to cover the reverse shadow of a solution and then to also cover its forward shadow.
To apply \cref{prop:sampling-shadow} where the set $W$ corresponds to some solution and the set $Y$ corresponds to the reverse shadow of $W$, one needs to guarantee that there is a solution $W$ and its reverse shadow $Y$ that satisfies the properties of \cref{prop:sampling-shadow}.
In order to prove this, we define \emph{shadow-maximal solutions} (as in~\cite{dir-mwc,dsfvs}).

\begin{definition}[Shadow-maximal solution]
Let $\mathcal{I}=(G,k,(s_i,t_i)_{i \in [3]},V^{\infty})$ be an instance of \threedmc.
An inclusion-wise minimal solution $S$ for $\mathcal{I}$ is called a \emph{shadow-maximal} solution if $\rr_G(S) \cup \ff_G(S) \cup S$ is inclusion-wise maximal among all minimal solutions $S$.

A shadow-maximal solution $S$ is called a \emph{best shadow-maximal} solution, if it is shadow-maximal and amongst all shadow-maximal solutions $|\rr_G(S)|$ is maximum.
\end{definition}

We now show in \cref{lem:thin,lem:shadow-max} that if $W$ is a best shadow-maximal solution and $Y=\rr_G(W)$, then $W$ and $Y$ satisfy the properties of \cref{prop:sampling-shadow}.

\begin{lemma}\label{lem:thin}
	Every minimal solution of the instance $(G,k,(s_i,t_i)_{i \in [3]}, V^{\infty})$ of \threedmc~is thin.
\end{lemma}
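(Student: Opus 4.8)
The plan is to unfold the two definitions involved (minimal solution and reverse shadow) and then invoke minimality of the solution directly; no machinery beyond elementary reachability arguments is needed. Fix a minimal solution $W$ of $(G,k,(s_i,t_i)_{i \in [3]}, V^{\infty})$ and an arbitrary vertex $v \in W$. Our goal is to show $v \notin \rr_G(W \setminus \{v\})$, which by definition of the reverse shadow amounts to exhibiting an index $i \in [3]$ such that $t_i$ is reachable from $v$ in $G - (W \setminus \{v\})$ (after first noting that $v$ is an eligible candidate for the reverse shadow at all, i.e.\ $v \notin V^{\infty}$, which holds because every solution is disjoint from $V^{\infty}$, and $v \notin W \setminus \{v\}$).

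First I would use inclusion-wise minimality of $W$: the set $W \setminus \{v\}$ is not a solution, so there is some $i \in [3]$ and an $s_i t_i$-path $P$ in $G - (W \setminus \{v\})$. Next, since $W$ itself is a solution, $G - W$ contains no $s_i t_i$-path; therefore $P$ must use a vertex of $W \setminus (W \setminus \{v\}) = \{v\}$, i.e.\ $v \in V(P)$. Finally, let $P'$ be the suffix of $P$ from $v$ to $t_i$. Since $V(P') \subseteq V(P)$ and $P$ avoids $W \setminus \{v\}$, the path $P'$ lives in $G - (W \setminus \{v\})$ and witnesses that $t_i$ is reachable from $v$ there. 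Hence $v \notin \rr_G(W \setminus \{v\})$, and since $v \in W$ was arbitrary, $W$ is thin.

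I do not anticipate a genuine obstacle in this proof; it is the standard ``minimality forces each deleted vertex to lie on some remaining terminal path'' argument. The only points requiring a little care are the bookkeeping against the precise definition of reverse shadow (checking $v$ is not undeletable and not in $W \setminus \{v\}$, so that membership in $\rr_G(W\setminus\{v\})$ is even meaningful) and making sure the witnessing path $P'$ is taken in $G - (W \setminus \{v\})$ rather than in $G - W$.
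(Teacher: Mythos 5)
Your proof is correct and is essentially the paper's argument, just phrased directly (minimality gives an $s_it_i$-path avoiding $W\setminus\{v\}$, which must pass through $v$, and its suffix witnesses that $v$ reaches $t_i$) rather than the paper's contradiction-style presentation, which assumes $v$ lies in the reverse shadow of $W\setminus\{v\}$ and concludes that $W\setminus\{v\}$ would already be a solution. The two are contrapositives of the same elementary reachability argument, so there is nothing substantive to distinguish them.
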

\begin{proof}
Let $W$ be a minimal solution of the instance $(G,k,(s_i,t_i)_{i \in [3]}, V^{\infty})$.
We show that if 
$v \in W$ and $v \in \rr_G(W')$ for some $W' \subseteq W$, then $W\setminus v$ is also a solution for $(G,k,(s_i,t_i)_{i \in [3]},V^{\infty})$, contradicting the minimality of $W$.
Towards this, if $W \setminus v$ is not a solution, then there exists an $s_it_i$-path in $G - (W \setminus v)$ that contains $v$.
This implies the existence of a $vt_i$-path, call it $P$, in $G - (W \setminus v)$.
Since $v \in \rr_G(W')$, $P$ must contain a vertex of $W' \subseteq W \setminus v$, which yields a contradiction.
\end{proof} 

\begin{lemma}\label{lem:shadow-max}
Let $(G,k,(s_i,t_i)_{i \in [3]}, V^{\infty})$ be an instance of \threedmc\ and let $S$ be a shadow-maximal solution.
Then, either for every $v \in \rr_G(S)$, there exists $S_v \subseteq S$ such that $S_v$ is a $(v,\{t_1, t_2, t_3\})$-important separator, or there exists another shadow-maximal solution $S'$ such that $\rr_G(S) \subset \rr_G(S')$.
\end{lemma}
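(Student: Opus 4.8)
My plan is to prove the contrapositive: I assume that $S$ is shadow-maximal and that there is some $v \in \rr_G(S)$ for which \emph{no} subset of $S$ is a $(v,\{t_1,t_2,t_3\})$-important separator, and I construct a shadow-maximal solution $S'$ with $\rr_G(S) \subsetneq \rr_G(S')$. Write $T = \{t_1,t_2,t_3\}$. Since $v$ lies in the reverse shadow of $S$ it is not in $S$, and $T \cap S = \emptyset$, so $S$ is itself a $(v,T)$-separator. Let $R := R^+_{G-S}(\{v\})$ be the set reachable from $v$ in $G-S$, and let $S_v := N^+_G(R) \setminus R$ be its out-boundary. Because $S$ separates $v$ from $T$, every arc leaving $R$ enters $S$, so $S_v \subseteq S$; moreover $S_v$ is a $(v,T)$-separator and $R^+_{G-S_v}(\{v\}) = R$.

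Next comes the \emph{pushing} step. I choose an important $(v,T)$-separator $S_v^\ast$ with $|S_v^\ast| \le |S_v|$ and $R \subseteq R^\ast := R^+_{G-S_v^\ast}(\{v\})$ (which exists because either $S_v$ is already important or it is dominated by one), and set $S' := (S \setminus R^\ast) \cup S_v^\ast$. By the assumption on $v$ we have $S_v^\ast \not\subseteq S$, hence $S_v$ is not important and $R \subsetneq R^\ast$. I would then verify, in order: (i) $R^+_{G-S'}(\{v\}) = R^\ast$ --- for ``$\subseteq$'' note $R^\ast$ is closed under out-arcs in $G-S'$, since an out-arc of $R^\ast$ that leaves $R^\ast$ ends in the out-boundary of $R^\ast$, which lies in $S_v^\ast \subseteq S'$, and for ``$\supseteq$'' a $v$-to-$z$ path inside $R^\ast$ in $G-S_v^\ast$ also avoids $S \setminus R^\ast$; in particular $v \notin S'$ and $R^\ast \cap T = \emptyset$. (ii) $|S'| \le |S|$ --- here $S \setminus S' = S \cap R^\ast$, and since $S_v \cap R^\ast = S_v \setminus S_v^\ast \subseteq S \cap R^\ast$ and $S_v \cap S_v^\ast \subseteq S \cap S_v^\ast$, a one-line count gives $|S_v^\ast \setminus S| \le |S_v| - |S_v \cap S_v^\ast| = |S_v \cap R^\ast| \le |S \cap R^\ast|$, which is exactly $|S'| \le |S|$. (iii) $S'$ is a solution --- any $s_it_i$-path in $G-S'$ must meet $S \setminus S' = S \cap R^\ast \subseteq R^\ast$ (else it avoids $S$), and from such a vertex $v$ reaches it and it reaches $t_i$, both inside $G-S'$, contradicting (i). (iv) $\rr_G(S) \subseteq \rr_G(S')$, $R^\ast \subseteq \rr_G(S')$, $S \subseteq S' \cup \rr_G(S')$, and $\ff_G(S) \subseteq \rr_G(S') \cup \ff_G(S') \cup S'$ --- all by the same path-tracing through $S \cap R^\ast \subseteq R^\ast$ together with $R^+_{G-S'}(\{v\}) = R^\ast \subseteq V(G) \setminus T$.

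From (ii)--(iv) I get a solution $S'$ with $|S'| \le |S|$ and $\rr_G(S) \cup \ff_G(S) \cup S \subseteq \rr_G(S') \cup \ff_G(S') \cup S'$, and --- conveniently --- the reverse shadow grows \emph{strictly}: if we had $R^\ast \subseteq \rr_G(S)$, then, since the reverse shadow is disjoint from the cut, $R^\ast \cap S = \emptyset$, contradicting $\emptyset \neq S_v \setminus S_v^\ast \subseteq S \cap R^\ast$; hence some $z \in R^\ast$ lies outside $\rr_G(S)$ but inside $\rr_G(S')$. The step I expect to be the main obstacle is the very last one: $S'$ need not be inclusion-minimal, so one must pass to a minimal solution $\widetilde S$, and this must be done so that the reverse-shadow inclusion $\rr_G(S) \subsetneq \rr_G(\widetilde S)$ (and, more robustly, the shadow-union inclusion) survives the removal of redundant vertices --- controlling $\rr_G$ under minimalization is delicate because deleting solution vertices tends to shrink the reverse shadow. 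Once $\widetilde S$ is obtained, shadow-maximality of $S$ forces its shadow-union to equal that of $S$, so $\widetilde S$ is itself shadow-maximal, and together with the strict growth above it has $\rr_G(S) \subsetneq \rr_G(\widetilde S)$, completing the contrapositive. This minimalization bookkeeping is the part adapted from the shadow-removal analyses of Chitnis et al.~\cite{dir-mwc,dsfvs}; together with \cref{lem:thin} it is exactly what guarantees that a \emph{best} shadow-maximal solution satisfies the first alternative of the statement, which is the form in which the lemma feeds into \cref{prop:sampling-shadow}.
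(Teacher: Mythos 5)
Your core construction is essentially the paper's own proof: the paper likewise fixes the offending $v\in\rr_G(S)$, takes a minimal $(v,\{t_1,t_2,t_3\})$-separator $S_v\subseteq S$, replaces it by a dominating separator $S'_v$ with $|S'_v|\le|S_v|$ and strictly larger reachable set, and then proves precisely your items (ii)--(iv) as its three claims ($S'$ is a solution, $\rr_G(S)\cup(S\setminus S')\subseteq\rr_G(S')$, and the shadow-union only grows), with strictness of $\rr_G(S)\subset\rr_G(S')$ obtained, as in your argument, from $S\setminus S'\neq\emptyset$. The only cosmetic difference is that the paper keeps the vertices of $S$ lying inside the enlarged reachable set and merely swaps $S_v$ for $S'_v$, whereas you clear all of $S\cap R^\ast$; your size count for that variant is fine.

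Where you diverge is the endgame, and it is worth knowing that the paper does not do what you plan: it never minimalizes $S'$ and never verifies that $S'$ is literally shadow-maximal. Its proof stops once it has a solution $S'$ with $|S'|\le|S|$, $\rr_G(S)\cup\ff_G(S)\cup S\subseteq\rr_G(S')\cup\ff_G(S')\cup S'$, and $\rr_G(S)\subset\rr_G(S')$, and in the only place the lemma is invoked (\cref{claim:forward} inside \cref{lem:sample-reverse-shadow}, where $S$ is a \emph{best} shadow-maximal solution) these properties are what is actually used. Your final assertion that the minimalized $\widetilde S$ still satisfies $\rr_G(S)\subsetneq\rr_G(\widetilde S)$ is exactly the step you flag as delicate and is not established: passing to a minimal subsolution can only shrink the shadow-union, and shadow-maximality of $S$ gives you at best equality of shadow-unions, not preservation of the strictly larger reverse shadow. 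So, as written, that last step is a gap --- but an avoidable one: conclude, as the paper does, with the non-minimalized $S'$ and the three properties above, which is the form in which the statement feeds into \cref{prop:sampling-shadow}.
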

\begin{proof}
Fix $v \in \rr_G(S)$.
Then $S$ is a $(v,\{t_1,t_2,t_3\})$-separator in $G$. 
Let $S_v \subseteq S$ be a minimal $(v,\{t_1,t_2,t_3\})$-separator. If $S_v$ is an important $(v,\{t_1,t_2,t_3\})$-separator then we are done.
Otherwise there exists a minimal $(v,\{t_1,t_2,t_3\})$-separator $S'_v$ such that $|S'_v| \leq |S_v|$ and $R^+_{G - S_v}(v) \subset R^+_{G-S'_v}(v)$.
Let $S'=(S \setminus S_v) \cup S'_v$.
Clearly $|S'| \leq |S|$ since $|S'_v| \leq |S_v|$.
Also, since $S'_v \neq S_v$, we have $S \setminus S' \neq \emptyset$.
 We now show that $S'$ is a solution such that $(\rr_G(S) \cup \ff_G(S) \cup S) \subseteq (\rr_G(S') \cup \ff_G(S') \cup S')$ and 
 $\rr_G(S) \subset \rr_G(S')$, which proves the lemma.

 \begin{claim}\label{claim:reverse-shadow}
$\rr_G(S) \cup (S \setminus S') \subseteq \rr_G(S')$. 
 \end{claim}
 \begin{proof}

We first show that if $x \in S \setminus S' (= S_v \setminus S'_v)$, then $x \in \rr_G(S')$. Suppose not (that is $x \not \in \rr_G(S')$). Then there exists an $(x,\{t_1,t_2,t_3\})$-path in $G-S'$. Since $S_v$ is a minimal $(v,\{t_1,t_2,t_3\})$-separator and $x \in S_v$, there exists a $(v, \{t_1,t_2,t_3\})$-path that intersects $S_v$ exactly at $x$. Let $P$ denote the $vx$-subpath of this path. Then $V(P) \setminus x \subseteq R^+_{G-S_v}(v)$. Since $S'_v$ is a $(v,\{t_1,t_2,t_3\})$-important separator and $x \not \in S'_v$, $V(P) \subseteq R^+_{G-S'_v}(v)$. Thus, there is a $vx$-path in $G-S'$. This, together with the $(x,\{t_1,t_2,t_3\})$-path in $G-S'$, implies a $(v,\{t_1, t_2,t_3\})$-path in $G-S'$. Since $S'_v \subseteq S'$ and $S'_v$ is a $(v,\{t_1,t_2,t_3\})$-separator, this is not possible.

Now suppose, for the sake of contradiction, that $x \in \rr_G(S)$ but $x \not \in  \rr_G(S')$. Consider an $(x, \{t_1,t_2,t_3\})$-path $P$ in $G-S'$. Then there exists $y \in V(P)$ such that $y \in S \setminus S'$. From the claim in the previous paragraph, $y \in \rr_G(S')$. Thus, there exists a vertex of $S'$ on the $(y,\{t_1,t_2,t_3\})$-subpath of $P$, which is a contradiction.
 \end{proof}
 
  From \cref{claim:reverse-shadow} and since $|S \setminus S'| \geq 1$, $\rr_G(S) \subset \rr_G(S')$. We now show that $S'$ is a solution for the instance $(G,k,(s_i,t_i)_{i \in [3]},V^{\infty})$.

\begin{claim}\label{claim:sol}
$S'$ is a solution for the instance $(G,k,(s_i,t_i)_{i \in [3]},V^{\infty})$ of \dmcthree.
\end{claim}
\begin{proof}
For the sake of contradiction, say $S'$ is not a solution. Then there exists an $s_it_i$-path in $G-S'$ that uses a vertex $x \in S_v \setminus S'_v$. In particular, there exists an $xt_i$-path in $G-S'$, that is $x \not \in \rr_G(S')$.
 Since $S_v \setminus S'_v = S \setminus S'$, from \cref{claim:reverse-shadow}, $x \in \rr_G(S')$. This is a contradiction.
\end{proof}

\begin{claim}\label{claim:shadow-maximal}
$(\rr_G(S) \cup \ff_G(S) \cup S) \subseteq (\rr_G(S') \cup \ff_G(S') \cup S')$.
\end{claim}
\begin{proof}
From \cref{claim:reverse-shadow}, $(\rr_G(S) \cup S) \subseteq (\rr_G(S') \cap S')$.
We now show that for any $v \in \ff_G(S) \setminus \rr_G(S)$, $v \in \ff(S') \cup \rr_G(S')$.
Suppose not.
Then there exists an $s_iv$-path, say $P_1$, and a $vt_j$-path, say $P_2$, for some $i,j \in [3]$, in $G-S'$.  Since $v \in \ff_G(S)$, the path $P_1$ contains a vertex $x \in S \setminus S'$.
From \cref{claim:reverse-shadow}, $x \in \rr_G(S')$. But the $xv$-subpath of $P_1$, together with the $vt_j$-path $P_2$ gives an $xt_j$-path in $G-S'$, which is a contradiction because $x \in \rr_G(S')$.
\end{proof}

From \cref{claim:sol,claim:reverse-shadow} and since $|S \setminus S'|\geq 1$, we have shown that $S'$ has the properties stated earlier, which contradict that $S$ is a best shadow-maximal solution.
\end{proof}

The following lemma, uses \cref{lem:thin,lem:shadow-max} to show that appropriate applications of \cref{prop:sampling-shadow} result in a family of sets that cover the shadow of some solution.

\begin{lemma}[Covering the shadow]\label{lem:sample-reverse-shadow}
There is an algorithm that given an instance $\mathcal{I}=(G,k, \allowbreak (s_i,t_i)_{i \in [3]}, \allowbreak V^{\infty})$ of \dmcthree,
in time $2^{\OO(k^2)} \cdot n^{\OO(1)}$,
outputs a collection $\mathcal{Z} \subseteq 2^{V(G) \setminus V^{\infty}}$ of size $2^{\OO(k^2)} \log^2 n$, %
such that if $\mathcal{I}$ is a \yes-instance, then there exists a solution $S^*$ such that  
there exists $Z \in \mathcal{Z}$ for which $Z \cap S^* = \emptyset$ and $\rr_G(S^*) \cup \ff_G(S^*) \subseteq Z$.
\end{lemma}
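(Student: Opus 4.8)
The plan is to reduce everything to covering reverse shadows, invoke \cref{prop:sampling-shadow} for that, and bootstrap to the forward shadow by reversing all arcs. Write $\overleftarrow{\mathcal{J}}$ for the instance obtained from an instance $\mathcal{J}$ by reversing every arc of its graph and replacing each terminal pair $(s_i,t_i)$ by $(t_i,s_i)$; then forward shadows in $\mathcal{J}$ are exactly reverse shadows in $\overleftarrow{\mathcal{J}}$, the solutions coincide, and the notions of minimal, thin, shadow-maximal, and best shadow-maximal transfer verbatim. Thus every tool built for reverse shadows is available for forward shadows simply by running it on $\overleftarrow{\mathcal{J}}$. Since a single use of \cref{prop:sampling-shadow} only controls ``outgoing'' important separators, both a run on $\mathcal{I}$ and a run on $\overleftarrow{\mathcal{I}}$ are needed; the subtlety will be to make one fixed solution benefit from both, which I will arrange by an interleaved two-phase construction.

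First I would cover the reverse shadow. Assume $\mathcal{I}$ is a \yes-instance and fix a best shadow-maximal solution $S$, i.e.\ one maximizing $\card{\rr_G(S)}$ over all shadow-maximal solutions. By \cref{lem:thin}, $S$ is thin and $\card{S}\le k$. By \cref{lem:shadow-max} and the choice of $S$, for every $v\in\rr_G(S)$ there is an important separator of $v$ from $\{t_1,t_2,t_3\}$ contained in $S$ — the separator needed to invoke \cref{prop:sampling-shadow} — for otherwise \cref{lem:shadow-max} would hand us a shadow-maximal solution $S'$ with $\rr_G(S)\subsetneq\rr_G(S')$, contradicting the maximality of $S$. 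Hence $W\coloneqq S$ and $Y\coloneqq\rr_G(S)$ satisfy the hypotheses of \cref{prop:sampling-shadow}, so the family $\mathcal{Z}^{\rr}$ it returns on $\mathcal{I}$ contains some $Z^{\rr}$ with $Z^{\rr}\cap S=\emptyset$ and $\rr_G(S)\subseteq Z^{\rr}$.

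Second, I would cover the forward shadow while keeping a common solution, by iterating over the candidates of the first phase. For every $Z^{\rr}\in\mathcal{Z}^{\rr}$ let $G_{Z^{\rr}}$ be the result of bypassing $Z^{\rr}$ in $G$ and $\mathcal{I}_{Z^{\rr}}$ the corresponding instance; by \cref{lem:torso}, bypassing preserves all reachabilities among the surviving vertices, hence the family of solutions disjoint from $Z^{\rr}$. Run \cref{prop:sampling-shadow} on $\overleftarrow{\mathcal{I}_{Z^{\rr}}}$ to get $\mathcal{Z}^{\ff}(Z^{\rr})$, and output $\mathcal{Z}\coloneqq\{\,Z^{\rr}\cup Z^{\ff}\mid Z^{\rr}\in\mathcal{Z}^{\rr},\ Z^{\ff}\in\mathcal{Z}^{\ff}(Z^{\rr})\,\}$; as each of the $1+2^{\OO(k^2)}\log n$ calls to \cref{prop:sampling-shadow} costs $2^{\OO(k^2)}n^{\OO(1)}$ time and returns $2^{\OO(k^2)}\log n$ sets, the running time is $2^{\OO(k^2)}n^{\OO(1)}$ and $\card{\mathcal{Z}}=2^{\OO(k^2)}\log^2 n$. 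For correctness take the $Z^{\rr}$ produced above. By \cref{lem:torso}, $S$ is still a solution of $\mathcal{I}_{Z^{\rr}}$, and since $\rr_G(S)\subseteq Z^{\rr}$ its reverse shadow in $G_{Z^{\rr}}$ is empty. Among the minimal solutions of $\mathcal{I}_{Z^{\rr}}$ with empty reverse shadow (a class containing $S$), pick $S^{\star}$ that is shadow-maximal and, among those, maximizes $\card{\ff_{G_{Z^{\rr}}}(S^{\star})}$; arguing as in the first phase but on the source side (the reversed-graph forms of \cref{lem:thin,lem:shadow-max}), such an $S^{\star}$ has, for every $v\in\ff_{G_{Z^{\rr}}}(S^{\star})$, the important separator of $v$ from the sources inside $S^{\star}$ needed to invoke \cref{prop:sampling-shadow} on $\overleftarrow{\mathcal{I}_{Z^{\rr}}}$, yielding some $Z^{\ff}$ with $Z^{\ff}\cap S^{\star}=\emptyset$ and $\ff_{G_{Z^{\rr}}}(S^{\star})\subseteq Z^{\ff}$. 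Finally, \cref{lem:torso} applied to the bypassing of $Z^{\rr}$ turns the empty reverse shadow of $S^{\star}$ in $G_{Z^{\rr}}$ into $\rr_G(S^{\star})\subseteq Z^{\rr}$, and $\ff_{G_{Z^{\rr}}}(S^{\star})\subseteq Z^{\ff}$ into $\ff_G(S^{\star})\subseteq Z^{\rr}\cup Z^{\ff}$; moreover $S^{\star}\subseteq V(G)\setminus Z^{\rr}$ is a solution of $\mathcal{I}$ of size at most $k$ by \cref{lem:torso}. Hence $Z\coloneqq Z^{\rr}\cup Z^{\ff}\in\mathcal{Z}$ is disjoint from $S^{\star}$ and contains $\rr_G(S^{\star})\cup\ff_G(S^{\star})$, as required.

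The hard part is the gluing, concentrated in the choice of $S^{\star}$: I need that any instance admitting a minimal solution with empty reverse shadow has such a solution that additionally, for every vertex of its forward shadow, contains an important separator of that vertex from the sources. Taking a forward-shadow-maximal representative among the reverse-shadowless solutions and appealing to the source-side form of \cref{lem:shadow-max} is the natural route, but it requires checking that the shadow-improvement step of \cref{lem:shadow-max}, performed on the source side, can be chosen so as to keep the reverse shadow empty — this is the one genuinely delicate verification. Everything else, namely the reversal symmetry, the size and time accounting, and the propagation of shadow information through bypassing via \cref{lem:torso}, is routine and follows the treatment in~\cite{dir-mwc,dsfvs}.
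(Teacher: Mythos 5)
Your first phase matches the paper, but the second phase has a genuine gap, and it is exactly the one you flag yourself: you switch from the fixed solution $S$ to a new solution $S^{\star}$ chosen to maximize the forward shadow among reverse-shadowless solutions of the bypassed instance, and you then need the source-side improvement step of \cref{lem:shadow-max} to produce a competitor that is \emph{still} reverse-shadowless. \cref{lem:shadow-max} gives no control whatsoever over the reverse shadow of the replacement solution, so the extremal choice of $S^{\star}$ yields no contradiction, and the ``delicate verification'' you defer is precisely the content of the lemma; as written, the argument does not go through.

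The paper closes this gap by a different (and simpler) gluing mechanism that avoids both of your deviations. It never switches solutions: the witness in both phases is the same best shadow-maximal solution $S^*$ of the original instance. And it does not bypass $Z^{\rrr}$ before the second sampling; instead it runs \cref{prop:sampling-shadow} on the \emph{reversed} graph $\vec{G}$ with the terminal pairs $(t_i,s_i)$ and with $V^{\infty}_{Z^{\rrr}} = V^{\infty} \cup Z^{\rrr}$, i.e.\ $Z^{\rrr}$ is merely made undeletable. Since the graph is unchanged, $S^*$ remains a shadow-maximal solution of this new instance, and the hypothesis of \cref{prop:sampling-shadow} for the pair $(S^*, \rr_{\vec{G}}(S^*)) = (S^*, \ff_G(S^*))$ is verified by contradiction: if some $v \in \ff_G(S^*)$ had no important separator inside $S^*$, then \cref{lem:shadow-max} (applied in $\vec{G}$) would give a shadow-maximal $S'$ whose shadow-union equals that of $S^*$, and any vertex of $S' \setminus S^*$ would have to lie either in $\rr_{\vec{G}}(S')$ (impossible, since a solution vertex is never in its own shadow) or in $\ff_{\vec{G}}(S^*) = \rr_G(S^*) \subseteq Z^{\rrr}$, which is undeletable in the new instance (impossible for a vertex of a solution). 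So the undeletability of $Z^{\rrr}$ is what replaces your unproved preservation claim; no statement about keeping the reverse shadow empty under the improvement step is ever needed. If you want to salvage your variant, replace the bypassing of $Z^{\rr}$ by adding it to $V^{\infty}$ (bypassing also muddies the transfer of shadow-maximality, since shadows and the set of minimal solutions change) and keep $S$ itself as the witness in the second phase, arguing as above; the actual bypassing is then performed only later, in the proof of \cref{thm:shadow-removal}, via \cref{lem:torso}.
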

\begin{proof}
Let $\mathcal{X}$ be the family returned by the algorithm of \cref{prop:sampling-shadow} on the instance $\mathcal{I}$.
Suppose $\mathcal{I}$ is a \yes-instance.
Fix a best shadow-maximal solution $S^*$ of $\mathcal{I}$.
From \cref{lem:thin,lem:shadow-max} follows that the pair of sets $(W,Y)=(S^*,\rr_G(S^*)$ satisfies the properties of the sets in \cref{prop:sampling-shadow} on input $\mathcal{I}$.
Thus, there exists $Z^{\rrr} \in \mathcal{X}$ such that $Z^{\rrr} \cap S^* =\emptyset$ and $\rr_G(S^*) \subseteq Z^{\rrr}$.

Let $\vec{G}$ be the graph obtained from $G$ after reversing all the edges of $G$. 
For each $Z \in \mathcal{X}$, create an instance $\mathcal{I}_Z=(\vec{G},k,(t_i,s_i)_{i \in [3]}, V_Z^{\infty})$, where $V_Z^{\infty} = V^{\infty} \cup Z$. 
Let $\mathcal{Y}_Z$ be the family returned by the algorithm of \cref{prop:sampling-shadow} on input $\mathcal{I}_Z$. 
Then output the family $\mathcal{Z} = \{\{Z_1 \cup Z_2 \}: Z_1 \in \mathcal{X}, Z_2 \in \mathcal{Y}_{Z_1}\}$.
We now show that $\mathcal{Z}$ is the desired family.
The size bound and the running time follow from \cref{prop:sampling-shadow}.

Observe that $S^*$ is also a solution for $\mathcal{I}_{Z^{\rrr}}$.
Further observe that for any set $S \subseteq V(G)$, the forward shadow of $S$ in $G$, with respect to $\{s_1,s_2,s_3\}$ is equal to the reverse shadow of $S$ in $\vec{G}$ with respect to $\{t_1, t_2,t_3\}$, and the reverse shadow of $S$ in $G$, with respect to $\{t_1,t_2,t_3\}$ is equal to the forward shadow of $S$ in $\vec{G}$ with respect to $\{s_1, s_2,s_3\}$.
That is, $\rr_G(S) = \ff_{\vec{G}}(S)$ and $\ff_G(S) = \rr_{\vec{G}}(S)$.
In particular, $S^*$ is a shadow-maximal solution of $\mathcal{I}_{Z^{\rrr}}$.
We now want to show that the pair $(W,Y)=(S^*,\rr_{\vec{G}}(S^*))$ satisfies the properties of \cref{prop:sampling-shadow} on input $\mathcal{I}_{Z^{\rrr}}$.
Towards this we prove the following claim.

\begin{claim}\label{claim:forward}
 For each 
$v \in \rr_{\vec{G}}(S^*)$, there exists an $S_v \subseteq S^*$ that is a $(v, \{s_1,s_2,s_3\})$-important separator in $\vec{G}$.
\end{claim}
\begin{proof}
Suppose the claim does not hold.
Then, by \cref{lem:shadow-max}, there exists a shadow-maximal solution $S'$ of $\mathcal{I}_{Z^{\rrr}}$ 
such that $\rr_{\vec{G}}(S^*) \subset \rr_{\vec{G}}(S')$.
Since $S^*$ is also a shadow-maximal solution of $\mathcal{I}_{Z^{\rrr}}$,
we conclude that $(\rr_{\vec{G}}(S^*) \cup \ff_{\vec{G}}(S^*) \cup S^*) = (\rr_{\vec{G}}(S') \cup \ff_{\vec{G}}(S') \cup S')$.
In particular, $S' \setminus S^* \subseteq (\rr_{\vec{G}}(S^*) \cup \ff_{\vec{G}}(S^*) \cup S^*)$.
That is, for any $v \in S^* \setminus S'$, either $v \in \rr_{\vec{G}}(S^*)$ or $v \in \ff_{\vec{G}}(S^*)$. 

Since $\ff_{\vec{G}}(S^*) = \rr_G(S^*) \subseteq Z^{\rrr} \subseteq V_{Z^{\rrr}}^{\infty}$ and $S^*$ is a solution of $\mathcal{I}_{Z^{\rrr}}$, $S^* \cap \ff_{\vec{G}}(S^*) =\emptyset$.
If $v \in \rr_{\vec{G}}(S^*)$, then, since 
$\rr_{\vec{G}}(S^*) \subset \rr_{\vec{G}}(S')$, $v  \in \rr_{\vec{G}}(S')$.
This is a contradiction as $v \in S' \setminus S^*$.
\end{proof}

Thus from \cref{claim:forward} and \cref{lem:thin} follows that the pair $(W,Y)=(S^*, \rr_{\vec{G}}(S^*))$ satisfies the properties of \cref{prop:sampling-shadow} on input $\mathcal{I}_{Z^{\rrr}}$.
Thus, there exists $Z^{\fff} \in \mathcal{Y}_{Z^{\rrr}}$ such that $Z^{\fff} \cap S^* = \emptyset$ and $\ff_G(S^*) \subseteq Z^{\fff}$.
Let $Z^* = Z^{\rrr} \cup Z^{\fff}$.
Then from the above arguments $Z^* \cap S^* = \emptyset$ and $(\rr_G(S^*) \cup \ff_G(S^*)) \subseteq Z^*$. Also $Z^* \in \mathcal{Z}$.
\end{proof}

Finally we use \cref{lem:sample-reverse-shadow} to prove \cref{thm:shadow-removal}.

\ShadowRemovalTheorem*

\begin{proof}%
Given an instance $(G,k,(s_i,t_i)_{i \in [3], V^{\infty}})$, let
$\mathcal{Z}$ be the collection returned by \cref{lem:sample-reverse-shadow}. 
From \cref{lem:sample-reverse-shadow}, if $\mathcal{I}$ is a \yes-instance, then there exists a solution $S$ and $Z \in \mathcal{Z}$ such that $Z \cap S = \emptyset$ and $(\rr_G(S) \cup \ff_G(S)) \subseteq Z$.
Let $G'$ be obtained from $G$ by bypassing $Z$.
From \cref{lem:torso}, $S$ is also a solution of $(G',k,(s_i,t_i)_{i \in [3]}, V^{\infty} \setminus Z)$.
We now show that is $S$ is a shadowless solution of $(G',k,(s_i,t_i)_{i \in [3], V^{\infty}})$, that is $\rr_{G'}(S) = \ff_{G'}(S) =\emptyset$. For the sake of contradiction, say $\rr_{G'}(S) \neq \emptyset$. Let $x \in \rr_{G'}(S)$. In particular, $x \in V(G') = V(G) \setminus Z$. Since $x \in \rr_{G'}(S)$, $G' -S$ has no $(x, \{t_1,t_2,t_3\})$-path. 
From \cref{lem:torso}, $G-S$ also has no $(x, \{t_1,t_2,t_3\})$-path, which implies that $x \in \rr_G(S)$. This is a contradiction, since then $x \in Z$ and hence $x \not \in V(G')$.
 \end{proof}

\section{Harvesting soybeans}\label{sec:soybean}

This section is devoted to the proof of \cref{thm:dfl-upgrade-vertices}. 
The proof revisits the whole proof of flow-augmentation of~\cite{dfl-arxiv} (recalled below as \cref{thm:dir-flow-augmentation})
and extracts the additional information along the way.
Furthermore, we need to slightly revise the behavior of the algorithm in the base
case to ensure the desired properties.
This section assumes that the reader is familiar with the proof of \cref{thm:dir-flow-augmentation} from~\cite{dfl-arxiv}.

\subsection{Back to the edge-deletion regime}

Since flow-augmentation in~\cite{dfl-arxiv} is defined on edge-cuts,
but \cref{thm:dfl-upgrade-vertices} is in the vertex-deletion regime, we first
go back to the edge-deletion regime. 
We need a few more definitions that closely follow~\cite{dfl-arxiv}.

In the edge-deletion regime, the edges of $G$ can be deletable or undeletable.
An (edge-based) $st$-flow, in this context, is a collection of $st$-paths that do not share a 
deletable edge and an $st$-cut is a set of deletable edges that intersects all $st$-paths.
It is always clear from the context whether we speak about the vertex- or the edge-deletion
regime, and hence we reuse names like $st$-flow or $st$-mincut
or the notation $\lambda_G(s,t)$ for both regimes.

An $st$-cut $Z$ is a \emph{star $st$-cut} if for every $(u,v) \in Z$ in the graph $G-Z$ there is a path from $s$ to $u$ but there is no path from $s$ to $v$;
observe that every minimal $st$-cut is a star $st$-cut.
For a star $st$-cut $Z$ in $G$, by $\corecutG{Z}{G} \subseteq Z$ we denote the set of arcs $(u,v) \in Z$ such that there exists a path from $v$ to $t$ in $G-Z$. 

Being compatible is slightly more complicated for star cuts:
a set of arcs $A \subseteq V(G) \times V(G)$ is \emph{compatible} with a star $st$-cut $Z$
if the set of vertices reachable from $s$ in $G-Z$ and $(G+A)-Z$ is the same.
We also need the notion of a witnessing flow:
for a star $st$-cut $Z$ in $G$, if $\corecut{Z}$ is an $st$-mincut, 
then an $st$-maxflow $\flow$ is a \emph{witnessing flow} if $E(\flow) \cap Z = \corecut{Z}$.
The pair $(A, \flow)$ is \emph{compatible} with $Z$ if $A$ is compatible with $Z$, $\corecutG{Z}{G+A}$ is an $st$-mincut in $G + A$, and $\flow$ is a witnessing flow for $Z$ in $G + A$.
Note that if $Z$ is a minimal $st$-cut, and $A$ is compatible with $Z$, then any $st$-maxflow
in $G+A$ is a witnessing flow for $Z$, but if $Z$ is a star $st$-cut, the notion
of a witnessing flow is more intricate.

The flow-augmentation technique is represented by the following statement. 
\begin{theorem}[Kim et al.~\cite{dfl-arxiv}]\label{thm:dir-flow-augmentation}
There exists a polynomial-time algorithm that, given a directed graph $G$,
vertices $s,t \in V(G)$, and an integer $k$,
returns a set $A \subseteq V(G) \times V(G)$
and an $st$-maxflow $\flow$ in $G+A$ such that 
for every star $st$-cut $Z$ of size at most $k$,
with probability $2^{-\Oh(k^4 \log k)}$ the pair $(A,\flow)$ is compatible with $Z$.
\end{theorem}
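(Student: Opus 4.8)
The plan is to reconstruct the recursive argument behind directed flow-augmentation of Kim, Kratsch, Pilipczuk, and Wahlstr\"{o}m. First I would reduce the statement to a cleaner goal. Guess $\lambda^{*} := |\corecut{Z}|$; since $\corecut{Z}$ is itself an $st$-cut we have $\lambda_G(s,t) \le \lambda^{*} \le |Z| \le k$, so this guess costs only a factor $k$ in the success probability. The aim becomes: add arcs so that (a copy of) $\corecut{Z}$ becomes an $st$-mincut of value $\lambda^{*}$ in the augmented graph, while keeping the set of vertices reachable from $s$ in $G-Z$ unchanged (this is exactly the compatibility requirement), after which one reads off a witnessing flow. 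Routine gadget reductions let us assume all edges are deletable, so attention may be restricted to the edge-deletion regime.

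The core of the argument is a recursion on the \emph{deficiency} $d := \lambda^{*} - \lambda_G(s,t) \ge 0$, which never increases. In the base case $d = 0$ the set $\corecut{Z}$ is an $st$-cut of minimum value, hence a mincut; the remaining work is to output a maximum $st$-flow $\flow$ with $E(\flow) \cap Z = \corecut{Z}$, obtained by choosing $\flow$ suitably (e.g.\ the source-most maximum flow, after removing edges irrelevant to the $s$-side closure of $Z$). For $d > 0$, fix a maximum $st$-flow $\flow = \{P_1, \dots, P_\lambda\}$ and analyze how the star cut $Z$ --- which may be far larger than $\lambda$ --- interacts with $\flow$ and with the lattice of minimum $st$-cuts, in particular with the source-closest and sink-closest minimum cuts, all computable in polynomial time from the residual graph of a maximum flow. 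Since $\corecut{Z}$ is an $st$-cut of value $\lambda^{*} > \lambda$, it crosses each flow path, and one isolates a bounded-size certificate describing this interaction: which flow path carries the relevant crossing, of what ``type'', and a bounded description of its position along that path. Guessing this certificate costs $\POLY(k)\cdot\log k$ bits, and for the correct guess one can add a single arc $a$ from the $s$-reachable side of $Z$ to a $t$-side vertex lying on a flow path, with two properties: it leaves the $s$-side of $Z$ unchanged (preserving compatibility), and it forces a new $st$-augmenting path avoiding $\corecut{Z}$, hence either strictly increases $\lambda_G(s,t)$ (dropping $d$) or strictly decreases an auxiliary potential bounded by a function of $k$ (for instance, the number of still-unresolved flow crossings) while keeping $d$ fixed. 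One then recurses on $G + a$.

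Termination and the probability bound follow from bookkeeping on the branch of correct guesses for a fixed target $Z$: $d$ drops at most $k$ times, and between two drops the auxiliary potential runs through only $\POLY(k)$ values, so that branch has depth $\POLY(k)$; each of its nodes spends $\POLY(k)\cdot\log k$ random bits, giving $\Oh(k^{4}\log k)$ bits in total, and thus success probability $2^{-\Oh(k^{4}\log k)}$ (the initial guess of $\lambda^{*}$ being absorbed). The running time is polynomial because the whole recursion tree has polynomially many nodes, each doing polynomial work (maximum-flow and closest-mincut computations plus the compatibility and progress checks). I expect the main obstacle to be the structural lemma underpinning the inductive step: proving that the interaction of an \emph{arbitrary} small star $st$-cut with a fixed maximum flow and the extremal minimum cuts admits a description of entropy only $\POLY(k)$ from which one can read off an arc whose addition is simultaneously compatibility-preserving and strictly progress-making --- and, crucially, arranging that the auxiliary potential is bounded in $k$ rather than in $n$, so that the correct branch is shallow. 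By comparison, the base-case extraction of a genuine witnessing flow and the vertex-to-edge and undeletable-edge gadget reductions are routine, though still to be handled with care.
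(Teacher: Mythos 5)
First, note that the paper does not prove this statement at all: \cref{thm:dir-flow-augmentation} is imported verbatim from Kim, Kratsch, Pilipczuk, and Wahlstr\"{o}m~\cite{dfl-arxiv}, and the present paper only \emph{revisits} the structure of that external proof in \cref{sec:soybean} in order to extract the additional soybean properties. So there is no internal proof to compare against; your attempt has to be judged against the actual argument of~\cite{dfl-arxiv}, whose skeleton is visible in \cref{sec:soybean}.

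Judged that way, there is a genuine gap, and the proposed route also diverges from how the result is actually established. Your inductive step --- ``one can always add a single arc, guessable with $\POLY(k)\log k$ bits of entropy, that preserves the $s$-side of every relevant star cut and either raises $\lambda_G(s,t)$ or decreases a $k$-bounded auxiliary potential'' --- is not a reduction of the problem but a restatement of it; you flag it yourself as the main obstacle, and no argument is offered for why such an arc exists, why a suitable potential bounded in $k$ (rather than $n$) exists, or why compatibility with the \emph{unknown} cut $Z$ can be certified from a bounded guess. This is precisely the hard content of the theorem. Moreover, the real proof does not proceed one arc at a time with a deficiency/potential descent: as recalled in \cref{sec:soybean}, it is a divide-and-conquer recursion of depth $\Oh(k^3)$ that, at each level, computes an $st$-maxflow, analyzes the chain of minimum cuts and bottleneck edges, and branches into a base case (flow paths pairwise unreachable in $G-\{s,t\}$), a ``small $\ell$'' case, and a ``large $\ell$'' case in which the graph is cut into blocks $G^\alpha$ lying between consecutive minimum cuts and the algorithm recurses on those blocks, with randomized guessing (e.g.\ of how $Z$ distributes over blocks and of the parameters $k_i,\kappa_i$) at every level; the $2^{-\Oh(k^4\log k)}$ bound comes from multiplying per-level success probabilities over this $\Oh(k^3)$-deep recursion, not from a chain of single-arc augmentations. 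Your base case also needs more care than ``take the source-most maxflow'': when $Z$ is a star cut rather than a minimal cut, producing a witnessing flow $\flow$ with $E(\flow)\cap Z=\corecut{Z}$ while simultaneously ensuring $\corecut{Z}$ is a mincut of $G+A$ and the $s$-side is preserved is exactly where the bottleneck-edge analysis and the per-path recursion of the base case in \cref{sec:soybean} are needed. As it stands, the proposal is a plausible-sounding program, not a proof.
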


Most of this section is devoted to the proof of the following edge-deletion
variant of \cref{thm:dfl-upgrade-vertices}.

\begin{theorem}\label{thm:dfl-upgrade}
  There exist computable functions $c : \N \to \N$ and $q : \N \times \N \to \N$ such that the following holds.

  There exists a polynomial-time randomized algorithm that, given a directed graph $G$,
  vertices $s,t \in V(G)$, and an integer $k$,
  returns a set $A \subseteq V(G) \times V(G)$
  and an $st$-maxflow $\witnessflow$ in $G+A$ such that 
  for every star $st$-cut $Z$ of size at most $k$,
  with probability $2^{-\Oh(k^4 \log k)}$, the pair $(A,\witnessflow)$ is compatible with $Z$.
  
  Additionally, the algorithm returns a partition $\mathcal{B}$ of the deletable
  edges of $\bigcup_{P \in \witnessflow} E(P)$ into at most $c(k)$ sets
  such that for every $P \in \witnessflow$, every integer $p \in \N$, every $B \in \mathcal{B}$
  and every two disjoint sets $C,D$ of size at least $q(k,p)$,
  consisting of edges of $B \cap E(P)$ that
  are interlaced on $P$, the graph $G$ contains a family of $p$ pairwise vertex-disjoint $CD$-soybeans.
  
  Finally, one can take $c$ and $q$ such that 
  $c(k) = 2^{\Oh(k^3 \log k)}$ and $q(k,p) = 2^{\Oh(k^3 \log (kp))}$.
\end{theorem}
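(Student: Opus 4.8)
The plan is to follow the recursive algorithm of Kim, Kratsch, Pilipczuk, and Wahlstr\"{o}m~\cite{dfl-arxiv} underlying \cref{thm:dir-flow-augmentation} essentially line by line, instrumenting it so that it additionally outputs the partition $\mathcal{B}$, and then to verify the soybean property by induction on the recursion. Recall the shape of that algorithm: on input $(G,s,t,k)$ it either lands in a \emph{base case}, in which no augmentation is needed and it returns $A=\emptyset$ together with an $st$-maxflow of $G$ itself, or it performs a \emph{recursive step}, in which it makes a bounded-size random guess, adds a bounded arc set $A'$, decomposes the instance into a bounded number of subinstances living on torsos (bypassed graphs, cf.~\cref{lem:torso}) of $G$, and recurses; the final arc set $A$ is the union of the sets $A'$ collected along the random root-to-leaf run, and $\witnessflow$ is an $st$-maxflow in $G+A$. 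Each deletable edge of a path $P\in\witnessflow$ is an edge of $G$, and I would assign it a \emph{trace}: the sequence of (guess, subinstance-index) pairs recording which recursive calls it passed through, refined by the trace it received inside the subinstance where it finally settled. Since the recursion depth and the per-step branching are both bounded by computable functions of $k$ — with the exact dependence matching the $2^{-\Oh(k^4\log k)}$ success probability of \cref{thm:dir-flow-augmentation} — there are at most $c(k)=2^{\Oh(k^3\log k)}$ possible traces, and $\mathcal{B}$ groups the deletable flow edges by trace.

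For the \emph{base case}, which requires only the minor revision flagged above (pinning down a canonical, ``leftmost'' $st$-maxflow of $G$ so that the construction composes cleanly up the torso hierarchy), the soybean property is essentially immediate: each $P\in\witnessflow$ is a simple path in $G$, so for deletable edges $c$ before $d$ on $P$ the subpath $P[s(c),t(d)]$ is a walk in $G$ through both. Hence, if $C=\{c_1<\dots<c_q\}$ and $D=\{d_1<\dots<d_q\}$ are interlaced on $P$ with $q\ge 2p$, the subpaths $P[s(c_{2\alpha-1}),t(d_{2\alpha-1})]$ for $\alpha\in[p]$, each viewed as a walk paired with itself, form $p$ pairwise vertex-disjoint $CD$-soybeans — skipping every other index guarantees that an interior vertex of a skipped $C$-edge lies strictly between two consecutive chosen subpaths. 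So at a base case a single group suffices and one may take $q(k,p)=\Oh(p)$.

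For a \emph{recursive step} I would prove by induction that the soybean property is inherited. Fix $P\in\witnessflow$ and edges $c$ before $d$ with the same trace at this level; then $c$ and $d$ lie in the same subinstance, and the stretch of $P$ between them is a concatenation of arcs of $A'$ added at this step and of maximal sub-stretches each of which is a flow-subpath of a single subinstance, for which the soybean property is already available by induction. I would then use the precise structural reason the algorithm inserts each arc $(u,v)\in A'$ — it connects prescribed interface vertices of two consecutive pieces of the decomposition in a way that is realized, inside $G$, by walks through those pieces — to certify a $G$-witness for the whole stretch, i.e.\ a $CD$-soybean in $G$; and since a flow path is a genuine path in $G+A$, distinct sub-intervals of $P$ use disjoint sets of $A'$-arcs, so arranging each re-routing to stay within the ``territory'' that the interlacing assigns to its sub-interval keeps the $p$ soybeans pairwise vertex-disjoint. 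Counting how many interlaced edges at depth $\ell$ are needed to drive $p$ disjoint soybeans — each of the $\Oh(k)$ $A'$-arcs on a stretch consumes a sub-soybean one level deeper, which in turn demands $q(k,\cdot)$ interlaced edges there — yields a recurrence of the form $q_\ell(k,p)\le q_{\ell+1}\big(k,\,g(k)\cdot p^{h(k)}\big)$ for suitable computable $g,h$, which, unrolled over the $\Oh(k)$ recursion levels with the base-case estimate $q(k,p)=\Oh(p)$, solves to $q(k,p)=2^{\Oh(k^3\log(kp))}$.

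I expect the last step — the $G$-rerouting of the augmentation arcs inside a recursive step — to be the real obstacle. Augmentation arcs genuinely raise connectivity, so an arc $(u,v)\in A$ need not correspond to any $uv$-path of $G$ and can be traversed by a flow path; one has to reach into the proof of \cref{thm:dir-flow-augmentation} to recover exactly \emph{why} each arc was added and how that reason is ``shadowed'' by $G$-walks within the pieces it bridges, and then perform this re-routing for $p$ soybeans simultaneously without vertex collisions, which is precisely where the interlacing hypothesis is spent and where the blow-up in $q$ originates. A secondary nuisance is that subinstances live on torsos rather than subgraphs, so bypassed vertices have to be re-expanded into $G$-walks; this is routine via \cref{lem:torso} but must be carried consistently through the induction, and it interacts with the base-case revision, since the ``leftmost flow'' choice has to be made compatibly across the torso hierarchy. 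Finally, \cref{thm:dfl-upgrade-vertices} follows from \cref{thm:dfl-upgrade} by the standard edge-to-vertex reduction, splitting each deletable vertex into an arc and translating $A$, $\witnessflow$, and $\mathcal{B}$ along this reduction.
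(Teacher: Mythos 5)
There is a genuine gap, and it sits exactly where you yourself flag ``the real obstacle.'' Your partition $\mathcal{B}$ is built from a ``trace'' recording, at each level of the recursion, \emph{which subinstance} the edge went into, and you justify $|\mathcal{B}| \le c(k)$ by claiming the per-step branching is bounded by a function of $k$. It is not: in the recursion of \cref{thm:dir-flow-augmentation} the number of subinstances at a single step (the pieces $G_{i,b}$ between consecutive bottleneck edges in the base case, and the pieces $G^\alpha$ between mincuts in the large-$\ell$ case) is unbounded in $n$. So trace-by-subinstance-index yields an unbounded number of parts; only the recursion \emph{depth} is $\Oh(k^3)$. The paper instead partitions by a \emph{signature} that records, at each large-$\ell$ level, not which piece the edge lies in but the pair $(i_1,i_2)$ of indices of flow paths of the parent call's (pre-augmentation) flow $\flow$ to which the piece containing the edge attaches at its left and right boundary mincuts; since there are at most $k^2$ choices per level and $\Oh(k^3)$ levels, this gives $2^{\Oh(k^3\log k)}$ parts. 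Crucially, the signature is not just a counting device: it is the connectivity certificate that makes the soybean construction work when the interlaced edges of $C$ and $D$ are spread over many pieces. Equal signatures mean every such edge $e$ in a piece $G^\alpha$ admits a walk $W(e)$ in $G$, confined between the two boundary mincuts of $G^\alpha$, from the head of the $P^1$-edge of the left mincut to the tail of the $P^2$-edge of the right mincut, for two \emph{fixed} original flow paths $P^1,P^2\in\flow$; the soybeans are then stitched as $W(c_j)$ extended along $P^2$ versus a prefix of $P^1$ extended by $W(d_j)$, with vertex-disjointness across $j$ coming from the ordering of the pieces. Nothing in your subinstance-index trace provides this common attachment to fixed flow paths, so the cross-piece case of your induction cannot be completed as proposed.

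Relatedly, your plan to handle augmentation arcs by ``re-routing'' each arc of $A'$ through $G$-walks between the pieces it bridges is not available: augmentation arcs genuinely increase connectivity and in general have no $G$-witness at all (this is precisely why the statement of \cref{thm:dfl-upgrade} is nontrivial). The paper never re-routes them; in the many-pieces case it abandons the augmented flow path $P\in\witnessflow$ entirely and builds the soybeans from pre-augmentation objects (the walks $W(e)$ and subpaths of $P^1,P^2$), while in the one-piece case it simply invokes the inductive hypothesis one level down. Two smaller inaccuracies: the base case of the recursion is not ``$A=\emptyset$'' (it adds shortcut arcs across bottleneck edges and, for $\kappa_i>1$, recurses on the pieces $G_{i,b}$, which again requires the two-case soybean analysis, not just chopping a simple $G$-path); and your sketched recurrence $q_\ell(k,p)\le q_{\ell+1}(k,g(k)p^{h(k)})$ does not unroll to $2^{\Oh(k^3\log(kp))}$ unless the per-level blow-up is linear in $p$, which is what the paper's recurrence $q_{i+1}(p)=1+2p\,q_i(p)$ with $q_0(p)=2p$ achieves.
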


Observe that if $P \cap A = \emptyset$, that is, $P$ does not contain any augmentation edge, then
one can simply enumerate $C = \{c_1,\ldots,c_q\}$ and $D = \{d_1,\ldots,d_q\}$ for $q = q(k,p)$
along the path $P$, define $P_i$ to be the subpath of $P$ from $c_i$ to $d_i$, and use
soybeans $\{P_{2i}P_{2i}~|~1 \leq i \leq q/2\}$. (We skip every second such soybean in order to make them vertex-disjoint.) However, when $P$ contains augmentation edges, the situation
is more complex, as a soybean cannot use them.

We now formally show that \cref{thm:dfl-upgrade} implies \cref{thm:dfl-upgrade-vertices}, restated below:

\VertexSoybeanTheorem*

\begin{proof}
  We construct a new graph $G'$ as follows.
  For each vertex $v \in V(G)$, graph~$G'$ contains two vertices $v_1$ and $v_2$ and a deletable arc $(v_1, v_2)$; we call these arcs \emph{important}.
  If $v$ is undeletable or if $v = s$ or $v = t$, then the arc $(v_1,v_2)$ is undeletable instead.
  For each arc $(u, v) \in A(G)$, graph $G'$ contains an undeletable arc $(u_2, v_1)$.
  
  We apply \cref{thm:dfl-upgrade} to $G'$, $s_2$, $t_1$, and $k$, obtaining an arc set $A'$, an edge-based $s_2t_1$-maxflow $\mathcal{P}'$ in $G' + A'$, and a partition $\mathcal{B}'$
  of deletable arcs of $\mathcal{P}'$.

  We now perform the following cleanup step of $A'$ and $\mathcal{P}'$.
  First, for every $x,y \in V(G')$ such there is a path from $x$ to $y$
  in $G'+A'$ that uses undeletable arcs only (recall that every arc of $A'$ is considered
  undeletable), we add an undeletable arc $(x,y)$ to $A'$
  if it is not present already in $G'+A'$.
  Second, for every $P' \in \mathcal{P}'$ and every maximal subpath $Q$ of $P'$
  that uses undeletable arcs only and is of length at least $2$,
  we replace $Q$ with only an arc $(s(Q),t(Q))$
  (which is now in $G'+A'$ due to the previous step). 
  Finally, we restrict $A'$ to contain only arcs that are present on
  some flow path in $\mathcal{P}'$. 
  
  Observe that the first step does not change the space of $s_2t_1$-cuts in $G'+A'$
  (as any newly added arc $(x,y)$ cannot lead from the $s$-side to the $t$-side
   of any $s_2t_1$-cut in $G'+A'$ due to the assumed path of undeletable edges
   from $x$ to $y$)
  while $\mathcal{P}'$ remains an $s_2t_1$-maxflow in $G'+A'$ throughout the process
  (as we only reroute the paths through undeletable arcs)
  and every path $P' \in \mathcal{P}'$ still visits the same deletable arcs, in the same
  order (so the properties of $\mathcal{B}'$ are unharmed).
  The deletion of arcs in $A'$ in the last step only extends the space
  of $s_2t_1$-cuts in $G'+A'$, so if $(A',\mathcal{P}')$ was compatible
  with some $s_2t_1$-cut $Z$ before the process, then it is also compatible after the process.
  Furthermore, as the only deletable arcs in $G'+A'$ are the important arcs, 
  in the end we have the following property: every $(x,y) \in A'$
  is of the form $(x,y) = (u_2,v_1)$ for some $u,v \in V(G)$. 

  The above discussion
  implies that we can obtain the desired result $(A,\mathcal{P},\mathcal{B})$ in the natural manner:
  for every $(x,y)=(u_2,v_1) \in A'$, we add $(u,v)$ to $A$,
  for every $P' \in \mathcal{P}'$ we add to $\mathcal{P}$ a path $P$ being the 
  path $P'$ with all important edges on it contracted, 
  and for every $B' \in \mathcal{B}'$ we add to $\mathcal{B}$
  the set $B = \{v~|~(v_1,v_2) \in B'\}$. 
\end{proof} 

Thus, it remains to prove \cref{thm:dfl-upgrade}. This proof spans the rest of this section.

\subsection{Initial setup}

The algorithm of \cref{thm:dir-flow-augmentation} first filters out some 
trivial cases in which $A = \emptyset$ can be returned.
In these cases one can take $\mathcal{B}$ to be a singleton and construct the desired soybeans as in the comment after the statement of \cref{thm:dfl-upgrade}.
After treating these trivial cases, the algorithm of \cref{thm:dir-flow-augmentation} invokes a recursive subroutine.

The recursion has depth strictly less than $h_{\mathrm{max}}(k) = \Oh(k^3)$.
The input to the recursive call consists of a graph $G$ with distinguished vertices
$s,t \in V(G)$, an integer $k$, an $st$-flow $\flow$, and an integer $\kappa$. 
The output is a set $A \subseteq V(G) \times V(G)$ and an $st$-maxflow $\witnessflow$
such that $\lambda_{G+A}(s,t) \geq \kappa$ and for every star $st$-cut $Z$ with $|Z| \leq k$
and $|\corecutG{Z}{}| \geq \kappa$, $(A,\witnessflow)$ is compatible with $Z$ with high probability.

An important insight about the structure of the recursion is that for every recursive call
$\rho$ on a graph $G$ the following holds: For every subcall on a graph $G'$, $G'-\{s,t\}$
is a subgraph of $G-\{s,t\}$.
Furthermore, the graphs $G'-\{s,t\}$ over all recursive subcalls are pairwise vertex-disjoint (as subgraphs of $G$).

\paragraph{Constructing $\mathcal{B}$.}
Let $(A,\witnessflow)$ be the output of the algorithm of \cref{thm:dir-flow-augmentation}
and consider $P \in \witnessflow$. (We separately partition the deletable edges of $E(P)$
into sets of $\mathcal{B}$, so that every $B \in \mathcal{B}$ is contained in $E(P)$
for one $P \in \witnessflow$.)
If present, we put at most two deletable edges of $E(P)$ that are incident with $s$ or $t$
into separate singleton sets of $\mathcal{B}$ and do not worry about them further.

For every vertex $v \in V(G)$, we can consider all recursive calls on graphs $G'$
that contain $v$. By the previous observation on the structure of the recursion,
these calls form an upward path in the recursion tree.
For every call $\rho'$ in the recursion whose parent $\rho$ 
corresponds to the ``large $\ell$'' case, we assign
to $e$ a \emph{local signature} as follows. 
The recursive call is applied to a graph $G^\alpha$ for some $\alpha$ such that
$G^\alpha-\{s,t\}$ is a subgraph of $G$ between the $st$-mincuts $C_{b_\leftarrow^\alpha-1}$
and $C_{b_\rightarrow^\alpha+\lambda}$, the edges incident to $s$ and $t$ in $G^\alpha$
correspond to the edges of the two said mincuts, and every vertex in $G^\alpha$ is reachable
from $s$ and reaches $t$.
Pick a path $Q_1$ from $s$ to $v$ in $G^\alpha$ and let $i_1$
be the index of the path of the flow $\flow$ in the parent call $\rho$ that contains the edge
of $C_{b_\leftarrow^\alpha-1}$ corresponding to the first edge of $Q_1$. 
Pick a path $Q_2$ from $v$ to $t$ in $G^\alpha$ and let $i_2$
be the index of the path of the flow $\flow$ in the parent call $\rho$ that contains the edge
of $C_{b_\rightarrow^\alpha+\lambda}$ corresponding to the last edge of $Q_2$. 
Then $(i_1,i_2)$ is the local signature of $v$ at $\rho'$. 
The signature of $v$ is the sequence of all local signatures of $v$, in the top-to-bottom order
in the recursion tree.
The signature of an edge $e$ is the pair consisting of the signatures of its endpoints.
Finally, we define $\mathcal{B}$ as the partition of the deletable edges of $\witnessflow$
according to the path of $\witnessflow$ they belong to and according to their signatures.
Since there are $2^{\Oh(h_{\mathrm{max}}(k) \log k)} = 2^{\Oh(k^3 \log k)}$ signatures,
$|\mathcal{B}| \leq 2^{\Oh(k^3 \log k)}$. 

\medskip

Let $f_p(x) = 1+2px$, $q_0(p) = 2p$, $q_{i+1}(p) = f_p(q_i(p))$ for $0 \leq i < h_{\mathrm{max}}(k)$, 
and $q(k,p) = q_{h_{\mathrm{max}}(k)}(p)$; recall that $h_{\mathrm{max}}(k) = \Oh(k^3)$ is the maximum possible depth of the recursion for a fixed value of $k$.
We have $q(k,p) = 2^{\Oh(k^3 \log (kp))}$.
We prove, by the bottom-to-top induction over the recursion tree, that 
if the recursive call $\rho$ at depth $i$ applied to a tuple $(G,s,t,k,\flow,\kappa)$ returned
$(A,\witnessflow)$, $P \in \witnessflow$, and $C$ and $D$ are two disjoint sets
of size $q_{h-i}(p)$ consisting of deletable edges of $E(P)$ of the same signature
that are interlaced on $P$, then $G-\{s,t\}$ contains a family of
$p$ pairwise vertex-disjoint $CD$-soybeans.
Here $h \leq h_{\mathrm{max}}(k)$ is the actual depth of the recursion.

In the subsequent paragraphs we consider different cases the algorithm of \cref{thm:dir-flow-augmentation} can enter and in each of them prove the desired claim.

\subsection{Initial steps}

We first investigate the initial preprocessing steps.

These can be split into two types. The first type are leaves of the recursion: when $\lambda_{G}(s,t) = 0$ and when $\lambda > k$. In the first case, the algorithm returns $A = \emptyset$ and $\witnessflow=\emptyset$, so there is nothing to prove.
In the second case, the algorithm returns $A = \{(s,t)\}$ and $\witnessflow$ consisting
of a single path $P$ along the edge of $A$, and again there is nothing to prove.

The second type of steps
invoke one recursive call $\rho'$ on a modified graph $G'$, obtaining $(A',\witnessflow')$.
Recall that in all cases, $G'-\{s,t\}$ is a subgraph of $G-\{s,t\}$.
The algorithm returns $(A,\witnessflow)$ that is constructed from $(A',\witnessflow')$ by setting
$A$ to be $A'$ plus at most $2\lambda+1 \leq 2k+1$ additional edges, all incident with $s$ or $t$,
and $\witnessflow$
to be $\witnessflow'$ with possibly one additional one-edge path. 
The claim is again straightforward as $q_{h-i-1}(k,p) \leq q_{h-i}(k,p)$ and the 
requested soybeans cannot use vertices $s$ nor $t$.

\subsection{Base case}

In the base case of the algorithm, $\flow$ is an $st$-maxflow 
and for every $i \neq j$, $i,j \in [\lambda]$,
there is no path from $V(P_i)$ to $V(P_j)$ in $G-\{s,t\}$. 
This is a place where we need to slightly modify the behavior of the algorithm of
\cref{thm:dir-flow-augmentation}.

Let $B$ be the set of all bottleneck edges. 
For every $i \in [\lambda]$, let $(u_{i,1},v_{i,1}),\ldots,(u_{a_i},v_{a_i})$ be the 
bottleneck edges on $P_i$, in the order along $P_i$. Denote $v_{i,0} = s$ and $u_{i,a_i+1} = t$.
For $i \in [\lambda]$ and $0 \leq b \leq a_i$, let $G_{i,b}$ be the subgraph of $G$ induced
by all vertices that are reachable from $v_{i,b}$ and from where $u_{i,b+1}$ is reachable
in $G-B$. Let $G_i = \{(u_{i,b},v_{i,b}~|~1 \leq b \leq a_i\} \cup \bigcup_{b=0}^{a_u} G_{i,b}$. 
Note that the graphs $G_i$ intersect only in vertices $s$ and $t$
and $\lambda_{G_i}(s,t) = 1$.

For a star $st$-cut $Z$ of size at most $k$, let $Z_i = Z \cap E(G_i)$. 
The analysis of the base case shows that $Z_i$ is a star $st$-cut in $G_i$
and $\corecutG{Z}{G} = \bigcup_{i \in [\lambda]} \corecutG{Z_i}{G_i}$. 
Furthermore, either $\corecutG{Z_i}{G_i}$ consists of a single bottleneck edge
and $Z_i$ contains no other bottleneck edge on $P_i$, or $Z_i$ does not contain
any bottleneck edge of $P_i$ at all. 

For every $i \in [\lambda]$, we proceed as follows.
We guess integers $1 \leq \kappa_i \leq k_i \leq k$
such that $\sum_{i \in [\lambda]} \kappa_i \geq \kappa$ and $\sum_{i \in [\lambda]} k_i \leq k$.
We aim at $k_i = |Z_i|$ and $\kappa_i = |\corecutG{Z_i}{G_i}|$. This happens in total with 
probability $2^{-\Oh(k \log k)}$. 

If $\kappa_i = 1$, we aim at capturing star $st$-cuts $Z_i$ with $\corecutG{Z_i}{G_i}$
consisting of a single bottleneck edge. 
We set $A_i = \{(v_{i,b}, u_{i,b+1})~|~0 \leq b \leq a_i\}$ and $P_i'$ to be a path consisting of edges
$(s,u_{i,1}), (u_{i,1},v_{i,1}), (v_{i,1},u_{i,2}), \ldots, (u_{i,a_i}, v_{i,a_i}), (v_{i,a_i},t)$. 
The algorithm returns $A_i$ as part of the set $A$ and $P_i'$ as one of the flow
paths in $\witnessflow$. 

For our desired claim, observe that the deletable edges on $P_i'$ are only bottleneck edges.
Thus, given $C$ and $D$ interlaced on $P_i'$, each of size at least $q_0(k,p) = 2p$, one can
construct the desired soybeans as follows:
If $c_1,c_2,\ldots,c_{2p}$ and $d_1,d_2,\ldots,d_{2p}$ are the first $2p$ edges of $C$ and $D$, respectively, and $P_{i,j}$ is the subpath of $P_i$
from $c_{2j}$ to $d_{2j}$, then $\{P_{i,j}P_{i,j}~|~j \in [p]\}$ is the desired family of soybeans.
So we can put the whole $A_i$ as a single set in $\mathcal{B}$.

If $\kappa_i > 1$, we aim at capturing star $st$-cuts $Z_i$ that do not contain any bottleneck
edge on $P_i$. Let $A_{i}^\circ$ be the set of copies of all bottleneck edges
on $P_i$. 
For every $0 \leq b \leq a_i$, we recurse on $G_{i,b}$ with 
$v_{i,b}$ playing the role of $s$ and $u_{i,b+1}$ playing the role of $t$, 
parameters $k_i$, $\kappa_i$, and a flow consisting of a single flow path $P_i$ from $v_{i,b}$
to $u_{i,b+1}$. 
Let $(A_{i,b},\witnessflow_{i,b})$ be the returned pair.
The returned set $A$ consists of, for every $i \in [\lambda]$,
 the set $A_i = A_{i}^\circ \cup \bigcup_{b=0}^{a_i} A_{i,b}$.
The returned set $\witnessflow$ consists of, for every $i \in [\lambda]$, 
$\kappa_i$ flow paths, combined from
flow paths $\witnessflow_{i,b}$ (recall that each $\witnessflow_{i,b}$ is of size at least $\kappa_i$) concatenated using edges of $A_{i}^\circ$. 

For our desired claim, consider a returned path $P \in \witnessflow$ and interlaced sets
$C,D$ of $q_{h-i}(k,p)$ deletable edges on $P$. We have two cases.
First, there exists an integer $0 \leq b \leq a_i$ such that $G_{i,b}$ contains
at least $q_{h-i-1}(k,p)+1$ edges of $C$.
Then, the path of $\witnessflow_{i,b}$ used
to construct $P$ contains interlaced subsets of $C$ and $D$ of size $q_{h-i-1}(k,p)$.
The claim follows from the inductive hypothesis for the recursive call on $G_{i,b}$.

In the second case, there are at least $2p+1$ indices $b$ such that
$G_{i,b}$ contains an edge of $C$.
Consequently, there are indices $b_1 \leq b_1' < b_2 \leq b_2' < b_3 \leq b_3' < \ldots < b_p \leq b_p'$ such that for every $j \in [p]$, $G_{i,b_j}$ contains an edge of $C$ and $G_{i,b_j'}$ contains an edge of $D$. 
For $j \in [p]$, let $Q_j$ be a path from $v_{j,b_j}$ to $u_{j,b_j'+1}$ containing an
edge of $C$ from $G_{i,b_j}$ and let $Q_j'$ be a path from $v_{j,b_j}$ to $u_{j,b_j'+1}$ containing
an edge of $D$ from $G_{i,b_j'}$. Such paths exist by the construction of the graphs $G_{i,b}$.
Then, $\{Q_jQ_j'~|~j\in [p]\}$ is the desired soybean harvest.

\subsection{Small \boldmath$\ell$ case}

In the small $\ell$ case the situation is very similar to the second type of initial steps.

The algorithm always invokes one recursive call, on a graph $G'$
such that $G'-\{s,t\}$ is a subgraph of $G-\{s,t\}$, obtaining $(A',\witnessflow')$.
The returned set $A$ consists of $A'$ and additional edges $A_0$ with $|A_0| \leq 4\lambda \ell^{\mathrm{big}} + 2 \leq 16k^3+14$, all incident with $s$ or $t$.
The returned flow $\witnessflow$ consists of the paths $\witnessflow'$, possibly with an edge
of $A_0$ added at the beginning or end, and possibly one additional path that contains
at most one deletable edge.

Thus, the claim follows directly from the inductive hypothesis for the recursive subcall.

\subsection{Large \boldmath$\ell$ case}

In the large $\ell$ case the situation is quite similar to the base case, but a bit more complex.
Let $(G,s,t,k,\flow,\kappa)$ be the input to the recursive call in question.

The algorithm recurses on graphs $G^\alpha$ for all excellent indices $\alpha$, obtaining
pairs $(A^\alpha,\witnessflow^\alpha)$
Observe that in the returned flow $\witnessflow$, the only deletable edges are 
those in graphs $G^\alpha$ on paths $\witnessflow^\alpha$. 
Fix $P \in \witnessflow$; the path $P$ consists of edges of $A$ and some flow paths
from flows $\witnessflow^\alpha$.
Assume that we have interlaced sets $C,D$ of deletable
edges on $P$ of the same signature, each of size $q_{h-i}(k,p)$. 

As in the base case, there are two cases. First, there exists $\alpha$ such that
$C$ contains at least $q_{h-i-1}(k,p)+1$ edges in $G^\alpha$. Then, 
the flowpath $P' \in \witnessflow^\alpha$ contained in $P$ contains interlaced subsets $C',D'$
of $C$ and $D$ of size $q_{h-i-1}(k,p)$ each. The claim follows from the inductive hypothesis.

In the second case, there are at least $2p+1$ indices $\alpha$ for which there is an edge
of $C$ in $G^\alpha$. 
Thus, there are indices $\alpha_1 \leq \alpha_1' < \alpha_2 \leq \alpha_2' < \alpha_3 \leq \alpha_3' < \ldots < \alpha_p \leq \alpha_p'$ such that for every $j \in [p]$, 
there is an edge $c_j$ of $C$ in $G^{\alpha_j}$ and an edge $d_j$ of $D$ in $G^{\alpha_j'}$. 

Recall that all edges of $C$ and $D$ are of the same signature. That is, there are two paths
$P^1,P^2 \in \flow$ such that, for edge $e \in C \cup D$,
  if $e$ lies in $G^\alpha$ then 
there is a walk $W(e)$ in $G$ from the head of the edge of $E(P^1) \cap C_{b_{\leftarrow}^{\alpha}-1}$
to the tail of the edge of $E(P^2) \cap C_{b_{\rightarrow}^\alpha+\lambda}$ that contains $e$
and is completely contained between $C_{b_\leftarrow^\alpha-1}$ and $C_{b_\rightarrow^\alpha+\lambda}$. 
For every $j \in [p]$, let $W_j$ be a concatenation of $W(c_j)$ 
and a subpath of $P^2$ from the tail of the edge of
$E(P^2) \cap C_{b_{\rightarrow}^{\alpha_j}+\lambda}$ to the tail of the edge
of 
$E(P^2) \cap C_{b_{\rightarrow}^{\alpha_j'}+\lambda}$
and let $W_j'$ be a concatenation of
and a subpath of $P^1$ from the head of the edge of
$E(P^1) \cap C_{b_{\leftarrow}^{\alpha_j}-1}$ to the head of the edge
of 
$E(P^1) \cap C_{b_{\leftarrow}^{\alpha_j'}-1}$ and $W(d_j)$.
(Recall that $\alpha_j \leq \alpha_j'$.)
Then, $\{W_jW_j'~|~j \in [p]\}$ is the desired soybean family.

\newcommand{\psifull}{\textsc{Partitioned Subgraph Isomorphism}}

\newcommand{\psismall}{\textsc{PSI}}

\newcommand{\wt}{\texttt{wt}}
\newcommand{\ww}{W}

\newcommand{\wtdmcfull}{ \textsc{Weighted Directed Multicut} }

\newcommand{\wttwodmc}{\textsc{$2$-Wt-DMC}}

\newcommand{\degree}{\textnormal{\texttt{deg}}}

\newcommand{\XX}{\texttt{X}}

\newcommand{\YY}{\texttt{Y}}

\newcommand{\ZZ}{\texttt{Z}}

\section{Two-terminal-pair \textsc{Weighted Directed Multicut} is \W[1]-hard}\label{sec:hard}

In the \wtdmcfull\ problem, the input is a directed graph $G$, a set of terminal pairs $\{(s_i,t_i) : i \in [p]\}$, a weight function on the vertex set $\wt : V(G) \to \N$ and positive integers $k$ and $\ww$.
For a subset $S \subseteq V(G)$ we define $\wt(S) \coloneqq  \sum_{v \in S} \wt(v)$.
The goal is to determine whether there exists a set $S \subseteq V(G)$ such that $|S| \leq k$, $\wt(S) \leq \ww$ and $G-S$ has no $s_it_i$-path for each $i \in [p]$.
In this section, we show that \wtdmcfull\ is \W[1]-hard parameterized by $k$, even with two terminal-pairs (that is when $p=2$).
In fact, we show that it does not admit an $f(k) \cdot n^{o(k/ \log k)}$ algorithm under the \ETH.
We denote this problem with two terminal pairs by \wttwodmc.
The hardness proof we provide is essentially a simplification of the reduction given by Pilipczuk and Wahlstr\"om~\cite{PilipczukW18a} for proving the \W[1]-hardness of the \textsc{Directed Multicut} problem with four terminal-pairs.
Our reduction essentially demonstrates that the synchronization of some gadgets achieved in the reduction in~\cite{PilipczukW18a} using two additional terminal-pairs can also be achieved if the vertices are allowed polynomial (in the input size) weights.
This helps us to eliminate two terminal-pairs in the reduction of~\cite{PilipczukW18a} at the cost of adding polynomial weights.

\begin{theorem}\label{thm:weighted-hardness}
\textsc{Weighted Directed Multicut} is \W[1]-hard even for two terminal pairs. Furthermore, assuming the \ETH, the problem cannot be solved in $f(k) \cdot n^{o(k / \log k)}$ time, where $n$ is the number of vertices of the input graph.
\end{theorem}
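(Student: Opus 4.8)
The plan is to reduce from \psismall\ (\psifull): an instance consists of a pattern graph $H$ with $k = |E(H)|$ edges and no isolated vertices, a host graph $G$, and a partition of $V(G)$ into classes $\{V_u\}_{u \in V(H)}$, and the question is whether there is an injective map $\phi \colon V(H) \to V(G)$ with $\phi(u) \in V_u$ for every $u \in V(H)$ and $\phi(u)\phi(v) \in E(G)$ for every $uv \in E(H)$. By a theorem of Marx, \psismall\ is \W[1]-hard parameterized by $k$ and admits no $f(k)\, n^{o(k/\log k)}$-time algorithm under \ETH, where $n = |V(G)|$; since $H$ has no isolated vertices we may assume $|V(H)| \le 2k$. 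Hence it suffices to produce, in time polynomial in $n + |V(H)|$, an equivalent instance of \wttwodmc\ whose cutset-size parameter $k'$ is $O(k)$ and whose graph has $n^{O(1)}$ vertices and whose weights are of magnitude $n^{O(1)}$.

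The construction follows the four-terminal-pair reduction of Pilipczuk and Wahlstr\"om~\cite{PilipczukW18a}. There, two of the terminal pairs, say $(s_1,t_1)$ and $(s_2,t_2)$, carry the combinatorial core: the reduction builds gadgets---roughly one per edge $e = uv \in E(H)$---whose minimum-cut behaviour forces a solution to choose one pair $(a,b)$ with $a \in V_u$, $b \in V_v$ and $ab \in E(G)$, and the routing of $s_1t_1$- and $s_2t_2$-paths between gadgets enforces that choices sharing a vertex of $H$ agree. The other two pairs $(s_3,t_3)$ and $(s_4,t_4)$ serve only for \emph{synchronisation}: together with the tight budget they force the deleted set to contain exactly one vertex per gadget, lying in mutually consistent ``slots''. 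We discard $(s_3,t_3)$ and $(s_4,t_4)$, keeping only $(s_1,t_1),(s_2,t_2)$, and instead attach to the deletable vertices of the gadgets a weight function $\wt$ with values in $\{0, 1, \dots, n^{O(1)}\}$, designed so that the contribution of the selected vertex of a gadget encodes, in a fixed enumeration of $V_u$ (resp.\ $V_v$), the index of that vertex, and a global weight bound $W$ so that a deleted set of exactly $k'$ vertices has total weight at most $W$ if and only if all gadget choices are consistent with a single embedding. We set $k'$ to the number of gadgets, which is $O(k)$, and keep the original undeletable vertices and all terminals in $V^{\infty}$.

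For correctness, a solution $S$ with $|S| \le k'$ and $\wt(S) \le W$ of the resulting \wttwodmc\ instance must, by the minimum-cut analysis inherited from~\cite{PilipczukW18a}, hit every $s_1t_1$- and $s_2t_2$-path using at least one vertex in each of the $k'$ gadgets, hence exactly one; the bound $\wt(S) \le W$ then forces these choices to be globally consistent and yields an embedding $\phi$. Conversely, an embedding $\phi$ yields a deletable set of $k'$ vertices that is an $s_1t_1$- and $s_2t_2$-cut and has total weight exactly $W$---this is where we use that $(s_3,t_3),(s_4,t_4)$ were only needed to exclude precisely the inconsistent configurations that $W$ now excludes. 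The reduction runs in polynomial time and produces an instance of size $n^{O(1)}$, so a hypothetical $f(k')\,N^{o(k'/\log k')}$-time algorithm for \wttwodmc\ (with $N$ the number of vertices) would, via $k' = O(k)$ and $N = n^{O(1)}$, solve \psismall\ in $f'(k)\, n^{o(k/\log k)}$ time, contradicting \ETH; the same correspondence of \YES-instances gives \W[1]-hardness.

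The main obstacle I expect is the design of $\wt$ and $W$. One must choose the per-gadget weight contributions and the target $W$ so that (i) the \emph{only} families of gadget choices reaching total weight at most $W$ with $k'$ deletions are the consistent ones---no combination of ``wrong'' choices across gadgets may accidentally compensate to hit the target---while (ii) all weights remain polynomially bounded in $n + |V(H)|$, so that the reduced instance stays of polynomial size; in particular the naive base-$n$ place-value encoding (which would be exponential in $|V(H)|$) is unavailable, and one must instead assign to each vertex $u$ of $H$ a private block of $O(\log n)$-bit coordinates shared by the gadgets incident to $u$ and argue that consistency is detected coordinate by coordinate. Verifying that this weighted mechanism faithfully reproduces the effect of the two synchronisation terminal pairs of~\cite{PilipczukW18a}---forbidding exactly what they forbade and permitting nothing new---is where the bulk of the otherwise routine verification lies.
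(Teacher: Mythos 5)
Your overall strategy coincides with the paper's: reduce from \psismall\ (using Marx's lower bound, $|V(H)| \le 2k$), start from the four-terminal-pair reduction of Pilipczuk and Wahlstr\"om, and replace the two synchronisation terminal pairs by polynomially bounded vertex weights and a tight budget $W$. However, the step you yourself flag as the main obstacle --- the actual design of $\wt$ and $W$ --- is precisely the content of the proof, and your sketch of it does not work as described. A ``private block of $O(\log n)$-bit coordinates'' read off from a sum of weights cannot detect equality of indices across gadgets coordinate by coordinate without a careful anti-compensation argument (sums do not expose individual bits), and you give no such argument; nor do you specify how the two remaining terminal pairs and the gadget wiring force exactly one deletion per gadget in the first place. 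As it stands, the reduction is not specified and its correctness cannot be checked, so there is a genuine gap.

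The paper closes this gap without any place-value or coordinate encoding, by making the synchronisation \emph{local} and one-sided. For each vertex $i$ of $H$ there is a single shared path $Z^i$ (deletable vertices of weight $M=k+1$), and for each ordered pair $(i,j)$ with $\{i,j\}\in E(H)$ there are paths $X^{i,j}$ and $Y^{i,j}$ whose deletable vertices carry the \emph{complementary} weights $\wt(\hat{x}^{i,j}_a)=Ma$ and $\wt(\hat{y}^{i,j}_a)=M(n+1-a)$. The edges $(x^{i,j}_a,z^i_a)$ and $(z^i_a,y^{i,j}_a)$ guarantee, via the cut condition alone, that the largest index $a$ cut on $X^{i,j}$ is at least the smallest index $b$ cut on $Y^{i,j}$, so each gadget contributes weight at least $M(n+1)$, with equality iff $a=b$ and only one vertex is cut per path; the tight budget $W=M(2k(n+1)+h)+k$ then forces equality everywhere, forces exactly one (index-matching) deletion on each $Z^i$, and leaves exactly $k$ units of weight for the grid vertices $p^{i,j}_{a,b}$, which have weight $1$ on edges of $G$ and weight $W$ on non-edges, encoding the homomorphism condition. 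Cross-gadget consistency for different edges of $H$ sharing the vertex $i$ comes for free from the shared path $Z^i$, not from the weights. If you want to complete your proof, you need to either reproduce a mechanism of this complementary-weight type or give a genuinely different, fully specified weight scheme together with the verification that no inconsistent combination of per-gadget choices can meet the budget.
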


To prove \cref{thm:weighted-hardness}, we give a reduction from \psifull\ (\psismall), parameterized by the number of edges in the pattern graph.
In the \psifull\ problem, given two undirected graphs $G,H$ such that $V(G) = \biguplus_{i \in V(H)} V^i$, the goal is to determine if there exists a homomorphism $\xi : V(H) \to V(G)$ such that $\xi(i) \in V^i$ for each $i \in V(H)$.
This problem has been shown to be \W[1]-hard parameterized by $|E(H)|$ by Marx~\cite[Corollary~$6.3$]{marx2007can}.
 In fact, the authors show that there is no $f(k) \cdot n^{o(k/ \log k)}$ algorithm for \psismall, where $k=|E(H)|$ and $n$ is the number of vertices in the input graph, unless the \ETH\ fails.

\begin{proof}[Proof of \cref{thm:weighted-hardness}]
Let $(G,H)$ be an instance of \psismall\ where $V(G) = \biguplus_{i\in V(H)} V^i$.
Without loss of generality, let $V(H)=\{1, \ldots,h\}$, let $|V^i| = |V^j|=n$ for each $i,j \in V(H)$ and assume that $H$ has no isolated vertices. 
Let $V^i=\{v^i_1, \ldots, v^i_n\}$. %
Let $k=|E(H)|$, then $|V(H)| \leq 2k$ since $H$ has no isolated vertices. 
We now construct an instance $(D,(s_i,t_i)_{i \in [2]},\wt,k',\ww)$ of \wttwodmc.
Set $k' = 5k +h$ and $\ww = M (2k (n+1) +h) +k$, where $M=k+1$.

\paragraph*{Construction of $D$:}

\newcommand{\xmin}{1}
\newcommand{\xmax}{15}
\newcommand{\ymin}{1}
\newcommand{\ymax}{\ymin +15}
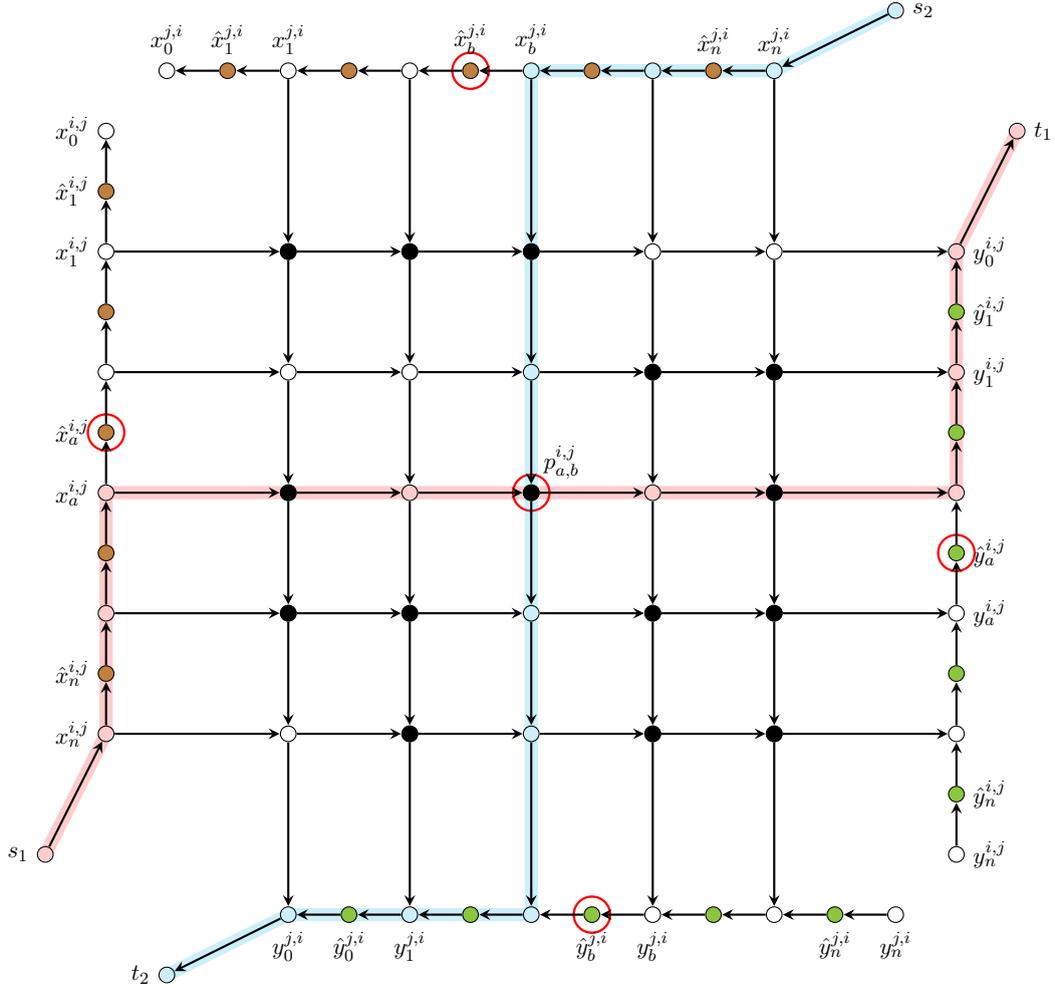
\begin{figure}
\centering
\begin{tikzpicture}[%
scale=.8,
transform shape,
myvertex/.style={circ,scale=2}
]

		\colorlet{myGreen}{green!50!black}
\definecolor{myRed}{rgb}{0.68, 0.05, 0.0}
\colorlet{myBlue}{blue!90!black}
\definecolor{myLightblue}{rgb}{0.54, 0.81, 0.94}
\colorlet{myViolet}{myBlue!55!myRed}
\colorlet{myOrange}{yellow!55!myRed}

\pgfdeclarelayer{background}
\pgfdeclarelayer{foreground}
\pgfsetlayers{background,main,foreground}

\node[myvertex,fill=none,label=above:{$x^{j,i}_0$}] (315) at (\ymin+2,\xmax) {};
\node[myvertex, fill=none,label=above:{$x^{j,i}_1$}] (515) at (\ymin+4,\xmax) {};
\node[myvertex, fill=none] (715) at (\ymin+6,\xmax) {};
\node[myvertex, fill=none,label=above:{$x^{j,i}_b$}] (915) at (\ymin+8,\xmax) {};
\node[myvertex, fill=none] (1115) at (\ymin+10,\xmax) {};
\node[myvertex, fill=none,label=above:{$x^{j,i}_n$}] (1315) at (\ymin+12,\xmax) {};

\node[myvertex, fill=brown,label=above:{$\hat{x}^{j,i}_1$}] (415) at (\ymin+3,\xmax) {};
\node[myvertex, fill=brown] (615) at (\ymin+5,\xmax) {};
\node[myvertex, fill=brown,label=above:{$\hat{x}^{j,i}_b$}] (815) at (\ymin+7,\xmax) {};
\node[myvertex, fill=brown] (1015) at (\ymin+9,\xmax) {};
\node[myvertex, fill=brown,label=above:{$\hat{x}^{j,i}_n$}] (1215) at (\ymin+11,\xmax) {};

\foreach \i in {3,...,12}
{
        		{\pgfmathtruncatemacro{\y}{\i+1}
        		\pgfmathtruncatemacro{\z}{\xmax}
                 \draw[<-,>=stealth,thick] (\i\z) -- (\y\z);}
}

\node[myvertex, fill=none,label=left:{$x^{i,j}_0$}] (214) at (\ymin+1,\xmax-1) {};
\node[myvertex, fill=none,label=left:{$x^{i,j}_1$}] (212) at (\ymin+1,\xmax-3) {};
\node[myvertex, fill=none] (210) at (\ymin+1,\xmax-5) {};
\node[myvertex, fill=none,label=left:{$x^{i,j}_a$}] (28) at (\ymin+1,\xmax-7) {};
\node[myvertex, fill=none] (26) at (\ymin+1,\xmax-9) {};
\node[myvertex, fill=none,label=left:{$x^{i,j}_n$}] (24)  at (\ymin+1,\xmax-11) {};

\node[myvertex, fill=brown,label=left:{$\hat{x}^{i,j}_1$}] (213)  at (\ymin+1,\xmax-2) {};
\node[myvertex, fill=brown] (211) at (\ymin+1,\xmax-4) {};
\node[myvertex, fill=brown,label=left:{$\hat{x}^{i,j}_a$}] (29)  at (\ymin+1,\xmax-6) {};
\node[myvertex, fill=brown] (27) at (\ymin+1,\xmax-8) {};
\node[myvertex, fill=brown,label=left:{$\hat{x}^{i,j}_n$}] (25) at (\ymin+1,\xmax-10) {};

\foreach \i in {4,...,13}
{
        		{\pgfmathtruncatemacro{\y}{\i+1}
        		\pgfmathtruncatemacro{\z}{2}
                 \draw[->,>=stealth,thick] (\z\i) -- (\z\y);}
}

\node[myvertex, fill=none,label=right:{$y^{i,j}_0$}] (1612)  at (\ymin+15,\xmax-3) {};
\node[myvertex, fill=none,label=right:{$y^{i,j}_1$}]  (1610) at (\ymin+15,\xmax-5) {};
\node[myvertex, fill=none] (168) at (\ymin+15,\xmax-7) {};
\node[myvertex, fill=none,label=right:{$y^{i,j}_a$}] (166) at (\ymin+15,\xmax-9) {};
\node[myvertex, fill=none] (164) at (\ymin+15,\xmax-11) {};
\node[myvertex, fill=none,label=right:{$y^{i,j}_n$}] (162) at (\ymin+15,\xmax-13) {};

\node[myvertex, fill=LimeGreen,label=right:{$\hat{y}^{i,j}_1$}] (1611) at (\ymin+15,\xmax-4) {};
\node[myvertex, fill=LimeGreen] (169) at (\ymin+15,\xmax-6) {};
\node[myvertex, fill=LimeGreen,label=right:{$\hat{y}^{i,j}_a$}] (167)  at (\ymin+15,\xmax-8) {};
\node[myvertex, fill=LimeGreen] (165) at (\ymin+15,\xmax-10) {};
\node[myvertex, fill=LimeGreen,label=right:{$\hat{y}^{i,j}_n$}] (163) at (\ymin+15,\xmax-12) {};

\foreach \i in {2,...,11}
{
        		{\pgfmathtruncatemacro{\y}{\i+1}
        		\pgfmathtruncatemacro{\z}{16}
                 \draw[->,>=stealth,thick] (\z\i) -- (\z\y);}
}

\node[myvertex, fill=none,label=below:{$y^{j,i}_0$}] (51)  at (\ymin+4,\xmax-14) {};
\node[myvertex, fill=none,label=below:{$y^{j,i}_1$}] (71) at (\ymin+6,\xmax-14) {};
\node[myvertex, fill=none] (91) at (\ymin+8,\xmax-14) {};
\node[myvertex, fill=none,label=below:{$y^{j,i}_b$}] (111) at (\ymin+10,\xmax-14) {};
\node[myvertex, fill=none] (131) at (\ymin+12,\xmax-14) {};
\node[myvertex, fill=none,label=below:{$y^{j,i}_n$}] (151) at (\ymin+14,\xmax-14) {};

\node[myvertex, fill=LimeGreen,label=below:{$\hat{y}^{j,i}_0$}] (61) at (\ymin+5,\xmax-14) {};
\node[myvertex, fill=LimeGreen] (81) at (\ymin+7,\xmax-14) {};
\node[myvertex, fill=LimeGreen,label=below:{$\hat{y}^{j,i}_b$}] (101) at (\ymin+9,\xmax-14) {};
\node[myvertex, fill=LimeGreen] (121) at (\ymin+11,\xmax-14) {};
\node[myvertex, fill=LimeGreen,label=below:{$\hat{y}^{j,i}_n$}] (141) at (\ymin+13,\xmax-14) {};

\foreach \i in {5,...,14}
{
        		{\pgfmathtruncatemacro{\y}{\i+1}
        		\pgfmathtruncatemacro{\z}{1}
                 \draw[<-,>=stealth,thick] (\i\z) -- (\y\z);}
}

\foreach \i in {5,7,9,11,13}
{
        \foreach \j in {12,10,8,6,4}
        {
                \node[myvertex,fill=none] (\i\j) at (\i,\j) {};
        }
}

\node[myvertex,fill] at (512) {};
\node[myvertex,fill] at (58) {};
\node[myvertex,fill] at (56) {};
\node[myvertex,fill] at (76) {};
\node[myvertex,fill] at (74) {};
\node[myvertex,fill] at (712) {};
\node[myvertex,fill] at (912) {};
\node[myvertex,fill] at (98) {};
\node[myvertex,fill] at (116) {};
\node[myvertex,fill] at (114) {};
\node[myvertex,fill] at (1110) {};
\node[myvertex,fill] at (1310) {};
\node[myvertex,fill] at (134) {};
\node[myvertex,fill] at (136) {};
\node[myvertex,fill] at (138) {};

\node[] at (9.5,8.5) {$p^{i,j}_{a,b}$};

\foreach \i in {5,7,9,11,13}
{
        \foreach \j in {12,10,8,6}
        {
        		{\pgfmathtruncatemacro{\y}{\j-2}
                 \draw[->,>=stealth,thick] (\i\j) -- (\i\y);}
        }
}

\foreach \i in {5,7,9,11}
{
        \foreach \j in {12,10,8,6,4}
        {
        		{\pgfmathtruncatemacro{\y}{\i+2}
                 \draw[->,>=stealth,thick] (\i\j) -- (\y\j);}
        }
}

\draw[->,>=stealth,thick] (515) -- (512);
\draw[->,>=stealth,thick] (715) -- (712);
\draw[->,>=stealth,thick] (915) -- (912);
\draw[->,>=stealth,thick] (1115) -- (1112);
\draw[->,>=stealth,thick] (1315) -- (1312);

\draw[->,>=stealth,thick] (54) -- (51);
\draw[->,>=stealth,thick] (74) -- (71);
\draw[->,>=stealth,thick] (94) -- (91);
\draw[->,>=stealth,thick] (114) -- (111);
\draw[->,>=stealth,thick] (134) -- (131);

\draw[->,>=stealth,thick] (212) -- (512);
\draw[->,>=stealth,thick] (210) -- (510);
\draw[->,>=stealth,thick] (28) -- (58);
\draw[->,>=stealth,thick] (26) -- (56);
\draw[->,>=stealth,thick] (24) -- (54);

\draw[->,>=stealth,thick] (1312) -- (1612);
\draw[->,>=stealth,thick] (1310) -- (1610);
\draw[->,>=stealth,thick] (138) -- (168);
\draw[->,>=stealth,thick] (136) -- (166);
\draw[->,>=stealth,thick] (134) -- (164);

\node[myvertex,fill=none,label=right:{$s_2$}] (1516) at (15,16) {}; %
\node[myvertex,fill=none,label=right:{$t_1$}] (1714) at (17,14) {}; %
\node[myvertex,fill=none,label=left:{$s_1$}] (12) at (1,2) {}; %
\node[myvertex,fill=none,label=left:{$t_2$}] (30) at (3,0) {}; %

\draw[->,>=stealth,thick] (1516) -- (1315); %
\draw[->,>=stealth,thick] (1612) -- (1714); %
\draw[->,>=stealth,thick] (12) -- (24); %
\draw[->,>=stealth,thick] (51) -- (30);%

\begin{pgfonlayer}{background}
\draw[draw=red,line width=0.3mm,fill=none] (9,8) circle[radius=3mm];

\draw[draw=red,line width=0.3mm,fill=none] (10,1) circle[radius=3mm];

\draw[draw=red,line width=0.3mm,fill=none] (8,15) circle[radius=3mm];

\draw[draw=red,line width=0.3mm,fill=none] (16,7) circle[radius=3mm];

\draw[draw=red,line width=0.3mm,fill=none] (2,9) circle[radius=3mm];

\draw[ProcessBlue,opacity=0.2,line width=5pt,line cap=round] (1516.center) -- (1315.center)
	-- (915.center)
	-- ($(101)-(1,0)$)
	-- (51.center)
	-- (30.center); 
	
\draw[red,opacity=0.2,line width=5pt,line cap=round] (12.center) -- (24.center)
	-- (28.center)
	-- ($(167)+(0,1)$)
	-- (1612.center)
	-- (1714.center); 
\end{pgfonlayer}

\end{tikzpicture}
\caption{An illustration of the construction of the paths $X^{i,j}$, $Y^{i,j}$, $X^{j,i}$, $Y^{j,i}$ and the grid $P^{i,j}$, where $i<j$ and $n=5$.
	The brown vertices whose labels have subscript $a$ have weight $Ma$.
	The green vertices whose labels have subscript $a$ have weight $M(n+1-a)$.
	The black vertices have weight $1$ and the white vertices are undeletable.
	The red, and the blue, highlighted paths are the $s_1t_1$-path, and $s_2t_2$-path respectively, that survive after deleting the solution vertices from $X^{i,j}$, $Y^{i,j}$, $X^{j,i}$ and $Y^{j,i}$.
	The unique common intersection point of the two highlighted paths is $p^{i,j}_{a,b}$, which is picked by the solution.
	The vertices picked by the solution are encircled in red.}
\label{fig:full-construction}
\end{figure}

For each $i \in V(H)$,
let $Z^i$ be the path $(z^i_n,\hat{z}^i_n, z^i_{n-1}, \hat{z}^i_{n-1}, \ldots, z^i_1, \hat{z}^i_1, z^i_0)$ on $2n+1$ vertices.
These paths are called the $\ZZ$-paths and are added to $D$.
For each ordered pair $(i,j) \in V(H) \times V(H)$, such that $\{i,j\} \in E(H)$,
let $X^{i,j} =(x^{i,j}_n,  \hat{x}^{i,j}_n, \allowbreak x^{i,j}_{n-1}, \hat{x}^{i,j}_{n-1}, \ldots , \allowbreak x^{i,j}_1, \hat{x}^{i,j}_1, x^{i,j}_0)$ and 
$Y^{i,j} =(y^{i,j}_n,  \hat{y}^{i,j}_n, y^{i,j}_{n-1}, \hat{y}^{i,j}_{n-1}, \ldots , \allowbreak y^{i,j}_1, \hat{y}^{i,j}_1, y^{i,j}_0)$ be paths on $2n +1$ vertices each.
These paths are called $\XX$-paths and $\YY$-paths, respectively, and are also added to $D$.
For each $a \in \{0,\dots,n\}$, set $\wt(z^i_a) = \wt(x^{i,j}_a) = \wt(y^{i,j}_a) = \ww +1$, that is, these vertices are undeletable.
Further set $\wt(\hat{z}^i_a) = M$, %
 $\wt(\hat{x}^{i,j}_a) =M a$ and
$\wt(\hat{y}^{i,j}_a) =M(n+1-a)$. 
Observe that $\wt(\hat{x}^{i,j}_a) + \wt(\hat{y}^{i,j}_a)=M(n+1)$.

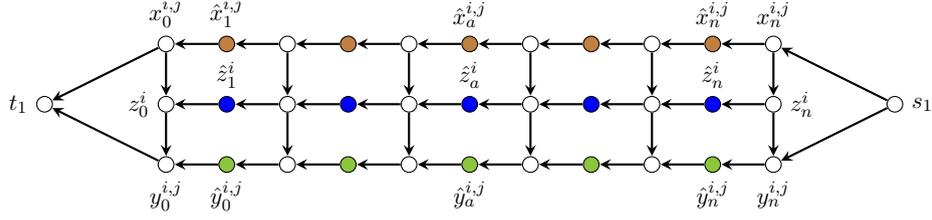
\begin{figure}
\centering
\begin{tikzpicture}[%
scale=.8,
transform shape,
myvertex/.style={circ,scale=2}
]

\node[myvertex,fill=none,label=above:{$x^{i,j}_0$}] (315) at (\ymin+2,\xmax) {};
\node[myvertex, fill=none] (515) at (\ymin+4,\xmax) {};
\node[myvertex, fill=none] (715) at (\ymin+6,\xmax) {};
\node[myvertex, fill=none] (915) at (\ymin+8,\xmax) {};
\node[myvertex, fill=none] (1115) at (\ymin+10,\xmax) {};
\node[myvertex, fill=none,label=above:{$x^{i,j}_n$}] (1315) at (\ymin+12,\xmax) {};

\node[myvertex, fill=brown,label=above:{$\hat{x}^{i,j}_1$}] (415) at (\ymin+3,\xmax) {};
\node[myvertex, fill=brown] (615) at (\ymin+5,\xmax) {};
\node[myvertex, fill=brown,label=above:{$\hat{x}^{i,j}_a$}] (815) at (\ymin+7,\xmax) {};
\node[myvertex, fill=brown] (1015) at (\ymin+9,\xmax) {};
\node[myvertex, fill=brown,label=above:{$\hat{x}^{i,j}_n$}] (1215) at (\ymin+11,\xmax) {};

\foreach \i in {3,...,12}
{
        		{\pgfmathtruncatemacro{\y}{\i+1}
        		\pgfmathtruncatemacro{\z}{\xmax}
                 \draw[<-,>=stealth,thick] (\i\z) -- (\y\z);}
}

\node[myvertex,fill=none,label=left:{$z^{i}_0$}] (314) at (\ymin+2,\xmax-1) {};
\node[myvertex, fill=none] (514) at (\ymin+4,\xmax-1) {};
\node[myvertex, fill=none] (714) at (\ymin+6,\xmax-1) {};
\node[myvertex, fill=none] (914) at (\ymin+8,\xmax-1) {};
\node[myvertex, fill=none] (1114) at (\ymin+10,\xmax-1) {};
\node[myvertex, fill=none,label=right:{$z^{i}_n$}] (1314) at (\ymin+12,\xmax-1) {};

\node[myvertex, fill=blue, label=above:{$\hat{z}^{i}_1$}] (414) at (\ymin+3,\xmax-1) {};
\node[myvertex, fill=blue] (614) at (\ymin+5,\xmax-1) {};
\node[myvertex, fill=blue,label=above:{$\hat{z}^{i}_a$}] (814) at (\ymin+7,\xmax-1) {};
\node[myvertex, fill=blue] (1014) at (\ymin+9,\xmax-1) {};
\node[myvertex, fill=blue,label=above:{$\hat{z}^i_n$}] (1214) at (\ymin+11,\xmax-1) {};

\foreach \i in {3,...,12}
{
        		{\pgfmathtruncatemacro{\y}{\i+1}
        		\pgfmathtruncatemacro{\z}{\xmax-1}
                 \draw[<-,>=stealth,thick] (\i\z) -- (\y\z);}
}

\node[myvertex, fill=none,label=below:{$y^{i,j}_0$}] (313)  at (\ymin+2,\xmax-2) {};
\node[myvertex, fill=none] (513)  at (\ymin+4,\xmax-2) {};
\node[myvertex, fill=none] (713) at (\ymin+6,\xmax-2) {};
\node[myvertex, fill=none] (913) at (\ymin+8,\xmax-2) {};
\node[myvertex, fill=none] (1113) at (\ymin+10,\xmax-2) {};
\node[myvertex, fill=none,label=below:{$y^{i,j}_n$}] (1313) at (\ymin+12,\xmax-2) {};

\node[myvertex, fill=LimeGreen,label=below:{$\hat{y}^{i,j}_0$}] (413) at (\ymin+3,\xmax-2) {};
\node[myvertex, fill=LimeGreen] (613) at (\ymin+5,\xmax-2) {};
\node[myvertex, fill=LimeGreen,label=below:{$\hat{y}^{i,j}_a$}] (813) at (\ymin+7,\xmax-2) {};
\node[myvertex, fill=LimeGreen] (1013) at (\ymin+9,\xmax-2) {};
\node[myvertex, fill=LimeGreen,label=below:{$\hat{y}^{i,j}_n$}] (1213) at (\ymin+11,\xmax-2) {};

\foreach \i in {3,...,12}
{
        		{\pgfmathtruncatemacro{\y}{\i+1}
        		\pgfmathtruncatemacro{\z}{\xmax-2}
                 \draw[<-,>=stealth,thick] (\i\z) -- (\y\z);}
}

\draw[->,>=stealth,thick] (315) -- (314);
\draw[->,>=stealth,thick] (515) -- (514);
\draw[->,>=stealth,thick] (715) -- (714);
\draw[->,>=stealth,thick] (915) -- (914);
\draw[->,>=stealth,thick] (1115) -- (1114);
\draw[->,>=stealth,thick] (1315) -- (1314);

\draw[->,>=stealth,thick] (314) -- (313);
\draw[->,>=stealth,thick] (514) -- (513);
\draw[->,>=stealth,thick] (714) -- (713);
\draw[->,>=stealth,thick] (914) -- (913);
\draw[->,>=stealth,thick] (1114) -- (1113);
\draw[->,>=stealth,thick] (1314) -- (1313);
         
\node[myvertex,fill=none,label=left:{ $t_1$}] (114) at (1,14) {}; %
\node[myvertex,fill=none,label=right:{$s_1$}] (1414) at (15,14) {}; %

\draw[->,>=stealth,thick] (1414) -- (1315); %
\draw[->,>=stealth,thick] (1414) -- (1313); %
\draw[->,>=stealth,thick] (315) -- (114); %
\draw[->,>=stealth,thick] (313) -- (114);%

\end{tikzpicture}
\caption{The paths $X^{i,j}$, $Z_i$ and $Y^{i,j}$ are shown.
	If the index of a vertex is $a$, then its weight is $M$ if it the vertex is blue, $Ma$ if the vertex is brown, $M(n+1-a)$ if the vertex is green and $\ww+1$ otherwise.
	This gadget together with the definition of the target weight $\ww$ of the solution, ensures that a solution picks exactly one vertex from each of the three paths.
	Moreover, the three vertices chosen from these three paths have the same index.}
\label{fig:synchronization-hardness}
\end{figure}

Next, we describe the gadget that synchronizes the vertices that a solution for \wttwodmc\ picks from the $\XX$-, $\YY$- and $\ZZ$-paths.
For each ordered pair $(i,j) \in V(H) \times V(H)$ such that $\{i,j\} \in E(H)$, and for every $a \in \{0,1, \ldots, n\}$, we add the edges $(x^{i,j}_a,z^i_a)$ and $(z^i_a, y^{i,j}_a)$  %
to~$D$ (see \cref{fig:synchronization-hardness} for an illustration). 
Further we add two terminal-pairs $(s_1,t_1)$ and $(s_2,t_2)$ together with the following incident edges.
For each $\{i,j\} \in E(H)$, such that $i < j$, the edges $(s_1,x^{i,j}_n), (x^{i,j}_0,t_1), (s_1,y^{i,j}_n), (y^{i,j}_0,t_1)$ are added to $D$.
Further the edges $(s_2,x^{j,i}_n), \allowbreak (x^{j,i}_0,t_2), \allowbreak (s_2,y^{j,i}_n), \allowbreak (y^{j,i}_0,t_2)$ are added to $D$.
The edges described here ensure that 
for each $(i,j)\in V(H) \times V(H)$ such that $(i,j) \in E(H)$, there exists an $a \in \{0,1,\ldots,n\}$ such that the solution for \wttwodmc\ picks $\hat{z}^i_a,\hat{x}^{i,j}_a$, and $\hat{y}^{i,j}_a$

The next gadget ensures that vertices of the \wttwodmc\ solution in the paths described above are the map of a \emph{valid homomorphism}.
For every edge $\{i,j\} \in E(H)$ such that $i<j$, we add a grid $P^{i,j}$ on the vertex set $\{p^{i,j}_{a,b}\mid a,b \in \{1, \ldots, n\} \}$ to $D$.
The edge set contains the \emph{column edges} $\{(p^{i,j}_{a,b},p^{i,j}_{a+1,b}) \mid a \in \{1, \ldots, n-1\}, b \in \{1, \ldots, n\}\}$, and the \emph{row edges} $\{(p^{i,j}_{a,b},p^{i,j}_{a,b+1}) \mid a \in \{1, \ldots, n\}, b \in \{1, \ldots, n-1\}\}$.
The weight function of the vertices of this grid is defined as follows: $\wt(p^{i,j}_{a,b}) =1$ if $(v^i_a,v^i_b) \in E(G)$, and $\wt(p^{i,j}_{a,b}) = \ww$ otherwise.

Further we add edges to $D$ connecting the paths $X^{i,j},Y^{i,j}, X^{j,i}$ and $Y^{j,i}$ to $P^{i,j}$.
For every $a \in \{1,\ldots,n\}$, the following edges are added:
$(x^{i,j}_a, p^{i,j}_{a,1}), \allowbreak  (p^{i,j}_{a,n}, y^{i,j}_{a-1}), \allowbreak (x^{j,i}_a, p^{i,j}_{1,a})$, and $(p^{i,j}_{n,a}, y^{j,i}_{a-1})$.
This finishes the construction of $D$.
See \cref{fig:full-construction} for an illustration of the construction.

We next show that $(G,H)$ is a \yes-instance of \psismall\ if and only if $(D,k',(s_i,t_i)_{i \in [2]}, \wt,\ww)$ is a \yes-instance of \wttwodmc.

\paragraph*{Forward direction.} Let $\phi: V(H) \to \{1, \ldots, n\}$ be a solution to the \psismall-instance $(G,H)$, that is for every $\{i,j\} \in E(H)$, we have $(v^i_{\phi(i)},v^j_{\phi(j)}) \in E(G)$. 
We choose $S \coloneqq \{\hat{x}^{i,j}_{\phi(i)}, \hat{y}^{i,j}_{\phi(i)}: (i,j) \in V(H) \times V(H) \text{ and } \{i,j\} \in E(H)\} \cup \{\hat{z}^i_{\phi(i)} : i \in V(H)\} \cup \{p^{i,j}_{\phi(i),\phi(j)} : (i,j) \in V(H) \times V(H) \text{ and }\{i,j\} \in E(H), i <j\}$ 
and claim that it is a solution to the constructed \wttwodmc-instance $(D,k',(s_i,t_i)_{i \in [2]}, \wt,\ww)$.
Observe that the set $S$ contains a vertex from $Z^i$, $X^{i,j}$, $Y^{i,j}$ and the grid $P^{i,j}$ for every $(i,j) \in V(H) \times V(H)$ with $\{i,j\} \in E(H)$.
Further observe that $|S| = 5k+h$ and $\wt(S) = M (2k (n+1) +h) +k$ since the weight of the vertices on the $\XX$- and $\YY$-paths in $S$ is $2kM(n+1)$, the weight of the vertices of $\ZZ$-paths in $S$ is $Mh$ and the weight of the grid vertices in $S$ is $k$. %

We claim that $D-S$ has no $s_1t_1$-path and no $s_2t_2$-path.
Here, we show that $D-S$ has no $s_1t_1$-path.
That there is also no $s_2t_2$-path follows by symmetric arguments.
Observe from the construction that every $(s_1,t_1)$ path contains $x^{i,j}_n$ or $y^{i,j}_n$,
for some $i,j$, as its first internal vertex.
There are four kinds of $s_1t_1$-paths in $D$.
The first one traverses $X^{i,j}$ fully until $x^{i,j}_0$ (which is an in-neighbour of $t_1$).
The second one traverses $Y^{i,j}$ fully.
Both, paths of the first and the second kind, are hit by $S$ as $S$ contains a vertex from $X^{i,j}$ and a vertex from $Y^{i,j}$.

The third kind of $s_1t_1$-path traverses some subpath of $X^{i,j}$, say until the vertex $x^{i,j}_a$, 
jumps to $Z^i$ at vertex $z^i_a$,
traverses $Z^i$ until say $z^i_b$ where $b \leq a$, jumps to $Y^{i,j}$ at vertex $y^{i,j}_b$, and then traverses $Y^{i,j}$ until $y^{i,j}_0$.  
Since $\hat{x}^{i,j}_{\phi(i)}, \hat{y}^{i,j}_{\phi(i)}, \hat{z}^i_{\phi(i)} \in S$, if such a path exists then $a \geq  \phi(i)$ and $b \leq \phi(i)$, thus the $y^{i,j}_by^{i,j}_0$-subpath of $Y^{i,j}$ traversed by such an $s_1,t_1$-path contains a vertex of $S$, namely $\hat{y}^{i,j}_{\phi(i)}$.

The last kind of $s_1,t_1$-path passes through the grid $P_{i,j}$.
Such a path traverses $X^{i,j}$ until $x^{i,j}_a$,
 jumps to the first column of the grid at the vertex $p^{i,j}_{a,1}$,
 traverses the grid $P^{i,j}$ to get to a vertex $p^{i,j}_{b,n}$ %
 (there are potentially many ways to reach this vertex inside the grid), for some $b \geq a$, 
 then jumps to $Y^{i,j}$ at the vertex $y^{i,j}_{b-1}$, and then traverses $Y^{i,j}$ until $y^{i,j}_0$. 
Since $\hat{x}^{i,j}_{\phi(i)} \in S$, if $a < \phi(i)$, the path is hit by $S$.
Since $\hat{y}^{i,j}_{\phi(i)} \in S$, if $\phi(i) < b$, the path is hit by $S$.
 Otherwise $a=b=\phi(i)$.
 In this case the $s_1t_1$-path under consideration uses exactly the vertices of the $p^{i,j}_{\phi(i),i}p^{i,j}_{\phi(i),n}$-subpath of the grid.
 Since $p^{i,j}_{\phi(i),\phi(j)} \in S$, this kind of $s_1t_1$-path is again hit by $S$.
 
 Thus, $S$ is a solution of the \wttwodmc-instance $(D,k',(s_i,t_i)_{i \in [2]}, \wt,\ww)$.

\paragraph*{Reverse direction.} Let $S \subseteq V(D)$ be a  solution for the \wttwodmc-instance $(D,k', \allowbreak (s_i,t_i)_{i \in [2]},\allowbreak \wt, \allowbreak \ww,)$.
We construct a function $\phi: V(H) \to \{1, \ldots, n\}$ such that for each $\{i,j\} \in E(H)$, $(v^i_{\phi(i)},v^j_{\phi(j)}) \in E(G)$.

Note that, since for every path $X^{i,j}$ and for every path $Y^{i,j}$, there is an $s_1t_1$-path or an $s_2t_2$-path containing only this path, $S$ contains at least one vertex from each $X^{i,j}$ and each $Y^{i,j}$.
Also, since the vertices $x^{i,j}_n$ and $y^{i,j}_0$ are undeletable, the set $S$ contains a deletable vertex of $Z^i$, as otherwise there is an $s_1t_1$-path or an $s_2t_2$-path starting from $x^{i,j}_n$ and then jumping to the path $Z^i$ at the vertex $z^i_n$, traversing the path $Z^i$ until $z^i_0$, then jumping to $y^{i,j}_0$ (which is an in-neighbour of $t_1$).

Fix $(i,j) \in V(H) \times V(H)$ such that $\{i,j\} \in E(H)$ and $i <j$.
The case when $j >i$ is symmetric.
We eventually show that $S$ intersects $X^{i,j}$, $Y^{i,j}$ and $Z^i$, each at exactly one vertex. Towards this, let $a \in \{1, \ldots,  n\}$ be the largest index such that $\hat{x}^{i,j}_a \in S$, and let $b \in \{1, \ldots, n\}$ be the smallest index such that $\hat{y}^{i,j}_b \in S$.
We first claim that $a \geq b$.
Suppose $a < b$, then consider the following $s_1t_1$-path in $G-S$.
The path first visits $x^{i,j}_n$, traverses $X^{i,j}$ until reaching $x^{i,j}_a$ (note that none of the vertices on this subpath belong to $S$ so far, either due to the choice of $a$ or because they are undeletable), jumps to $z^{i}_a$ (note that this vertex is undeletable), and then jumps to $y^{i,j}_a$ and finally traverses $Y^{i,j}$ until reaching $y^{i,j}_0$ (again, none of the vertices of this subpath belongs to $S$ either due to the choice of $b$ or because they are undeletable).
This is a contradiction to the fact that $S$ is a solution.

From the above paragraph, $\wt(S) \cap (X^{i,j} \cup Y^{i,j}) \geq M \cdot a + M\cdot (n+1 -b) \geq M \cdot (n+1)$.
Note that if $a >b$, then $\wt(S) \cap (X^{i,j} \cup Y^{i,j}) > M \cdot (n+1) +M$.
Thus, if $S$ picks exactly one vertex from each $\XX$-, $\YY$- and $\ZZ$-paths, then the weight of $S$ restricted to the vertices in these paths is $4k M \cdot (n+1) + M h$.
Since the total weight of $S$ is $4k M \cdot (n+1) + M h +k$, and the weight of each vertex on the paths $X^{i,j}$, $Y^{i,j}$ or $Z^i$ is at least $M = k +1$, we conclude that $S$ indeed picks exactly one vertex from each of the $\XX$-, $\YY$- and $\ZZ$-paths.
Moreover, $a=b$. That is, 
$S \cap X^{i,j} = \hat{x}^{i,j}_{a}$ and $S \cap Y^{i,j}=\hat{y}^{i,j}_{a}$.

We now show that $S \cap Z^i = \hat{z}^i_{a}$.
Suppose not, then consider the following $s_1t_1$-path in $G-S$.
It first visits $x^{i,j}_n$, traverses $X^{i,j}$ until reaching $x^{i,j}_{a}$, jumps to $Z^i$ at the vertex $z^i_{a}$, traverses $Z^i$ until reaching $z^i_{a-1}$ (note that the only deletable vertex on this subpath is $\hat{z}^i_{a}$, which by our assumption is not in $S$), jumps to the path $Y^{i,j}$ at the vertex $y^{i,j}_{a-1}$, and then traverses $Y^{i,j}$ until reaching $y^{i,j}_0$.
Again, this yields a contradiction to $S$ being a solution.

So far, we have concluded that $S$ intersects each of the paths $X^{i,j}$, $Y^{i,j}$ and $Z^i$ in exactly one vertex.
In fact, there exists $a \in \{1, \ldots, n\}$ such that 
$S \cap X^{i,j} = \hat{x}^{i,j}_{a}$, $S \cap Y^{i,j}=\hat{y}^{i,j}_{a}$ and $S \cap Z^i = \hat{z}^i_{a}$. 
Hence, to construct $\phi$, define $\phi(i)=a$.

Fix $(i,j) \in V(H) \times V(H)$ and $i <j$.
Consider the $s_1t_1$-path, we call it $P_1$, that first visits $x^{i,j}_n$,then traverses $X^{i,j}$ until $x^{i,j}_{\phi(i)}$, jumps to the first column of the grid $P^{i,j}$ at vertex $p^{i,j}_{\phi(i),1}$, traverses the path in the $\phi(i)$-th row of $P^{i,j}$, that is the subpath from $(p^{i,j}_{\phi(i),1}$ to $p^{i,j}_{\phi(i),n})$, jumps to the path $Y^{i,j}$ at the vertex $y^{i,j}_{\phi(i)}$ and then traverses $Y^{i,j}$ until reaching $y^{i,j}_0$. 
Similarly, consider the $s_2t_2$-path, we call it $P_2$, that first visits $x^{j,i}_n$, traverses the path $X^{j,i}$ until reaching $x^{j,i}_{\phi(j)}$, jumps to the first row of the grid $P^{i,j}$ at the vertex %
$p^{i,j}_{1, \phi(j)}$
traverses the path in the $\phi(j)$-th column of $P^{i,j}$, that is the subpath from $p^{i,j}_{1, \phi(j)}$ to $p^{i,j}_{n, \phi(j)}$, jumps to the path $Y^{j,i}$ at $y^{j,i}_{\phi(j)-1}$ and then traverses $Y^{j,i}$ until reaching $y^{j,i}_0$.
Observe that the paths $P_1$ and $P_2$ do not intersect any of the vertices of $S$ that are on some $\XX$-, $\YY$- or $\ZZ$-path.
Also, the paths $P_1$ and $P_2$ intersect at exactly one vertex of the grid $P_{i,j}$, which is $p^{i,j}_{\phi(i), \phi(j)}$.
Since $S$ has at most $k$ vertices that are in none of the $\XX$-, $\YY$- or $\ZZ$-path, $S$ contains at most one vertex from each $P^{i,j}$.
Thus, $p^{i,j}_{\phi(i), \phi(j)} \in S$.
As $\wt(S) \leq W$, we can deduce from the construction of $D$ that $(v^i_{\phi(i)},v^j_{\phi(j)}) \in E(G)$. 
\end{proof}

\bibliographystyle{plainurl}
\bibliography{references}

\appendix
\section{Hardness of arbitrary CSPs with permutation constraints}\label{app:edge-choice}

In this section we explain in bigger detail the claim from the introduction
that without any control on the complexity of permutation constraints, the obtained CSP instance is W[1]-hard when parameterized by the number of variables and permutation constraints.

Recall the classic W[1]-hard \textsc{Multicolored Clique} problem: the input consists of an integer $k$ being the parameter, a graph $G$, and a partition of $V(G)$ into $k$
independent sets $V_1,V_2,\ldots,V_k$; the goal is to find a $k$-clique in $G$, which necessarily needs to contain exactly one vertex from each set $V_i$. 
By a padding argument, we can assume $|V_1| = |V_2| = \ldots = |V_k| = n$. For each $1 \leq i \leq k$, enumerate the vertices of $V_i$ as $v_{i,0}, v_{i,1}, \ldots, v_{i,n-1}$.

For every $1 \leq i \leq k$, create a variable $x_i$ with domain $\{0,1,\ldots,n-1\}$; setting the value $x_i = a$ corresponds to choosing a vertex $v_{i,a}$ to our clique.
For every $1 \leq i,j \leq k$, $i \neq j$, create a variable $y_{i,j}$ with domain $\{0,1,\ldots,n-1\} \times \{0,1,\ldots,n-1\}$ ordered lexicographically;
setting the value $y_{i,j} = (a,b)$ corresponds to choosing vertices $v_{i,a}$ and $v_{j,b}$ to our clique.
Bind the variables using the following constraints:
\begin{itemize}
\item For every $1 \leq i,j \leq k$, $i \neq j$, we express a constraint 
\[ \bigwedge_{0 \leq a,b < n} y_{i,j} = (a,b) \Longrightarrow x_i=a \]
as the following conjunction:
\[ \left(\bigwedge_{0 \leq a < n-1} (x_i \leq a) \vee (y_{i,j} \geq (a+1,0))\right) \wedge \left(\bigwedge_{1 < a < n} (y_{i,j} \leq (a-1,n-1)) \vee (x_i \geq a)\right). \]
\item For every $1 \leq i < j \leq k$, we introduce a permutation constraint
\[ y_{j,i} = \pi(y_{i,j}), \]
where $\pi : \{0,1,\ldots,n-1\} \times \{0,1,\ldots,n-1\} \to \{0,1,\ldots,n-1\} \times \{0,1,\ldots,n-1\} $ is defined as $\pi(a,b) = (b,a)$.
\end{itemize}
Finally, for every $1 \leq i < j \leq k$, we restrict the permutation constraint between $y_{i,j}$ and $y_{j,i}$ to only those values $y_{i,j} = (a,b)$ where
$v_{i,a} v_{j,b} \in E(G)$. 

The above is an encoding of the input \textsc{Multicolored Clique} instance a CSP instance with $k + k(k-1) = k^2$ variables and $\binom{k}{2}$ permutation constraints. 
Hence, we cannot hope for an FPT algorithm for our CSP instances, with only the number of variables and permutation constraints as parameters; we need some structural parameter
of the obtained permutation constraints. Note that the permutation $\pi$ used above has a grid minor of size $n$ in its permutation matrix.

\section{Twin-width to grid-rank}\label{app:tw:to:gr}

\begin{theorem}[Marcus and Tardos~\cite{MarcusTardos}]
	\label{thm:MarcusTardos}
	For every integer $k$, there is some $c_k \le 2^{\Oh(k \log k)}$ 
	such that every $n\times m$ 0-1 matrix $\bfA$
	with at least $c_k\max(n, m)$ 1-entries has a $k$-grid minor.
	Moreover, if it exists, such a grid minor can be found in $2^{\Oh(k \log k)}n^{\Oh(1)}$ time.
\end{theorem}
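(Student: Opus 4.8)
The plan is to prove Theorem~\ref{thm:MarcusTardos} via the Marcus--Tardos block recursion, recalled in a form that directly yields grid minors, and then to observe that the recursion is shallow and constructive enough to give the stated running time. Fix $t\coloneqq k^{2}$ and, after padding $\bfA$ with all-zero rows and columns so that $t\mid n$ and $t\mid m$ (which increases $\max(n,m)$ by less than $t$ and is absorbed into $c_{k}$), view $\bfA$ as a grid of $t\times t$ \emph{blocks}. Call a block \emph{wide} if its $1$-entries meet at least $k$ distinct columns, \emph{tall} if they meet at least $k$ distinct rows, and \emph{small} otherwise; a small block has all its $1$s inside $\le k-1$ rows and $\le k-1$ columns and hence at most $(k-1)^{2}$ of them, whereas a wide or tall block has at most $t^{2}$.

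First I would establish the extremal statement by contradiction: assume $\bfA$ has at least $c_{k}\max(n,m)$ $1$s but no $k$-grid minor. The key local step is that a horizontal strip of blocks (the $m/t$ blocks on a fixed set of $t$ rows) contains at most $(k-1)\binom{t}{k}$ wide blocks, and symmetrically for tall blocks in a vertical strip: otherwise, pigeonholing over the $\binom{t}{k}$ possible $k$-subsets of within-block column coordinates hit by the $1$s of the wide blocks, one extracts $k$ wide blocks of the strip together with $k$ common within-block columns from which --- with the care discussed below --- one reads off one $1$ per cell of the induced $k$-division, i.e.\ a $k$-grid minor. Consequently at most $(k-1)\binom{t}{k}(n/t+m/t)$ blocks are wide or tall, contributing at most $t(k-1)\binom{t}{k}(n+m)$ $1$s, while the rest lie in small blocks and so number at most $(k-1)^{2}$ times the number of nonzero blocks. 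The contracted $(n/t)\times(m/t)$ matrix $\bfA'$ (with a $1$ exactly where a block is nonzero) is itself $k$-grid-minor-free, since any $k$-grid minor of $\bfA'$ lifts to one of $\bfA$: each lifted cell is a union of $t\times t$ blocks, one of which is nonzero. Thus, writing $g(n,m)$ for the largest number of $1$s in a $k$-grid-minor-free $n\times m$ matrix,
\[
  g(n,m)\ \le\ t^{2}(k-1)\binom{t}{k}\Bigl(\tfrac{n}{t}+\tfrac{m}{t}\Bigr)\ +\ (k-1)^{2}\,g\!\Bigl(\tfrac{n}{t},\tfrac{m}{t}\Bigr),
\]
and the ansatz $g(n,m)\le c_{k}\max(n,m)$ closes as soon as $c_{k}\bigl(1-(k-1)^{2}/t\bigr)\ge 2t(k-1)\binom{t}{k}$; since $\binom{k^{2}}{k}\le(ek)^{k}$ this holds for some $c_{k}=2^{\Oh(k\log k)}$, contradicting the hypothesis.

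For the ``moreover'' part I would unroll this into a recursive algorithm maintaining the invariant that the current matrix has at least $c_{k}\max$ $1$s. At the current level, search for a horizontal strip with more than $(k-1)\binom{t}{k}$ wide blocks or a vertical strip with more than $(k-1)\binom{t}{k}$ tall blocks; the counting above guarantees that if neither exists then the contracted matrix $\bfA'$ still satisfies the density invariant, so one recurses on $\bfA'$, and if one does exist one extracts the $k$-grid minor from it by trying all $\binom{t}{k}=2^{\Oh(k\log k)}$ candidate $k$-subsets of within-block coordinates, in $2^{\Oh(k\log k)}n^{\Oh(1)}$ time. The base case is a matrix of size bounded in $k$, handled by brute force. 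Each level divides both dimensions by $t=k^{2}$, so there are $\Oh(\log n)$ levels and the total running time is $2^{\Oh(k\log k)}n^{\Oh(1)}$.

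The step I expect to be the real obstacle is extracting a \emph{grid} minor (every one of the $k^{2}$ cells nonempty) rather than a mere $k\times k$ permutation submatrix: ``wide'' only bounds the number of \emph{columns} met by a block's $1$s, so among the many wide blocks of a strip one must select $k$ of them and $k$ common within-block columns whose $1$s also hit all $k$ row-classes of the induced row division, which forces the pigeonhole to be done more carefully (and is why one cannot simply quote the permutation-avoidance form of Marcus--Tardos for a ``grid-like'' pattern of size $k^{2}$, which would only give $c_{k}=2^{\Oh(k^{2})}$). The remaining work --- re-padding at each recursion level, translating block coordinates correctly when lifting, and checking the base case --- is routine bookkeeping.
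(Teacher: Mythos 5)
The paper does not prove this statement itself (it is quoted from Marcus and Tardos), and your outline is indeed the standard Marcus--Tardos block recursion, adapted to grid minors, with the right constants and a correct recursion/contraction analysis. However, your key local step is stated with the strip/block pairing reversed, and in that form it is false: a horizontal strip (blocks sharing the same $t$ rows) can contain arbitrarily many \emph{wide} blocks without the matrix having even a $2$-grid minor --- take a matrix whose $1$s all lie in a single row, with that row meeting $\ge k$ columns of every block of its strip; any division of the rows into $k\ge 2$ intervals then has intervals containing no $1$ at all. This is exactly the point you flag as ``the real obstacle'' (the chosen $1$s need not hit all $k$ row-classes), but it is not an issue of doing the pigeonhole ``more carefully'': no amount of care rescues the claim as you paired it, so the gap is genuine as written.

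The fix is simply to swap the pairing, which is how the argument is actually run: bound the number of \emph{wide} blocks in each \emph{vertical} strip and the number of \emph{tall} blocks in each \emph{horizontal} strip by $(k-1)\binom{t}{k}$. If a vertical strip had more wide blocks, pigeonhole over the $\binom{t}{k}$ possible $k$-subsets of the $t$ columns shared by the strip yields $k$ blocks $B_1,\dots,B_k$ (top to bottom) that each contain a $1$ in each of $k$ common columns $c_1<\dots<c_k$; then the row intervals are chosen so that the $i$-th one contains the rows of $B_i$, and the column intervals so that the $j$-th one contains $c_j$, and every cell of this division is nonzero --- the $k$ selected blocks themselves supply the row division, so no further control over rows is needed (in fact the grid-minor extraction is \emph{easier} than the permutation embedding in the original Marcus--Tardos proof, not harder). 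With this corrected claim, your count of wide-or-tall blocks, the $(k-1)^2$ bound for the remaining nonzero blocks, the lifting of a grid minor from the contracted matrix, the choice $t=k^2$ giving $c_k=2^{\Oh(k\log k)}$, and the algorithmic unrolling (search strips, extract by bucketing the $\binom{t}{k}$ subsets, else recurse on the contraction while maintaining the density invariant, $\Oh(\log n)$ levels) all go through as you describe and match \cref{thm:MarcusTardos}.
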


\begin{proposition}\label{prop:int:sequence}
	Let $n \geq 2$ and $a_1, \ldots, a_n$ be non-negative integers,
	and let $s = \sum_{i=1}^n a_i/n$.
	Then, there exists an $i \in [n-1]$ such that $a_i + a_{i+1} \le 4s-1$.
\end{proposition}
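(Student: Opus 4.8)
The plan is to prove this by a simple averaging (pigeonhole) argument over the $n-1$ consecutive pairs $a_i+a_{i+1}$, combined with a short computation that uses $n\ge 2$. The starting point is the telescoping identity
\[ \sum_{i=1}^{n-1}\bigl(a_i+a_{i+1}\bigr) \;=\; 2\sum_{i=1}^{n} a_i \;-\; a_1 \;-\; a_n \;=\; 2ns - a_1 - a_n \;\le\; 2ns, \]
which just records that each $a_i$ with $2\le i\le n-1$ is counted twice among the pairs while $a_1$ and $a_n$ are counted once. Hence the minimum of the $n-1$ values $a_i+a_{i+1}$ is at most their average, i.e. at most $\tfrac{2ns-a_1-a_n}{n-1}\le\tfrac{2ns}{n-1}$, and since $\tfrac{2n}{n-1}\le 4$ precisely when $n\ge 2$, some consecutive pair has sum at most $4s$.

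To upgrade the bound from $4s$ to $4s-1$ I would phrase the argument as a proof by contradiction: assume $a_i+a_{i+1}>4s-1$ for every $i\in[n-1]$, sum over $i$, and plug into the identity above to get $(n-1)(4s-1) < 2ns-a_1-a_n$, equivalently $a_1+a_n < (n-1)-(2n-4)s$. Since the left-hand side is a \emph{non-negative integer}, this inequality together with $n\ge 2$ severely constrains the remaining configurations; one then finishes by handling those small configurations, and in particular the degenerate endpoint $n=2$, directly. For $n=2$ note that $[n-1]=\{1\}$ and the claim reads simply $a_1+a_2 = 2s \le 4s-1$, which is settled by the same kind of elementary inequality (using, as in the regime where the proposition is invoked, that $s$ is bounded away from $0$).

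I expect the only real obstacle to be the bookkeeping around the ``$-1$'': the clean averaging argument naturally produces a pair with sum at most $4s$, and squeezing out the extra unit forces one to combine carefully (a) the fact that $a_1$ and $a_n$ are \emph{subtracted} rather than double-counted in the telescoping sum, (b) the integrality of $a_i+a_{i+1}$, and (c) the special case $n=2$, which must be isolated because averaging over the single pair $n-1=1$ leaves no slack. Everything else is a one-line computation.
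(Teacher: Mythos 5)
Your first computation is correct and complete: summing the $n-1$ consecutive pair-sums telescopes to $2ns - a_1 - a_n \le 2ns$, so the minimum pair-sum is at most $\tfrac{2ns}{n-1} \le 4s$ for $n \ge 2$. The problem is the upgrade to $4s-1$, which is exactly the part you leave as a sketch. After deriving $a_1 + a_n < (n-1) - (2n-4)s$ you propose to ``finish by handling those small configurations,'' and for $n=2$ you appeal to the fact that ``$s$ is bounded away from $0$ in the regime where the proposition is invoked.'' That assumption is not part of the statement, and the postponed case analysis cannot in fact be closed: the ``$-1$'' version is false for small sequences. For $n=3$ and $(a_1,a_2,a_3)=(0,1,0)$ we have $s=1/3$, hence $4s-1=1/3$, while both consecutive sums equal $1$; the all-zero sequence (any $n$) fails even more trivially, and your own $n=2$ computation shows the claim needs $a_1+a_2\ge 1$. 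So the residue you defer is not bookkeeping around the ``$-1$'' --- it is precisely where the claim as stated breaks, and no amount of integrality juggling will recover it without an additional hypothesis (a lower bound on $s$), or without settling for the weaker conclusion $a_i+a_{i+1} < 4s$ (equivalently $\le \lceil 4s\rceil - 1$).

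For comparison, the paper's proof does not really do better on this point: it negates the conclusion as ``$a_i+a_{i+1}\ge 4s$ for all $i$'' (which is not the negation of ``$\le 4s-1$'' when $4s\notin\mathbb{Z}$), and its final strict inequalities require $s>0$; what it actually establishes is that some consecutive pair has sum strictly below $4s$ whenever $s>0$. Your telescoping/averaging argument reaches that same conclusion by a different and arguably cleaner decomposition (each interior $a_i$ counted twice, versus the paper's disjoint pairing with an even/odd case split), and the weaker bound is all that is used downstream in \cref{prop:gridminor:or:contraction}, where one only needs two consecutive rows with at most $4c_k$ ones given average at most $c_k$. So your instinct that extracting the extra unit requires positivity/integrality input is sound; the genuine gap is that you present this unfinished (and, as stated, unfinishable) upgrade as routine and import an application-specific assumption into the proof of a standalone proposition.
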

\begin{proof}
	Suppose for a contradiction that for all $i \in [n-1]$, $a_i + a_{i+1} \ge 4s$.
	First suppose that $n$ is even. Then,
	\begin{align*}
		\sum_{i=1}^n a_i = a_1 + a_2 + \sum_{i = 3}^n a_i \ge 4s + \sum_{i=3}^n a_i \ge \cdots \ge \frac{n}{2}4s > ns,
	\end{align*}
	a contradiction.
	Next suppose that $n$ is odd which means that $n \ge 3$. In this case,
	\begin{align*}
		\sum_{i=1}^n a_i = a_1 + \sum_{i = 2}^n a_i \ge a_1 + \frac{n-1}{2}4s \ge 2(n-1)s = ns\left(2 - \frac{2}{n}\right) > ns,
	\end{align*}
	a contradiction since $2/n < 1$.
\end{proof}

\begin{proposition}\label{prop:gridminor:or:contraction}
	Let $\bfA$ be a 0-1 matrix, let $k$ be a positive integer, 
	and let $c_k$ be the constant from \cref{thm:MarcusTardos}.
	One can in $2^{\Oh(k \log k)}n^{\Oh(1)}$ time find either
	\begin{itemize}
		\item a $k$-grid minor in $\bfA$, or
		\item a $4c_k$-contraction sequence of $\bfA$, respecting the order of the rows and columns of $\bfA$.
	\end{itemize}
\end{proposition}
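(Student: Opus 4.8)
The plan is to construct the desired contraction sequence greedily, going from the finest division of $\bfA$ toward the coarsest one, while maintaining an order-respecting division $\calD=(\calD^R,\calD^C)$; we start from the division into singleton rows and singleton columns, which is precisely $\bfA$ seen as a trigraph-matrix and carries no error cells. At a generic step, write $r=\card{\calD^R}$ and $c=\card{\calD^C}$ and form the $r\times c$ 0-1 matrix $\bfB(\calD)$ whose $(i,j)$-entry is $1$ exactly when the $(i,j)$-cell of $\calD$ contains a $1$ of $\bfA$. First I would run the algorithm of \cref{thm:MarcusTardos} on $\bfB(\calD)$ with parameter $k$. If $\bfB(\calD)$ has at least $c_k\max(r,c)$ ones, we obtain a $k$-grid minor of $\bfB(\calD)$; since each row part (resp.\ column part) of $\calD$ is an interval of rows (resp.\ columns) of $\bfA$, each block of this grid minor is, inside $\bfA$, a block spanning an interval of rows and an interval of columns, and it contains a cell of $\calD$ holding a $1$, hence a $1$ of $\bfA$. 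Thus $\bfA$ has a $k$-grid minor, which we output; by \cref{thm:MarcusTardos} this costs $2^{\Oh(k\log k)}n^{\Oh(1)}$.

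Otherwise $\bfB(\calD)$ has fewer than $c_k\max(r,c)$ ones, and I would perform a single contraction of two consecutive parts on the larger side; say $r\ge c$, so that we contract two consecutive row parts (the case $c\ge r$ being symmetric, contracting columns). The average over $i\in[r]$ of the number of ones in row $i$ of $\bfB(\calD)$ is then below $c_k$, so by \cref{prop:int:sequence} there are consecutive row parts $R_i,R_{i+1}$ of $\calD^R$ such that rows $i$ and $i+1$ of $\bfB(\calD)$ together hold at most $4c_k-1$ ones; we contract $R_i$ and $R_{i+1}$. A cell of the resulting division lying in a column part $C_j$ can be an error cell only if the old $(i,j)$- or $(i+1,j)$-cell of $\calD$ contained a $1$, so the new merged row part has red degree at most the number of such columns, which is below $4c_k$; the red degrees of the other row parts do not change. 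Moreover $\card{\calD^R}$ drops by one, so the gap $\card{\card{\calD^R}-\card{\calD^C}}$ is non-increasing and becomes at most $1$ after finitely many steps. Since every contraction decreases $\card{\calD^R}+\card{\calD^C}$, after at most $n+m-2$ steps we reach the one-cell division, having produced an order-respecting contraction sequence; it then remains only to verify that no red degree ever exceeds $4c_k$.

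That verification is the delicate part and, I expect, the main obstacle. The merged part is handled above; the issue is the \emph{other} side: contracting $R_i,R_{i+1}$ can raise the red degree of a column part $C_j$, but only by $1$, and only when the old $(i,j)$- and $(i+1,j)$-cells were both constant with opposite values, which already forces $C_j$ to be one of the $\le 4c_k-1$ columns isolated above. So the plan is to always pick the consecutive pair on the larger side so that, beyond the weight bound of \cref{prop:int:sequence}, it also does not ``split'' any part whose red degree already equals $4c_k$, and to show such a pair exists unless $\bfB(\calD)$ is in fact dense. Here the two ways a pair can fail --- exceeding the weight bound, or causing an overflow --- are to be handled by different means: if every consecutive pair on a given side exceeded its weight bound then, by the contrapositive of \cref{prop:int:sequence}, $\bfB(\calD)$ would already have $c_k\max(r,c)$ ones; and the ``full'' parts at red degree $4c_k$ are comparatively scarce, since the total number of error cells is at most the number of ones of $\bfB(\calD)$, so an overflow obstruction cannot account for all pairs on both sides at once. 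Combining these (with a refinement of the averaging argument of \cref{prop:int:sequence} to locate a good pair among the candidates) is where essentially all the work lies; the surrounding loop runs $O(n+m)$ iterations, each dominated by one call to the algorithm of \cref{thm:MarcusTardos}, giving the claimed $2^{\Oh(k\log k)}n^{\Oh(1)}$ bound.
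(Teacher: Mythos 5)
Your construction follows the same greedy route as the paper's proof: maintain an order-respecting division, run Marcus--Tardos (\cref{thm:MarcusTardos}) on the cell-wise 0-1 pattern of the division (this is exactly the paper's ``treat red entries as $1$s''), lift a $k$-grid minor of that pattern to a $k$-grid minor of $\bfA$, and otherwise invoke \cref{prop:int:sequence} on the larger side to find two consecutive parts of combined weight at most $4c_k-1$ and contract them; as you observe, the merged part then has fewer than $4c_k$ error cells, and the running time analysis is the same. Up to that point your argument coincides with the paper's.

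The genuine gap is the step you yourself label as ``where essentially all the work lies'': verifying that the parts on the \emph{other} side never exceed $4c_k$ error cells over the whole sequence. Your observation that one contraction raises a column part's red degree by at most one, and only for columns hit by the merged pair, is correct but does not prevent accumulation: for instance, a column alternating $1,0,1,0,\dots$ placed next to an all-zero column makes every consecutive row pair light, yet repeatedly merging disjoint $(1,0)$ pairs creates a fresh error cell in that column at every step, driving its red degree to $\Theta(n)$ while the density test never fires. So a specific choice rule plus an invariant is genuinely needed, and your sketch (avoid pairs that would push a ``full'' part over $4c_k$, and argue such pairs exist by comparing the number of error cells with the number of ones) is not carried out; note also that a single full column whose remaining cells alternate between all-$1$ and all-$0$ can block a constant fraction of all consecutive row pairs without violating the density bound, and \cref{prop:int:sequence} only guarantees one light pair, so the scarcity counting as stated does not obviously combine with the weight bound. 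For comparison, the paper's own write-up uses the identical greedy scheme and disposes of this point in a single sentence (``no row or column ever exceeds $4c_k$ red entries''), whereas the standard arguments in the twin-width literature sidestep it by merging in balanced (dyadic) rounds and re-deriving the per-part error bound from Marcus--Tardos at each scale. You have therefore correctly isolated the one non-routine point of the proposition, but as submitted your proposal does not prove it.
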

\begin{proof}
	We do the following greedily.
	Assume the number of rows in $\bfA$ is at least the number of its columns, 
	otherwise we swap the roles of rows and columns.
	Let $o$ be the average number of 1-entries in each row of $\bfA$.
	If $o > c_k$, then by \cref{thm:MarcusTardos},
	we can find a $k$-grid minor of $\bfA$ in $2^{\Oh(k \log k)}n^{\Oh(1)}$ time and we are done.
	Otherwise, $o \le c_k$, which implies by \cref{prop:int:sequence} that
	$\bfA$ has two consecutive rows which together have at most $4c_k$ ones.
	We contract these two rows and repeat.
	In the following iterations, the error, or \emph{red} entries are treated like $1$s by the Marcus Tardos theorem.
	This is because such entries point to the existence of a $1$-entry 
	in the original matrix $\bfA$ which could be used to form a $k$-grid minor.
	Note that no row or column ever exceeds 
	$4c_k$ red entries if the algorithm succeeds to give a contraction sequence.
\end{proof}

\thmgminorgrank*
\begin{proof}
	If $\bfA$ does not have a $k$-grid minor, then by \cref{prop:gridminor:or:contraction} there is a 
	$4c_k$-contraction sequence of $\bfA$ that respects the order of the rows and columns of $\bfA$.
	In the words of~\cite{BonnetKTW22}, this means that $\bfA$ is $4c_k$-twin-ordered,
	and a theorem in~\cite{BonnetKTW22} asserts that $\bfA$ cannot have a $(8c_k + 2)$-mixed minor,
	which in turn implies that $\gridrank(\bfA) \le (8c_k+2) = 2^{\Oh(k \log k)}$.
\end{proof}

\end{document}